    \newtheorem{theorem}{Theorem}
    \newtheorem{corollary}{Corollary}
    \newtheorem{lemma}{Lemma}
    \newtheorem{definition}{Definition}
    \newtheorem{remark}{Remark}
    \newenvironment{proof}{\textit{Proof.}}{\hfill$\square$}
    \letcs\replicate{prg_replicate:nn}
\begin{document} 
    
    \title{Efficient explicit gate construction of block-encoding for Hamiltonians needed for simulating partial differential equations}
    
\author{Nikita Guseynov}
\email{guseynov.nm@gmail.com}
\affiliation{University of Michigan-Shanghai Jiao Tong University Joint Institute, Shanghai 200240, China.}

\author{Xiajie Huang}
\email{xj.huang@sjtu.edu.cn}
\affiliation{School of Mathematical Sciences, Shanghai Jiao Tong University, Shanghai 200240, China}
\affiliation{Shanghai Artificial Intelligence Laboratory, Shanghai, China}

\author{Nana Liu}
\email{nana.liu@quantumlah.org}
\affiliation{Institute of Natural Sciences, School of Mathematical Sciences, Shanghai Jiao Tong University, Shanghai 200240, China}
\affiliation{Ministry of Education Key Laboratory in Scientific and Engineering Computing, Shanghai Jiao Tong University, Shanghai 200240, China}
\affiliation{Shanghai Artificial Intelligence Laboratory, Shanghai, China}
\affiliation{University of Michigan-Shanghai Jiao Tong University Joint Institute, Shanghai 200240, China.}

    \begin{abstract}
    
    One of the most promising applications of quantum computers is solving partial differential equations (PDEs). By using the Schrödingerisation technique — which converts non-conservative PDEs into Schrödinger equations -- the problem can be reduced to Hamiltonian simulations. The particular class of Hamiltonians we consider is shown to be sufficient for simulating almost any linear PDE. In particular, these Hamiltonians consist of discretizations of polynomial products and sums of position and momentum operators. This paper addresses an important gap by efficiently loading these Hamiltonians into the quantum computer through block-encoding. The construction is explicit and efficient in terms of one- and two-qubit operations, forming a fundamental building block for constructing the unitary evolution operator for that class of Hamiltonians. The proposed algorithm demonstrates a squared logarithmic scaling with respect to the spatial partitioning size, offering a polynomial speedup over classical finite-difference methods in the context of spatial partitioning for PDE solving. Furthermore, the algorithm is extended to the multi-dimensional case, achieving an exponential acceleration with respect to the number of dimensions, alleviating the curse of dimensionality problem. This work provides an essential foundation for developing explicit and efficient quantum circuits for PDEs, Hamiltonian simulations, and ground state and thermal state preparation.
    
    \end{abstract}
    
    \maketitle
    
    \section{Introduction}
    
    Quantum computation \cite{FEYNMAN} is an emerging technology that has the potential to tackle certain computational problems intractable on classical devices. The most influential algorithms include Shor's factorization algorithm \cite{math_apply,Shor}, Grover's search algorithm \cite{grover,q_search}, simulation of quantum chemistry models \cite{aspuru2005simulated}, quantum algorithms based on random walks \cite{szegedy2004quantum}, and the quantum linear systems (HHL) algorithm \cite{harrow2009quantum}. Other important applications include quantum simulation for partial differential equations, which form the backbone of mathematical modeling for various physical phenomena and often prove intractable for classical computers as the dimensionality of the problem and the size of discretisation increase \cite{laughlin2000theory}. Quantum algorithms have the potential to improve scaling in both the dimension and the size of the discretisation used. Thus, these algorithms have the potential to improve diverse areas in physics \cite{costa2019quantum,gaitan2020finding}, engineering \cite{jin2023time,linden2022quantum,jin2022quantum, sato2024hamiltonian}, and finance \cite{stamatopoulos2020option,gonzalez2023efficient}. 
    
    Despite significant progress in quantum simulation and advancements in applying these techniques to partial differential equations, few works demonstrate explicit gate construction. This step is crucial for treating quantum computation as a viable future tool and for implementation on fault-tolerant quantum computers. Many algorithms rely on oracles—black boxes that are simply assumed to exist, with no further inquiry. However, measuring efficiency solely by the number of oracle calls does not accurately reflect the total implementation cost, as the construction cost of these black boxes often remains unspecified. If the complexity of constructing these black boxes scales non-polynomially with the number of qubits, any quantum advantage may be negated. This issue highlights a significant gap in the current literature.
    
    In this paper, our key contribution is providing such an efficient explicit gate construction for a block-encoding of a particular class of multimode Hamiltonians with the schematic form $\hat{H}=\sum \alpha \hat{x}^q\hat{p}^m+\alpha^*\hat{p}^m\hat{x}^q$ for positive integers $q,m$, where $\hat{x}$ and $\hat{p}$ are respectively position and momentum operators. The block-encoding of $\hat{H}$ can be used as a fundamental building block for near-optimal quantum simulation and provide a block-encoding of the evolution operator $\exp(-i\hat{H}t)$, with optimal scaling in time $\mathcal{O}(N^mt)$ \cite{gilyen2019quantum}. Using the Schr\"odingerisation technique~\cite{jin2022quantum,jin2023quantum,jin2023analog,cao2023quantum}, 
    we demonstrate that this class of Hamiltonians plays a pivotal role 
    in simulating any linear partial differential equation with polynomial 
    coefficient functions. Compared to the best-known classical methods, 
    our approach achieves a linear acceleration in terms of spatial partitioning, 
    scaling as \(N^m\) instead of \(N^{m+1}\). Furthermore, in the multi-dimensional 
    case, we attain an exponential acceleration with respect to the number 
    of dimensions \(d\), alleviating the curse of dimensionality problem; see Section~\ref{Sec:Generalisation} for details. It is also potentially important for quantum problems like the quantum simulation of $\hat{H}$, as well as ground state and thermal state preparation. 
    
    There are many other approaches for Hamiltonian simulations besides using block-encoding. For example, the quantum random walk combined with quantum signal processing has been recently developed to address sparse Hamiltonian simulations \cite{low2017optimal}, the algorithm demonstrated optimal scaling on all parameters. However, their direct application is hindered by the absence of a method to construct a digital quantum oracle query access to the Hamiltonian, which is a crucial component for constructing the quantum walk isometry operator. Similarly, another approach exploiting fractional and continuous-query models for efficient sparse Hamiltonian simulations \cite{berry2014exponential} encounters the same obstacle as it too requires digital quantum oracle query access to the Hamiltonian under consideration.
    
     Other literature \cite{sato2024hamiltonian,hu2024quantum} attempt to construct optimal circuits via decomposition of $\hat{H}$ into $2\times 2$ ladder operators—an alternative to the well-established Pauli basis \cite{nielsen2002quantum}. These methods are suited to toy models with separate terms for momentum and coordinate quantum operators, not their multiplication. Additionally, the use of the Suzuki-Trotter decomposition \cite{Suzuki,Trotter} results in poor scaling with time $\mathcal{O}(t^2)$, in contrast to the optimal $\mathcal{O}(t)$.
    
    We introduce a general framework for block-encoding construction, which accepts any Hamiltonian composed of superpositions of terms $\hat{x}^q\hat{p}^m$ and $\hat{p}^m\hat{x}^q$. Our proposed algorithm exhibits optimal (polylogarithmic) $\mathcal{O}(n^2)$ scaling in the number of finite-difference partitionings $2^n$ for spatial dimensions and enjoys linear scaling with time $\mathcal{O}(t)$, as we construct the evolution operator for extensive periods. 

        Inspired by the explicitness of Shor’s factoring algorithm, which is the  
    first to theoretically offer robust quantum speedup, we present explicit  
    quantum circuits and estimate their complexity in terms of single-qubit  
    operations and the simplest entangling operations (C-NOT) \cite{nielsen2002quantum}, with all  
    the constants included. Furthermore, we provide a systematic guide,  
    showing step-by-step the necessary quantum circuits. One could think that  
    the foundational role of the T gate simplifies its usage in fault-tolerant  
    quantum computing. However, even a simple one-qubit \( R_z \) gate can  
    require a varying number of T gates, heavily dependent on the gate’s  
    parameters, especially when using the Solovay-Kitaev algorithm \cite{dawson2005solovay}. Thus, we  
    rely on C-NOT and one-qubit gates as fundamental components, given their  
    consistency and simplicity in gate construction.

    The organization of this paper is as follows. In Section~\ref{sec:Motivation, roadmap and preliminary material}, we begin with providing the main results and review how they are useful in Hamiltonian simulations. We also motivate in more detail why we choose to construct the block-encoding of the $\hat{H}$ considered in this paper -- explicit gate construction for the block-encoding for arbitrary $\hat{H}$ simulation is an open problem and too difficult for now. The areas of applications in quantum problems and the simulation of partial differential equations. The overall algorithm contains of many relatively small steps. We provide a road-map of our algorithm in Fig.~\ref{fig:road_map} in Section~\ref{subsec:Roadmap}, making it easier to follow. This can be consulted during the reading of the whole paper. The following sections then show definitions and step-by-step implementation of each step in the road-map. In Section~\ref{sec:preliminaries}, we introduce essential mathematical objects needed. The main results are in Section~\ref{main_section}, which details the process on how to implement nearly optimal block-encoding for the single-mode $\hat{H}$. This culminates with the construction an efficient evolution operator for the Hamiltonian (Theorem~\ref{theorem:Nearly-optimal Hamiltonian simulation}), providing the exact number of required gates. We then dedicate Section~\ref{Sec:Generalisation} to an important generalization to the multi-dimensional $\hat{H}$. The Section~\ref{sec:conclusion} concludes the paper.

    To clarify our exposition, we include Tables \ref{table_of_notations}, \ref{table_scaling}, \ref{table_of_notations_generalization_section}, \ref{table_scaling_multi-dimensional case} in the Section~\ref{sec:tables}. The first and the third tables explain all the notation introduced and indicates their usage for $1$- and multi-dimensional cases respectively. The second and the fourth tables summarizes all milestones from Fig.~\ref{fig:road_map} and details the complexity of each step in the implementation for the both cases. We hope that the tables will prove convenient for our readers.

\section{Results, Motivation, roadmap}\label{sec:Motivation, roadmap and preliminary material}

We begin with the following Hamiltonian as a sum of $\eta$ terms
\begin{equation}
        H=\sum_{k=0}^{\eta-1}\left(\alpha_{k}P_{q_k}(x)p^{m_k}+\alpha^*_{k}p^{m_k}P_{q_k}(x)\right);\qquad \alpha_{k}\in\mathbb{C};\qquad q_k,m_k\in\mathbb{N},
        \label{eq:main_Hamiltonian}
    \end{equation}
where $x$ and $p$ denote coordinate and momentum operators, respectively, and the symbol $*$ represents the complex conjugation operation; $P_{q_k}(x)$ is a degree-$q_k$ polynomial satisfying that 
\begin{itemize}
    \item $P_{q_k}(x)$ has parity-($q_k$ mod 2) which means that it has only odd or even degrees of $x$.
    \item $P_{q_k}(x)$ is a real polynomial which means that all the coefficients are real.
    \item for all $x\in[-1,1]$: $\abs{P_{q_k}}<1$.
\end{itemize}
We note that this Hamiltonian is just a more sophisticated view of already mentioned model
\begin{equation}
        H=\sum_{k=0}^{\eta-1}\left(\alpha^\prime_{k}x^{q_k}p^{m_k}+\alpha^{\prime*}_{k}p^{m_k}x^{q_k}\right);\qquad \alpha^\prime_{k}\in\mathbb{C};\qquad q_k,m_k\in\mathbb{N}.
        \label{eq:simple_main_Hamiltonian}
\end{equation}

To simulate this Hamiltonian numerically (on both classical and quantum devices), finite-difference schemes are commonly employed \cite{zienkiewicz2005finite,ozicsik2017finite}.

The main contribution in this paper is to provide explicit circuit constructions for the block-encoding of the discretised form of the Hamiltonian in Eq.~\eqref{eq:main_Hamiltonian}. It is now a $2^n\times2^n$ matrix of the form 
\begin{equation}
        \hat{H}=\sum_{k=0}^{\eta-1}\left(\alpha_{k}P_{q_k}(\hat{x})\hat{p}^{m_k}+\alpha^*_{k}\hat{p}^{m_k}P_{q_k}(\hat{x})\right);\qquad \alpha_{k}\in\mathbb{C};\qquad q_k,m_k\in\mathbb{N},
        \label{eq:numeric_main_Hamiltonian}
    \end{equation}
    with $\hat{x}$ and $\hat{p}$ being finite difference matrices for the coordinate and momentum operators. 
For example, we can introduce the central symmetrical finite-difference scheme for derivative $\frac{\partial U}{\partial x}\rightarrow \frac{U_{h+1}-U_{h-1}}{2\Delta x}$; and we impose periodic spatial boundary conditions over the interval $(a,b)$:
    \begin{eqnarray}
    \hat{x}=\left(\begin{array}{ccccc}
    a & 0  & \dotsm & 0 & 0 \\
    0 & a+\Delta x  &\dotsm& 0 & 0\\
    \rotatebox[origin=c]{270}{\dots}&&\rotatebox[origin=c]{-45}{\dots}&&\rotatebox[origin=c]{270}{\dots}\\ 
    0 & 0 &\dotsm & b-\Delta x &0\\
    0 & 0 &\dotsm & 0 &b\\
    \end{array}
    \right);\hat{p}=-\frac{i}{2\Delta x}\left(\begin{array}{ccccccc}
    0 & 1 & 0& \dotsm &0& 0 & -1 \\
    -1 & 0 & 1&\dotsm& 0&0 & 0\\
    \rotatebox[origin=c]{270}{\dots}&&&\rotatebox[origin=c]{-45}{\dots}&&&\rotatebox[origin=c]{270}{\dots}\\ 
    0 & 0&0 &\dotsm &-1& 0 &1\\
    1 & 0&0 &\dotsm &0& -1 &0\\
    \end{array}
    \right).
    \label{eq:x_p_cyclic_matrices}
    \end{eqnarray}
However, we do not restrict our consideration for this view of operators only; in the Section~\ref{Sec:Generalisation} we consider more general boundaries as well general finite-difference schemes. From the view of the Hamiltonian it is clear that Without loss of generality we can impose that $-1\leq a<b\leq 1$. We use this condition later on in the paper. Moreover, we impose that $\sum_{\kappa=0}^{2^n-1} (a+\kappa \Delta x)^2=1$. Both those assumption are introduced to simplify normalization constants, making important parts more vivid.

Block-encoding of an operator is defined in the following way. 
   \begin{definition}[Block-encoding(modified Definition 43 from \cite{gilyen2019quantum})]
        Suppose that $A$ is an $n$-qubit operator, $\alpha,\epsilon\in\mathbb{R}_+$; $a,s\in\mathbb{Z}_+$, and let $(a+s+n)$-qubit unitary $U$ be so that for any arbitrary quantum state $\ket{\psi}^n$
        \[ U\ket{0}^a\ket{0}^s\ket{\psi}^n=\ket{0}^a\ket{\phi}^{n+s};\qquad (\bra{0}^s\otimes I^{\otimes n})\ket{\phi}^{n+s}=\tilde{A}\ket{\psi}^n,\]
            then we say that $U$ is an $(\alpha,s,\epsilon)$-block-encoding of $A$, if
        \[ ||A-\alpha \tilde{A}||\leq\epsilon.\]
        Here we underline that the unitary $U$ can exploit some auxiliary qubits setting them back to zero-state. Later we call such qubits  `pure ancillas'. The general scheme of block-encoding is presented in Fig.~\ref{fig:block_encoding_general}.
    \end{definition}
    
This paper's central contributions are encapsulated in the following two Theorems, which establish a framework for efficient quantum simulation methods and demonstrate the scalability of the proposed approach. Later in the paper we provide the detailed version of them, see Theorems~\ref{theorem:Hamiltonian Block-encoding},~\ref{theorem:Multi-dimensional block encoding}.

\begin{theorem}[Hamiltonian Block-encoding]\label{theorem:Hamiltonian Block-encoding_intro}
    Let $H$ be a Hamiltonian as in Eq.~\ref{eq:main_Hamiltonian} with the number of terms $\eta=2^\gamma$, $m = \max\limits_{k}m_k$, $2^l=2m+1$, $q=\max\limits_kq_k$, and with matrix representation $\hat{H}$ as in Eq.~\ref{eq:numeric_main_Hamiltonian} with normalization constant $\mathcal{N}_H$ (see Corollary~\ref{corollary:Hamiltonian block-encoding auxiliary}) then we can implement a $(\sqrt{2^{l}}\mathcal{N}_H,l+\gamma+4,0)$-block-encoding of $2^n\times2^n$ matrix $\hat{H}$
    \[U_H\ket{0}^{l+4}\ket{\phi}^n=\frac{1}{\sqrt{2^l}\mathcal{N}_H}\sum\limits_{\substack{i=0,\dots,2^n-1\\ j\in F^H_i}}\left(\hat{H}_{ij}\sigma_j\ket{0}^{l+4}\ket{i}^n
        +J^U_{\hat{H}_{ij}}\ket{\bot_{0}}^{n+l+4}\right);\qquad \ket{\phi}^n:=\sum_{j=0}^{2^n-1}\sigma_j\ket{j}^n,
    \]
    with resources no greater than:
    \begin{enumerate}
        \item $\mathcal{O}\bigl(\eta q n^2 + \eta mn\bigr)$ quantum gates;
        \item $3n+\gamma-l-1$ pure ancillas.
    \end{enumerate}
    \end{theorem}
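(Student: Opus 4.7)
The plan is to build $U_H$ from block-encodings of the basic operators $\hat{x}$ and $\hat{p}$, lift these to block-encodings of $P_{q_k}(\hat{x})$ and $\hat{p}^{m_k}$, multiply them to form each summand, and finally combine the $2\eta$ contributions with a linear combination of unitaries. I will rely on the standard composition rules: the product of block-encodings multiplies normalizations and adds signal qubits, while an LCU of $N$ block-encodings with weights $c_k$ yields a $(\sum_k|c_k|,\,\lceil\log N\rceil\text{ extra},\,0)$-block-encoding of the weighted sum.

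The first ingredient is a block-encoding of $\hat{x}$ and of $\hat{p}$. Since $\hat{x}$ is diagonal with entries in $[-1,1]$, a standard $O(n)$-gate controlled-rotation construction on one signal qubit suffices. The operator $\hat{p}$ is a cyclic shift minus its inverse (up to the factor $-i/(2\Delta x)$), so it is an LCU of two shifts, each implementable by an $n$-qubit arithmetic incrementer using $O(n)$ gates. Second, I will construct $P_{q_k}(\hat{x})$ by QSVT applied to the $\hat{x}$-block-encoding: the polynomial has definite parity, real coefficients, and $\abs{P_{q_k}}<1$ on $[-1,1]$, exactly the hypotheses needed, so $q_k$ alternating applications of the $\hat{x}$-encoding and controlled single-qubit phases realize a block-encoding of $P_{q_k}(\hat{x})$ at cost $O(q_k n^2)$. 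In parallel, an $m_k$-fold product of the $\hat{p}$-encoding produces a block-encoding of $\hat{p}^{m_k}$ at cost $O(m_k n)$, using $l$ signal qubits matched to $2^l=2m+1$.

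Third, multiplying the two yields block-encodings of both orderings $P_{q_k}(\hat{x})\hat{p}^{m_k}$ and $\hat{p}^{m_k}P_{q_k}(\hat{x})$. Finally, an LCU over a $(\gamma+1)$-qubit control register selects the summand index $k$ together with the ordering, with coefficients proportional to $\alpha_k$ and $\alpha_k^*$ prepared by an $O(\eta)$-gate state-preparation circuit. State-preparation and selector costs are subleading, so the total gate count is dominated by $\eta$ copies of the QSVT polynomial stage and the momentum-power stage, giving the advertised $\mathcal{O}(\eta q n^2+\eta m n)$.

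The main obstacle is the exact accounting. I must verify that the block-encodings of $\hat{p}^{m_k}$ for different $k$ can be realized using the same $l$ signal qubits inherited from a single block-encoding of $\hat{p}^m$ (so that these qubits are not multiplied by $\eta$), that ancillas used inside QSVT and inside the primitive block-encodings can be cleanly uncomputed and shared across terms, and that the final ancilla ledger lands on exactly $l+\gamma+4$ block-encoding qubits and $3n+\gamma-l-1$ pure ancillas. Propagating the normalization through each composition and matching it to $\sqrt{2^l}\mathcal{N}_H$ is the other delicate point, and is precisely the bookkeeping encoded in the auxiliary Corollary~\ref{corollary:Hamiltonian block-encoding auxiliary}.
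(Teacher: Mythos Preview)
Your route---block-encode $\hat{x}$ and $\hat{p}$ separately, lift by QSVT and by an $m$-fold product, then LCU---is a legitimate way to reach a block-encoding of $\hat{H}$ with the asymptotic gate count $\mathcal{O}(\eta q n^2+\eta m n)$, but it is not what the paper does. The paper never forms a product of $m$ copies of a $\hat{p}$-encoding. It treats $\hat{p}^m$ directly as a $(2m+1)$-sparse banded matrix: an $(l+1)$-qubit sparse-amplitude oracle $\hat{O}^S_{p^m}$ loads the $2m+1$ distinct first-row entries (Lemma~\ref{lemma: Amplitude-oracle for momentum operator}), and a banded-sparse-access oracle $\hat{O}^{BS}_{p^m}$ maps $(\text{sparse index},j)\mapsto(i,j)$ by a modular adder (Lemma~\ref{Banded-sparse-access lemma}). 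Combined with the coordinate-polynomial oracle on an explicit two-register workspace $\ket{i}^n\ket{j}^n$, this yields the $1$-term auxiliaries $A_{P_qp^m},A_{p^mP_q}$ with amplitudes $(P_q(\hat{x})\hat{p}^m)_{ij}$ (Lemma~\ref{lemma: 1-term block-encoding auxiliary}); an LCU over the $2\eta$ orderings gives $A_H$ (Corollary~\ref{corollary:Hamiltonian block-encoding auxiliary}); and a final $\hat{O}^{BS\dagger}_H$ uncomputes $n-l$ qubits of the column register (Fig.~\ref{fig:Block-encoding_qc}), leaving exactly $l+\gamma+4$ block-encoding qubits and $3n+\gamma-l-1$ pure ancillas.

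The obstacle you flag is a genuine gap, not just bookkeeping. A product of $m$ copies of a $(\cdot,1,0)$-block-encoding of $\hat{p}$ is a $(\cdot,m,0)$-block-encoding of $\hat{p}^m$: signal qubits add, so you end with $O(m)$ of them, not $l=\lceil\log_2(2m+1)\rceil$. No amount of sharing across the index $k$ changes this, and consequently neither the stated $l+\gamma+4$ block-encoding qubits nor the $3n+\gamma-l-1$ pure-ancilla count is reachable by your construction; getting $l$ instead of $m$ is precisely what the sparse-access machinery buys. A smaller issue: your ``$O(n)$ controlled rotations'' for the diagonal $\hat{x}$ does not work as written, since the required angles $\arccos\hat{x}_{ii}$ are not additive in the bits of $i$; the paper instead spends $O(n^2)$ gates, preparing the amplitude state $\sum_i \hat{x}_{ii}\ket{i}$ and converting it to a diagonal block-encoding via the technique of~\cite{rattew2023non} (Corollary~\ref{Corollary: x-oracle}).
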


\begin{remark}
    The normalization constant, 
\(\mathcal{N}_H \sim 2^{mn} \eta x^q_{\max}\), 
should not be considered related to the probability of 
successfully applying the block-encoded matrix. Instead, it serves 
solely as a normalization factor for the resulting wave function 
after applying \(U_H\). 

The total probability of success for this operation is 
\(\mathcal{O}(1/m)\). This implies that with probability 
\(\mathcal{O}(1/m)\), we achieve the action of:
\[
\frac{H \ket{\psi}}{|H\ket{\psi}|}.
\]

However, the normalization constant \(\mathcal{N}_H\) becomes significant in 
Theorem \ref{theorem: optimal block-hamiltonian simultaions}, where it plays 
an essential role. Consequently, the total gate count will necessarily 
include a factor of \(2^{mn}\), limiting the quantum advantage in 
the number of partitions \(N\) to a polynomial scale, as the best known classical algorithm has complexity $2^{(m+1)n}$.
\end{remark}

We also extend our construction from Eq.\eqref{eq:main_Hamiltonian} to multimode Hamiltonians of the form 
 \begin{eqnarray}
    \begin{gathered}
        H^{(d)}=\sum_{k=0}^{\eta_d-1}\alpha_{k}\prod_{y=0}^{d-1}\otimes R(L_{ky},P_{q^{(y)}_k}(x_y),p_y^{m^{(y)}_k});\\
        R(0,P_{q^{(y)}_k}(x_y),p_y^{m^{(y)}_k})=P_{q^{(y)}_k}(x_y)p_y^{m^{(y)}_k};\quad R(1,P_{q^{(y)}_k}(x_y),p_y^{m^{(y)}_k})=p_y^{m^{(y)}_k}P_{q^{(y)}_k}(x_y),
        \label{eq:multi-dimensional_Hamiltonian_first}
        \end{gathered}
    \end{eqnarray}
where $L_{ky}$ is a binary number indicating the ordering of the
operators; we show how to construct the block-encoding of the discretisation of this $d$-mode Hamiltonian (see Section~\ref{Sec:Generalisation}).

\begin{theorem}[Multi-dimensional block-encoding]\label{theorem_intro:Multi-dimensional block encoding}
        Let $H^{(d)}$ be a Hamiltonian as in Eq.~\ref{eq:multi-dimensional_Hamiltonian_first} with the number of terms $\eta_d=2^\gamma$, sparsity $2^l$, $n=\sum_{y}n_y$-qubit matrix representation $\hat{H}^{(d)}$, and normalization constant $\mathcal{N}^{(d)}_H$ (see Corollary~\ref{corollary:multi-dimensional Hamiltonian superposition}) then we can implement a $(\sqrt{2^{l}}\mathcal{N}^{(d)}_H,l+\gamma+3d,0)$-block-encoding of $\hat{H}$ with 
        \begin{enumerate}
            \item $\mathcal{O}\left(\sum_{k=0}^{\eta_d-1}\sum_{y=0}^{d-1}(q_k^{(y)}n_y^2+m^{(y)}_kn_y)\right)$ quantum gates,
            \item $\mathcal{O}(\sum_{y=0}^dn_y+\log(\eta))$ pure ancillas.
        \end{enumerate}
    \end{theorem}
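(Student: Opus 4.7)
The plan is to build the multi-dimensional block-encoding by composing two standard constructions on top of the single-mode result (Theorem~\ref{theorem:Hamiltonian Block-encoding_intro}): (i) a tensor-product combination of single-mode block-encodings to realize each product-over-modes term $\bigotimes_y R(L_{ky},P_{q_k^{(y)}}(\hat{x}_y),\hat{p}_y^{m_k^{(y)}})$, and (ii) a linear combination of unitaries (LCU) with $\eta_d=2^\gamma$ summands to realize the outer sum. Concretely, I would first invoke the single-mode theorem (in its $\eta=1$ form) separately for each factor $R(L_{ky},P_{q_k^{(y)}}(\hat{x}_y),\hat{p}_y^{m_k^{(y)}})$ on the $n_y$-qubit subsystem associated with mode $y$, obtaining a block-encoding $U_{k,y}$ at cost $\mathcal{O}(q_k^{(y)}n_y^{2}+m_k^{(y)}n_y)$ gates and $\mathcal{O}(n_y)$ pure ancillas.

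Second, I would form the per-term unitary $V_k=\bigotimes_{y=0}^{d-1}U_{k,y}$, where each $U_{k,y}$ acts in parallel on its own mode register and its own ancilla register. It is a standard fact that the tensor product of block-encodings is a block-encoding of the tensor product: the combined ancilla register is declared to be the ``$0$'' flag when every constituent ancilla is in $\ket{0}$, and the subnormalization multiplies across modes. This yields a $(\prod_y \alpha_{k,y},\sum_y s_y,0)$-block-encoding of $\bigotimes_y R(L_{ky},P_{q_k^{(y)}}(\hat{x}_y),\hat{p}_y^{m_k^{(y)}})$, with a gate count summed across modes and no new ancillas beyond those needed for the $U_{k,y}$.

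Third, I would apply the LCU meta-construction (PREPARE–SELECT–PREPARE$^\dagger$) with a $\gamma$-qubit selection register loaded by a PREPARE unitary that encodes $\{\alpha_k\}$. SELECT is a controlled-$V_k$ on $\ket{k}$, which can be built by conditioning each constituent $U_{k,y}$ on the selection register. Summing the gate budgets of the $\eta_d$ terms gives the required $\mathcal{O}\bigl(\sum_k\sum_y(q_k^{(y)}n_y^{2}+m_k^{(y)}n_y)\bigr)$; the ancilla count combines $\sum_y \mathcal{O}(n_y)$ from the per-mode block-encodings with $\gamma=\log\eta_d$ from LCU, matching $\mathcal{O}(\sum_y n_y+\log\eta)$ pure ancillas. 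Finally, to obtain the stated $(\sqrt{2^{l}}\mathcal{N}^{(d)}_H,l+\gamma+3d,0)$ parameters, one invokes the multi-dimensional analogue of Corollary~\ref{corollary:multi-dimensional Hamiltonian superposition} which defines $\mathcal{N}^{(d)}_H$ precisely as the product of the per-mode normalizations summed with the LCU weights $|\alpha_k|$, and one verifies that the $s$-count decomposes as $l$ (from sparsity), $\gamma$ (from LCU), and $3d$ (three block-encoding flags per mode inherited from the single-mode construction).

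The main obstacle I anticipate is the careful bookkeeping of the block-encoding ancilla count, in particular showing that the $l+\gamma+3d$ figure is tight rather than $\sum_y(l_y+\text{constants})$. This requires reusing the sparsity-flag register coherently across modes by viewing the $d$-mode operator as sparse with sparsity $2^l$ inherited multiplicatively from the per-mode sparsities, so that a single shared $l$-qubit flag suffices instead of $l_y$ flags per mode. A secondary subtlety is verifying that controlling each $V_k$ on the LCU register does not inflate the gate count by a factor of $\eta_d$; this follows because each $U_{k,y}$ already contains the Pauli/phase data that depends on $k$, so only the $\alpha_k$ coefficients need to be PREPARE'd and the SELECT decomposes into $\eta_d$ disjoint controlled-applications whose total cost is exactly the sum of the constituent costs.
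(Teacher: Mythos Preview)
Your tensor-then-LCU strategy matches the paper's, but the specific plan as written would not deliver the stated $l+\gamma+3d$ flag count, and the gap is precisely the obstacle you flag without resolving.

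The paper does \emph{not} tensor full per-mode block-encodings. Invoking Theorem~\ref{theorem:Hamiltonian Block-encoding_intro} per mode gives each $U_{k,y}$ its own sparsity register and its own LCU overhead; moreover that theorem is stated for Hermitian combinations $\alpha P_q\hat{p}^m+\alpha^*\hat{p}^m P_q$, not for a lone non-Hermitian factor $R(L_{ky},\ldots)$, so an ``$\eta=1$ form'' is not quite well-posed. The resulting tensor product carries $\sum_y(l_y+\text{const})$ flags, and your proposed ``coherent reuse'' of the sparsity register cannot undo the per-mode compression already performed inside each $U_{k,y}$ by its banded-sparse-access. The paper instead works one level lower: it tensors the $1$-term \emph{auxiliaries} $A_{P_qp^m},A_{p^mP_q}$ of Lemma~\ref{lemma: 1-term block-encoding auxiliary} across modes (Eq.~\ref{eq:A_xp_multi_dimensional}), each contributing exactly $3$ flag qubits while keeping the full column register $\ket{j_y}^{n_y}$ intact; then applies LCU to form $A_H^{(d)}$ (Corollary~\ref{corollary:multi-dimensional Hamiltonian superposition}); and only \emph{afterwards} applies a single global banded-sparse-access $\hat{O}^{BS}_{H^{(d)}}$ to the full $n=\sum_y n_y$-qubit column register. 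The proof of Theorem~\ref{theorem:Multi-dimensional block encoding} is literally ``use Fig.~\ref{fig:Block-encoding_qc} with $A_H,\hat{O}^{BS}_H$ replaced by $A_H^{(d)},\hat{O}^{BS}_{H^{(d)}}$''. Deferring the sparse-access compression to the global level is what makes the $l$-qubit sparsity register shared rather than per-mode, and is where the $3d$ (rather than something larger) comes from.
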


Classically, solving such PDEs with partitioning \( N = 2^n \)  
requires at least \( O(2^n) \) operations, while our quantum approach  
achieves \( O(n^2) \) scaling for the same setup. This remarkable  
difference opens the door to quantum speedup, enabling the handling  
of complex problems beyond the reach of classical computing.

\noindent \textbf{Main motivating questions.} For those not familiar with this area, we need to address two questions to motivate our work: 
\begin{enumerate}
    \item What is block-encoding and why is its explicit circuit construction for a Hamiltonian important?
    \item Why is our particular Hamiltonian $\hat{H}$ in Eq.~\eqref{eq:numeric_main_Hamiltonian} relevant to many problems? 
\end{enumerate}
 \begin{figure}[h]
    \includegraphics[width=0.4\textwidth]{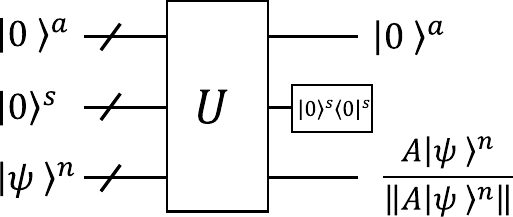}
    \caption{The general scheme of block-encoding for matrix $A$. The latest operation for the second register indicates measurement of zero, which project main $n$-qubit register onto the desired state.}
    \label{fig:block_encoding_general}
    \end{figure} 

    From this definition, it is clear that if the block-encoding of a matrix can be constructed, the matrix itself can be recovered. The main advantage of dilating the matrix to its block-encoding is so that, given access to the block-encoding, sums of matrices and polynomial approximations  of positive and negative power functions of matrices can often be efficiently constructed. Thus, block-encoding can be considered a fundamental building block from which more complex functions of a matrix can be constructed. 

    There is one particular function of $\hat{H}$ that is of particular interest and this is the unitary $\exp(i \hat{H}t)$. This is exactly the unitary evolution operator required for Hamiltonian simulation with respect to a time-independent Hamiltonian $\hat{H}$. It turns out (Theorem~\ref{theorem: optimal block-hamiltonian simultaions}) that by having the access to block-encoding of $\hat{H}$, one can construct a near-optimal implementation of $\exp(i \hat{H}t)$ with respect to time $t$ (linear in $t$) and has an additive logarithmic dependence on inverse error $\epsilon$.   

\begin{theorem}[Optimal block-Hamiltonian simulation(Theorem 58 from \cite{gilyen2019quantum})]\label{theorem: optimal block-hamiltonian simultaions}
    Let $t\in\mathbb{R}\slash \{0\}$, $\epsilon\in (0,1)$ and let $U_H$ be an $(\alpha,a,0)$-block-encoding of the $2^n\times2^n$ Hamiltonian H. Then we can implement an $\epsilon$-precise Hamiltonian simulation unitary $V$ which is a $(2,a+2,\epsilon)$-block-encoding of $e^{itH}$, with resources no greater than
    \begin{enumerate}
        \item $\Omega (t,\epsilon)$ uses of $U_H$ or its inverse;
        \item $1$ uses of controlled-U or its inverse;
        \item $\Omega(\alpha t,\epsilon)(16a+50)+4$ one-qubit gates;
        \item $\Omega(\alpha t,\epsilon)(12a+38)$ C-NOTs;
        \item $\mathcal{O}(n)$ ancilla qubits.
    \end{enumerate}
    $\Omega(\alpha t,\epsilon)$ is implicitly defined through inequality for the truncation parameter $g\geq \Omega(\alpha t,\epsilon)$
    
    \[ \frac{1.07}{\sqrt{g}}\left(\frac{\alpha et}{2g}\right)^g\leq\epsilon. \]
    The inequality determines scaling of the algorithm (Lemma $59$ from \cite{gilyen2019quantum})
    \[ \Omega(\alpha t,\epsilon)=\mathcal{O}\left(\alpha t+\frac{\ln(1/\epsilon)}{\ln(e+\frac{\ln(1/\epsilon)}{\alpha t})}\right) \]
    \end{theorem}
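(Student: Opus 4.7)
Because the statement is quoted verbatim from Gilyén, Su, Low and Wiebe \cite{gilyen2019quantum}, the natural plan is to reproduce their QSVT-based derivation. I would realise $e^{iHt}$ as an explicit truncated linear combination of Chebyshev polynomials of the block-encoded operator $H/\alpha$, use quantum singular value transformation (QSVT) to implement each polynomial from $U_H$, and glue the pieces together with a Linear Combination of Unitaries (LCU).

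First I would invoke the Jacobi--Anger expansion
\[ \cos(xt) = J_0(t) + 2\sum_{k=1}^{\infty}(-1)^k J_{2k}(t)\,T_{2k}(x), \qquad \sin(xt) = 2\sum_{k=0}^{\infty}(-1)^k J_{2k+1}(t)\,T_{2k+1}(x), \]
valid for $x\in[-1,1]$, and substitute $x\to H/\alpha$. This is legitimate because $U_H$ being an $(\alpha,a,0)$-block-encoding forces $\|H/\alpha\|\leq 1$, so every $T_k(H/\alpha)$ is well defined and bounded by $1$ in norm. Truncating both series at degree $g$ introduces a spectral-norm error controlled by the tails of the Bessel coefficients, which obey $|J_\nu(t)|\leq (|t|/2)^\nu/\nu!$; summing this bound and applying Stirling gives precisely the inequality $\frac{1.07}{\sqrt{g}}\bigl(\frac{\alpha e t}{2g}\bigr)^g\leq\epsilon$ that defines $\Omega(\alpha t,\epsilon)$, and a standard asymptotic analysis of that implicit inequality yields the stated $\mathcal{O}\!\bigl(\alpha t+\log(1/\epsilon)/\log(e+\log(1/\epsilon)/(\alpha t))\bigr)$ scaling.

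Next I would turn each $T_k(H/\alpha)$ into a circuit. QSVT says that for any integer $k$ there is a sequence of $k$ alternating applications of $U_H$ and $U_H^\dagger$, interleaved with single-qubit $Z$-rotations on the block-encoding ancillas controlled by a one-qubit ``phase'' ancilla, that produces a block-encoding of $T_k(H/\alpha)$. The cosine and sine pieces have opposite parities, so I would encode the real and imaginary parts in two parallel QSP sequences selected by one extra ancilla and then combine all $g+1$ Chebyshev terms using an LCU with a $\mathrm{PREP}$/$\mathrm{SELECT}$ pair whose $\mathrm{PREP}$ state is proportional to the Jacobi--Anger coefficients. Because $|\cos(xt)|+|\sin(xt)|\leq\sqrt{2}<2$, the resulting block-encoding constant is the $2$ in the statement, the LCU contributes the extra two ancillas giving $a+2$ total, and the approximation error is inherited from the truncation and therefore bounded by $\epsilon$.

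Finally I would count resources. Each QSVT query consumes one use of $U_H$ or $U_H^\dagger$, one controlled-$U_H$ for switching branches, plus a constant number of C-NOTs and one-qubit gates that propagate a controlled phase through each of the $a$ block-encoding ancillas; tallying these layer-by-layer over $\Omega(\alpha t,\epsilon)$ queries yields the stated $\Omega(\alpha t,\epsilon)(16a+50)+4$ one-qubit gates and $\Omega(\alpha t,\epsilon)(12a+38)$ C-NOTs. I expect the main obstacle to be pinning down these explicit constants honestly: one has to simultaneously enforce the parity constraint of QSP, combine the real and imaginary polynomials without doubling every resource, and reuse ancillas between the QSVT phase register and the LCU PREP register. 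This bookkeeping, rather than any deep mathematical content, is where the exact coefficients $16a+50$ and $12a+38$ come from in \cite{gilyen2019quantum}, and it is the step I would need to execute most carefully.
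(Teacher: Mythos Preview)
The paper does not supply its own proof of this theorem: it is quoted wholesale from \cite{gilyen2019quantum} and used as a black box (the only nearby ``proof'' in the paper, for Theorem~\ref{theorem:Nearly-optimal Hamiltonian simulation}, explicitly defers to \cite{gilyen2019quantum} for the argument). Your outline---Jacobi--Anger expansion, Bessel-tail truncation giving the $\frac{1.07}{\sqrt{g}}(\alpha e t/2g)^g$ bound, QSVT/QSP to block-encode the truncated Chebyshev polynomial, LCU to combine real and imaginary parts, then layer-by-layer gate counting---is precisely the structure of the original proof in \cite{gilyen2019quantum}, so your plan matches the only proof either paper points to.
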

Since all Hamiltonian simulation problems are concerned with simulating the unitary evolution operator for a given Hamiltonian $\hat{H}$, having the block-encoding of $\hat{H}$ is a necessary starting point if we choose to use the block-encoding formalism for Hamiltonian simulation. Often in most applications of block-encoding, this is assumed to be provided, like a black-box, and the total cost for an algorithm only counts the total number of queries to this black-box. However, for a real implementation of the quantum simulation scheme, it is absolutely necessary to explicitly construct this black-box. The total number of queries of this black-box for an algorithm, even if efficient, is not a meaningful measure when considering the total cost for real implementation, unless we know that this black-box itself is also efficient to construct.  

However, explicitly constructing this block-encoding for arbitrary $\hat{H}$ is very difficult and is an open problem. In this paper, we focus on a particular class of $\hat{H}$ -- the discretisations of Eq.~\eqref{eq:main_Hamiltonian} and Eq.~\eqref{eq:multi-dimensional_Hamiltonian_first}. We briefly discuss below some examples of applications including (a) time-independent and time-independent Hamiltonian simulation (b) the quantum simulation of all linear partial differential equations (and some nonlinear problems too) and (c) ground state and thermal state preparation of $\hat{H}$. The detailed implementation of these schemes we leave for future work. \\

\noindent \textbf{Hamiltonian simulation.} 
Schr\"odinger's equation for $\eta$ particles in $d=3$ dimensions with time-independent interacting potential $V(\hat{x}_0, \cdots, \hat{x}_{\eta-1}, \hat{y}_0, \cdots, \hat{y}_{\eta-1}, \hat{z}_0, \cdots, \hat{z}_{\eta-1})$ has the corresponding Hamiltonian 
\begin{align}
    \hat{H}=\sum_{k=0}^{\eta-1} \frac{\hat{p}^2_{x_k}}{2m_k}+\frac{\hat{p}^2_{y_k}}{2m_k}+\frac{\hat{p}^2_{z_k}}{2m_k}+V(\hat{x}_0, \cdots, \hat{x}_{\eta-1}, \hat{y}_0, \cdots, \hat{y}_{\eta-1}, \hat{z}_0, \cdots, \hat{z}_{\eta-1}).
\end{align}
Clearly the kinetic term is of the form in Eq.~\ref{eq:multi-dimensional_Hamiltonian_first}. The potential belongs falls into Eq.~\eqref{eq:multi-dimensional_Hamiltonian_first} only if it is polynomial in $x, y, z$, for instance a sum of harmonic wells. Polynomial potential functions can also be approximate representations in Taylor expansions of more complex potentials.  

We can also easily extend to the case of time-dependent potentials $V$. We can use a simple transformation \cite{cao2023quantum} turning a time-dependent Hamiltonian $H(t)$ into a time-independent Hamiltonian $\hat{H}$ with one extra mode labelled $s$
\begin{align} \label{eq:time-independent}
    H(t) \rightarrow \hat{H}=\hat{p}_s \otimes \mathbf{1}+H(\hat{x}_s)
\end{align}
where $[\hat{x}_s,\hat{p}_s]=i\mathbf{1}_s$. From the method in \cite{cao2023quantum}, if time-independent quantum simulation of $\hat{H}$ is performed, one can easily and efficiently recover the evolution with respect to $H(t)$. Thus, almost for any $H(t)$ we can claim that Eq.~\eqref{eq:time-independent} is also of the form in Eq.~\eqref{eq:multi-dimensional_Hamiltonian_first}.  

\noindent \textbf{Quantum simulation of partial differential equations.}
It is known that for any $d$-dimensional linear partial differential equation for $u(x,t)$, $x=x_1,...,x_d$, can be written in the form \cite{jin2023analog}
\begin{align} \label{eq:uode}
    \frac{d \mathbf{u}(t)}{dt}=-i\mathbf{A}(t,\hat{x},\hat{p}) \mathbf{u}(t), \qquad \mathbf{u}_0=\mathbf{u}(0), \qquad \hat{x}=\hat{x}_1,...,\hat{x}_d, \qquad \hat{p}=\hat{p}_1,...,\hat{p}_d,
\end{align}
where $\mathbf{u}(t)= \int_{-\infty}^{\infty} u(t,x)|x\rangle dx$. Here $\mathbf{A}(t,\hat{x}, \hat{p})$ is a linear operator consisting of factors of $\hat{x}, \hat{p}$ depending on the form of the original partial differential equation. Each factor of $x$ in the original partial differential equation gives rise to $\hat{x}$ and each derivative $\partial^l/\partial x_j^l$ in the partial differential equation gives a contribution of $(i \hat{p}_j)^l$. However, here generally $\mathbf{A}(t) \neq \mathbf{A}^{\dagger}(t)$ is not Hermitian, so cannot be seen directly as a Hamiltonian. 

To turn this simulation problem into a Hamiltonian simulation problem, we want to find a corresponding $\hat{H}$ which is also a function of $\hat{x}$ and $\hat{p}$. This is possible by applying the method of Schr\"odingerisation \cite{schr1, schr2, jin2023analog} which can transform $\mathbf{A}=\mathbf{A}_1-i\mathbf{A}_2$ into a Hamiltonian $\hat{H}$ acting on one extra mode labelled $\xi$
\begin{align}
    & \mathbf{A}(t) \rightarrow \hat{H}=\mathbf{A}_2(t) \otimes \hat{x}_{\xi}+\mathbf{A}_1(t) \otimes \mathbf{1}_{\xi}.
\end{align}
Here $\mathbf{A}_1(t)=(\mathbf{A}(t)+\mathbf{A}^{\dagger}(t))/2=\mathbf{A}_1^{\dagger}$ corresponds to the completely Hermitian part of $\mathbf{A}(t)$ and $\mathbf{A}_2(t)=i(\mathbf{A}(t)-\mathbf{A}^{\dagger}(t))/2=\mathbf{A}_2^{\dagger}(t)$ is associated with the completely anti-Hermitian part of $\mathbf{A}(t)$. The method of Schr\"odingerisation allows the simple recovery of the quantum state proportional to $\mathbf{u}(t)$ by using Hamiltonian simulation in $\hat{H}(t)$. We can easily turn this into a time-independent Hamiltonian using the method previously described in Eq.~\eqref{eq:time-independent}. 

This means if $\mathbf{A}(t)$ is a polynomial sum of $\hat{x}$ and $\hat{p}$, then clearly $\hat{H}(t)$ will also be a polynomial sum of $\hat{x}, \hat{p}$. We underline here that for more complex functions like cosine, sine, etc we can transform it to polynomials using some expansions (Taylor like). Every partial differential equation has only integer-valued derivatives of order $l>0$ (with the exception of fractional partial differential equation), so these will give a contribution to $\hat{H}(t)$ of the polynomial form $\hat{p}^l$. This means, for instance, any constant coefficient partial differential equation are in the form of Eq.~\eqref{eq:multi-dimensional_Hamiltonian_first}. 

in Addition, some nonlinear partial differential equations like the Hamilton-Jacobi and scalar hyperbolic, as well as all nonlinear ordinary differential equations can also be written in the linear form Eq.~\eqref{eq:uode} \cite{jin2023analog, jin2023time, jin2022quantumnonlinear}. 

\noindent \textbf{Ground state and thermal state preparation.}

Given the block-encoding to $\hat{H}$, one can also perform ground state preparation or thermal state preparation with respect to the Hamiltonian $\hat{H}$, where $\hat{H}$ is of the form in Eq.~\eqref{eq:multi-dimensional_Hamiltonian_first}. There are two ways to go about this. A simple method that involves only block-encoding construction of $\exp(-i\hat{H}t)$, i.e. Hamiltonian simulation, is again to apply Schr\"odingerisation, with the choice of $\mathbf{A}=-i\mathbf{A}_2=-i\hat{H}$, see \cite{schr1} for more details. 

An alternative is not to construct block-encodings of $\exp(-i\hat{H}t)$, but to construct block-encodings of functions like $\exp(-\hat{H}t)$ or $\exp(-\beta \hat{H})$, see \cite{gilyen2019quantum} for more details. These operators can be similarly used for ground state and thermal state preparation. 
 
\subsection{Roadmap}\label{subsec:Roadmap}
   
   The steps in constructing the block-encoding for our $\hat{H}$ requires many steps and it can be easy to get lost while compiling many pieces of the puzzle together. To make it easier to use our technique, the roadmap for our algorithm is shown in Fig.~\ref{fig:road_map}.  For convenience, we also include the table of our notation in Table \ref{table_of_notations}. 

    \begin{figure}[h!]
    \includegraphics[width=0.75\textwidth]{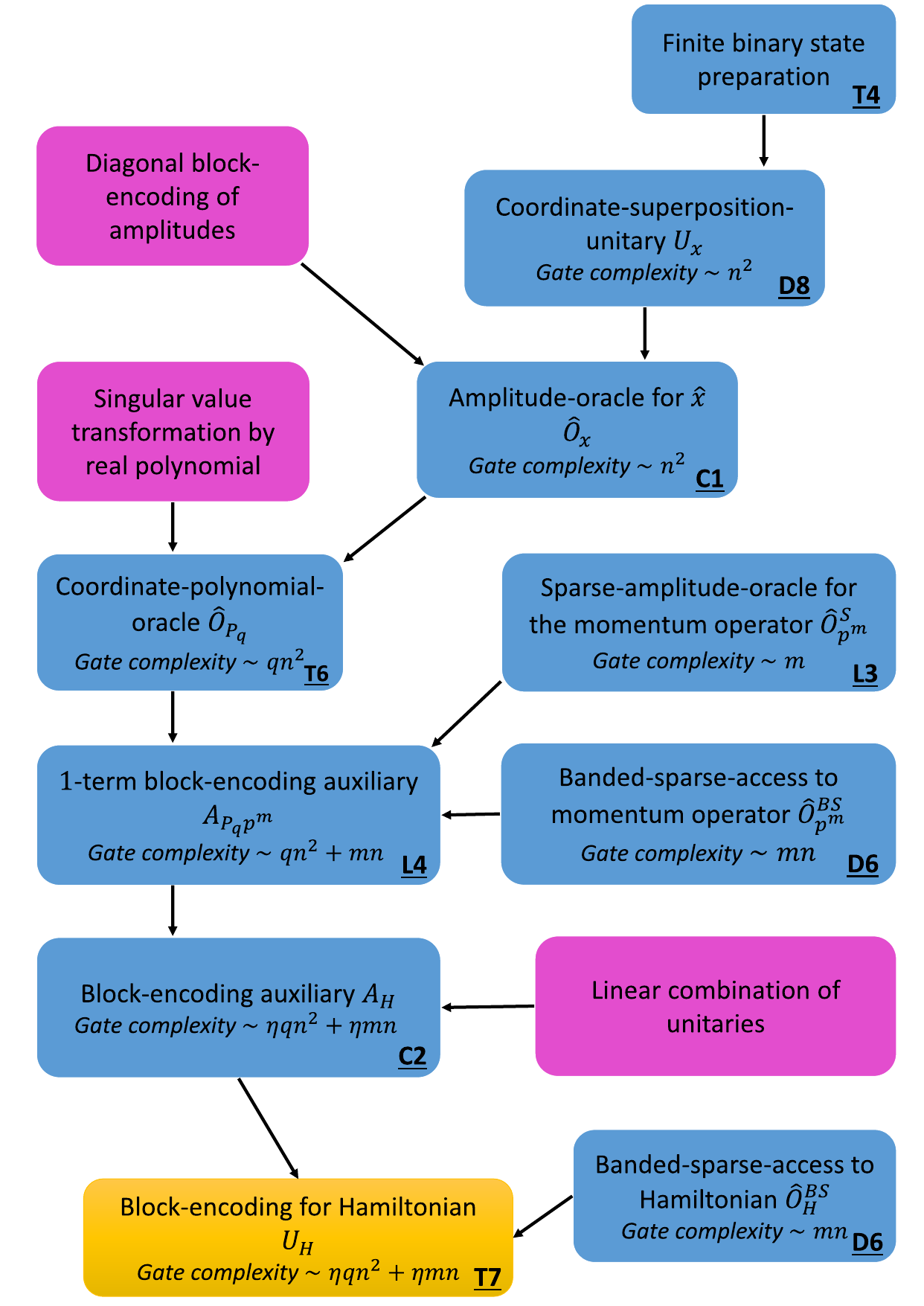}
    \caption{Road-map of implementing an efficient block-encoding for Hamiltonian with form (\ref{eq:main_Hamiltonian}). Each block indicates an important step in the overall algorithm. Moreover, we provide the notation for each structure and the scaling for its construction. The precise number of gates we provide in the text. Some blocks indicate Theorems (\textbf{T}), Corollaries (\textbf{C}), Lemmas (\textbf{L}), Definitions (\textbf{D}) where it can be found. The boxes are painted with blue if the step is an original concept developed in this paper. On the contrary, pink color indicates research from other papers. We also provide tips for gate complexity notation understanding: $n$ -- number of qubits; $m$ -- the maximum degree of momentum operator contributing the Hamiltonian; $q$-- the maximum degree of coordinate operator contributing the Hamiltonian; $\eta$ -- number of terms in the Hamiltonian. }
    \label{fig:road_map}
    \end{figure}

\subsection{Preliminaries}\label{sec:preliminaries}
    
    Now we introduce some mathematical and quantum computing objects that we will use, and the system of indexes. We start with a multi-control operator which is a common structure in quantum computing allowing for the application a certain gate with a given condition.
    
    \begin{definition}[Multi-control operator]
    \label{def:multiconrol operator}
        Let $U$ be an $m$-qubit quantum unitary and $b$ a bit string of length $n$. We define $C_U^b$ as an $n+m$ quantum unitary that applies $U$ to an $m$-qubit quantum register if and only the $n$-qubit quantum register is in state $\ket{b}$.
        \[ C^b_U=\ket{b}^n\bra{b}^n\otimes U+\sum\limits_{\substack{i=0,\dots,2^n-1\\ i\neq b}}\ket{i}^n\bra{i}^n\otimes I^{\otimes m}  \]
    \end{definition}
    
    \begin{figure}[h]
    \subcaptionbox{}{\includegraphics[width=0.283\textwidth]{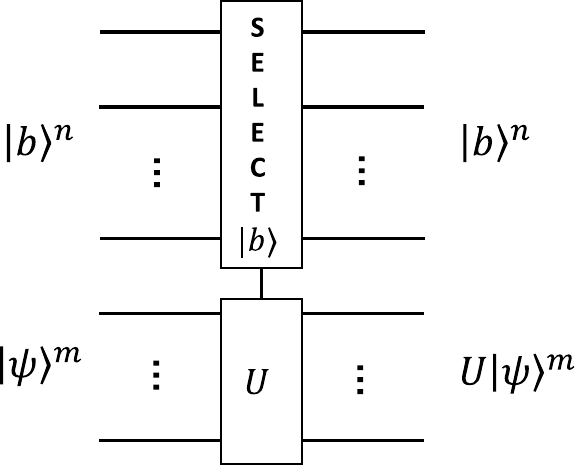}}
    \hspace{0.1\textwidth}
    \subcaptionbox{}{\includegraphics[width=0.48\textwidth]{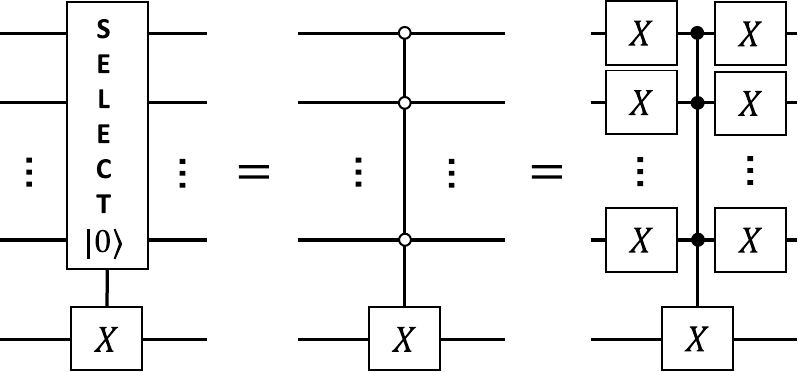}}
    \vspace{2.0em}
    \subcaptionbox{}{\includegraphics[width=0.68\textwidth]{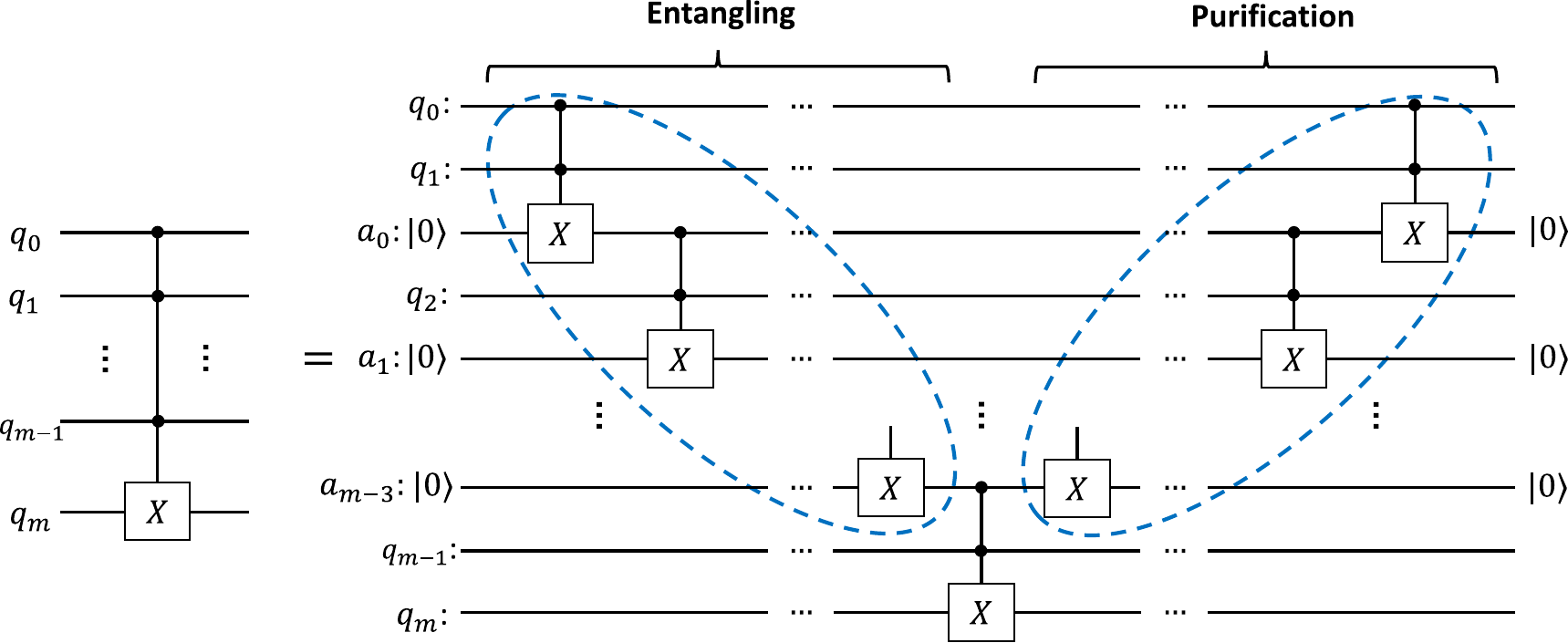}}
    \caption{(a) The general view of multi-control operator $C_U^b$ from the definition \ref{def:multiconrol operator}. Select operation means that the operator $U$ is applied only if the state on the upper register is $\ket{b}^n$ (b) Explicit quantum circuit for implementation $C_X^{00\dots 0}$. (c) Explicit quantum circuit for implementation $C^{11\dots1}_X$ which we use everywhere in this paper. We use $2m-3$ Toffoli application and $m-2$ pure ancillas.}
    \label{fig:Multi_control_qc}
    \end{figure}
    
    The general structure of a multi-control operator is illustrated in Fig.~\ref{fig:Multi_control_qc}. It is known \cite{nielsen2002quantum} that Toffoli gate can be realized using $6$ C-NOTs and $8$ single-qubit operations. Consequently, the resource requirement for a single Multi-control operator includes no more than one $C^1_U$ operator, $16n-16$ single-qubit operations, and $12n-12$ C-NOTs, simplifying the complexity of $C^{b}_X$ to $C^{11\dots 1}_X$ for simplicity. Additionally, we utilize $n-1$ ancillary qubits initialized to the zero state, which we further refer to as pure ancillas.
    
    \begin{definition}[Pure ancilla]
    \label{def:pure ancilla}
    Let $U$ be an $n+m$-qubit unitary operator such that for any arbitrary $\ket{\psi}^n$
    \[U\ket{0}^m\ket{\psi}^n=\ket{0}^m\ket{\phi}^n,\]
    where $\ket{\phi}^n$ is some quantum state then we say that the $m$-qubit quantum register is pure ancillas for operator $U$, if before and after the action it is in the state $\ket{0}^m$. In Fig.~\ref{fig:Multi_control_qc} (c) we use $m-2$ auxiliary qubits for efficient C-NOT construction. Exploitation of this qubits is divided into two epochs: (i) entangling stage that allows us to achieve some sophisticated quantum states, (ii) purification stage that sets ancillas back to zero-state.
    \end{definition}
    
    Now, we introduce a banded matrix which non-zero elements form a band shape. This object is particularly interesting as $\hat{p}^m$ is a banded matrix.
    
    \begin{definition}[Banded matrix]
    \label{def:Banded_matrix}
    A banded matrix is defined as an $N\times N$ matrix in which each $i$-th row is the $i-1$ times right-cyclic permuted first row
    \[ \forall i,j: A_{ij}=A_{0 \, j-i},  \]
    where $j-i$ means subtraction modulo $N$. Thus, the set of indeces is fully determined by its first row. The following $5\times5$ matrix is an example of such banded matrices 
    \[ 
    \left(\begin{array}{ccccc}
    6 & 1  & 0 & 9 & 8 \\
    8 & 6  &1 & 0 & 9\\
    9 & 8  &6 & 1 & 0\\
    0 & 9  &8& 6 & 1\\
    1 & 0  &9& 8 & 6\\
    \end{array}\right) .\]
    
    \end{definition}
    
    We define a $s$ sparse matrix as a matrix which contains no more than $s$ non-zero elements in each row. In our research, we focus on sparse banded matrices. To analyze these matrices in a convenient way, we introduce a sparse index specifically tailored for them.
    
    \begin{definition}[Banded-sparse matrix index]
    \label{def:Sparse_matrix}
        For any given banded sparse matrix $A$, we denote $A^{(s)}$ as the $s$-th non-zero element in the matrix's first row. Here is an example
        \[ A=\left(\begin{array}{cccccccc}
    2&3&0&0&0&0&0&4\\
    4&2&3&0&0&0&0&0\\
    0&4&2&3&0&0&0&0\\
    0&0&4&2&3&0&0&0\\
    0&0&0&4&2&3&0&0\\
    0&0&0&0&4&2&3&0\\
    0&0&0&0&0&4&2&3\\
    3&0&0&0&0&0&4&2
    \end{array}\right)=\left(\begin{array}{cccccccc}
    A^{(0)}&A^{(1)}&0&0&0&0&0&A^{(2)}\\
    A^{(2)}&A^{(0)}&A^{(1)}&0&0&0&0&0\\
    0&A^{(2)}&A^{(0)}&A^{(1)}&0&0&0&0\\
    0&0&A^{(2)}&A^{(0)}&A^{(1)}&0&0&0\\
    0&0&0&A^{(2)}&A^{(0)}&A^{(1)}&0&0\\
    0&0&0&0&A^{(2)}&A^{(0)}&A^{(1)}&0\\
    0&0&0&0&0&A^{(2)}&A^{(0)}&A^{(1)}\\
    A^{(1)}&0&0&0&0&0&A^{(2)}&A^{(0)}
    \end{array}\right). \]
    \end{definition}
    
    \begin{remark}
    We extend the definition of the banded-sparse matrix index to cases where a sparse $N\times N$ matrix $A$ becomes a banded matrix upon substituting all non-zero elements with $1$, reflecting the band structure of non-zero elements. We underline that our main interest here is to reflect the band position of the non-zero elements. Thus, the notation $A^{(s)}_i$ means that we first address an index of $s$-th non-zero elements in the first row then add $i$ (modulo N) to get the column number. Here is an example
    \[ A=\left(\begin{array}{cccccccc}
    2&6&0&0&0&0&0&4\\
    1&2&3&0&0&0&0&0\\
    0&6&7&7&0&0&0&0\\
    0&0&1&6&2&0&0&0\\
    0&0&0&3&8&9&0&0\\
    0&0&0&0&9&4&1&0\\
    0&0&0&0&0&6&6&6\\
    9&0&0&0&0&0&5&2
    \end{array}\right)=\left(\begin{array}{cccccccc}
    A^{(0)}_0&A^{(1)}_0&0&0&0&0&0&A^{(2)}_0\\
    A^{(2)}_1&A^{(0)}_1&A^{(1)}_1&0&0&0&0&0\\
    0&A^{(2)}_2&A^{(0)}_2&A^{(1)}_2&0&0&0&0\\
    0&0&A^{(2)}_3&A^{(0)}_3&A^{(1)}_3&0&0&0\\
    0&0&0&A^{(2)}_4&A^{(0)}_4&A^{(1)}_4&0&0\\
    0&0&0&0&A^{(2)}_5&A^{(0)}_5&A^{(1)}_5&0\\
    0&0&0&0&0&A^{(2)}_6&A^{(0)}_6&A^{(1)}_6\\
    A^{(1)}_7&0&0&0&0&0&A^{(2)}_7&A^{(0)}_7
    \end{array}\right). \]
    \end{remark}
    
    Based on the introduced sparse access we define a quantum operation that transforms that index into the column index of a non-zero element.
    
    \begin{definition}[Banded-sparse-access]
    \label{Banded-sparse-access}
    Given a $2^l$ sparse $2^n\times2^n$ matrix $A$ with non-zero elements arranged to form a banded matrix when replacing all non-zero elements with $1$, we define a Banded-sparse-access unitary oracle to $A$ as
        \[\hat{O}^{BS}_A\ket{0}^{n-l}\ket{s}^l\ket{i}^n:=\ket{r_{si}}^n\ket{i}^n,\]
    where $r_{si}=r_{s0} + i \mod 2^n$ corresponds to the $(s)$-th banded-sparse matrix index. If the transposed matrix $A^T$ shares the same non-zero element positions as $A$, then $\hat{O}^{BS}_A=\hat{O}^{BS}_{A^T}$. 
    \end{definition}
    
    \begin{lemma}\label{Banded-sparse-access lemma}
    The Banded-sparse-access $\hat{O}^{BS}_A$ to $2^l$ sparse $2^n\times2^n$ matrix $A$ with non-zero elements arranged to form a banded matrix when replacing all non-zero elements with $1$ can be implemented with resources no greater than: 
    \begin{enumerate}
        \item $(2^l+1)(32n-48)$ one-qubits operations;
        \item $25*2^ln-36*2^l+32n-48$ C-NOTs;
        \item $n-1$ pure ancillas.
    \end{enumerate}
    \end{lemma}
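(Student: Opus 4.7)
The plan is to factor $\hat O^{BS}_A$ into two subcircuits acting on the $2n$-qubit data register: a lookup $L$ that realizes $L\ket{0}^{n-l}\ket{s}^l=\ket{r_{s0}}^n$ in place on the first $n$-qubit register, and an in-place modular adder $A_+$ that realizes $A_+\ket{a}^n\ket{i}^n=\ket{(a+i)\bmod 2^n}^n\ket{i}^n$. Since $r_{si}=r_{s0}+i\bmod 2^n$ by Definition~\ref{Banded-sparse-access}, the composition $A_+\circ(L\otimes I^{\otimes n})$ implements $\hat O^{BS}_A$ exactly, so it suffices to construct $L$ and $A_+$ and sum their costs.

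For $L$, I will iterate $s'=0,1,\dots,2^l-1$ and, for each $s'$, conditionally rewrite the first register from $(0,s')$ to $r_{s'0}$ using one pure ancilla as a flag. Each iteration proceeds in three substeps: (i) a multi-controlled $X$ with all $n$ qubits of the first register as controls matching $(0,s')$, targeting the flag; (ii) at most $n$ CNOTs conditioned on the flag, flipping exactly the bit positions at which $r_{s'0}$ and $(0,s')$ differ, so that on the triggered branch the first register becomes $r_{s'0}$; and (iii) a second multi-controlled $X$ with $n$ controls matching $r_{s'0}$, again targeting the flag, which uncomputes it. By injectivity of $s\mapsto r_{s0}$, the two multi-controls within a single iteration activate on exactly the same branch, so the flag cleanly returns to $\ket{0}$. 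Using the Toffoli decomposition described after Definition~\ref{def:multiconrol operator} ($2n-3$ Toffolis per $n$-controlled multi-$X$, each Toffoli using $8$ single-qubit gates and $6$ CNOTs), every iteration contributes $32n-48$ single-qubit operations and $25n-36$ CNOTs, the ``$+n$'' in the latter coming from the flag-conditioned bit flips.

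For $A_+$ I will invoke a standard in-place ripple-carry modular adder, which adds the second register into the first modulo $2^n$ using $O(n)$ one-qubit and CNOT gates and reuses the same pool of pure ancillas. Combined with a single boundary multi-controlled $X$ needed for the lookup's initial/final register normalization, the adder-plus-boundary overhead contributes one extra ``unit'' of $32n-48$ single-qubit gates and $32n-48$ CNOTs. Summing over the $2^l$ lookup iterations and this boundary piece yields the claimed totals $(2^l+1)(32n-48)$ single-qubit operations and $25\cdot 2^l n-36\cdot 2^l+32n-48$ CNOTs. The $n-1$ pure ancillas split into $n-2$ slots for the multi-controlled $X$ auxiliary register (cf.\ Fig.~\ref{fig:Multi_control_qc}(c)) and one slot for the flag, with the adder reusing the same scratch space.

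The hardest part will be the correctness proof for $L$ under arbitrary superpositions of $s$. I will argue by induction on $s'$ that after iteration $s'$, each branch labeled by initial value $s$ sits in state $r_{s0}$ if $s\le s'$ and in state $(0,s)$ otherwise, so that at iteration $s'+1$ neither of the two $n$-controls activates on a wrong branch: injectivity of $s\mapsto r_{s0}$ eliminates the dominant class of collisions, while any residual collision of the form $r_{s''0}=(0,s')$ with $s''<s'$ must be suppressed either by processing the iterations in an order adapted to the banded structure of $A$ (for instance, in separate passes over the low- and high-index portions of the band) or by inserting a constant-cost guard absorbed into the boundary overhead. Once this invariant is secured, the final accounting is a direct summation.
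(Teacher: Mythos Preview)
Your approach matches the paper's proof (Appendix~\ref{Appendix Banded-sparse-access}) almost exactly: both factor $\hat O^{BS}_A$ as a lookup $U_A^{(l)}$ (your $L$) followed by a modular adder $U^{SUM}$ (your $A_+$), and both implement the lookup as $2^l$ iterations of flag-set / controlled bit-flips / flag-unset, giving the per-iteration cost $32n-48$ one-qubit gates and $25n-36$ CNOTs you quote.

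Two corrections. First, there is no ``boundary multi-controlled $X$'' in the construction: the additive $32n-48$ term in both the one-qubit and CNOT counts comes \emph{entirely} from the ripple-carry adder, not from any normalization step on the lookup. Drop that device and simply cite an adder with at most $32n-48$ one-qubit gates and $32n-48$ CNOTs (the paper's appendix in fact quotes $26n-37$ CNOTs for its adder, which is below the lemma's bound, so the lemma's figure is just a looser envelope).

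Second, the collision issue you raise is genuine and the paper does not address it: its $U_s^{r_{s0}}$ is written only as a map on the intended input $\ket{0}^{n-l}\ket{s}^l$, and the flag-unset step (iii) would indeed misfire on a branch already sitting at $r_{s0}$. For the matrices the paper actually uses (powers of the banded $\hat p$), a processing order avoiding this exists, but for general banded $A$ your proof needs either a mild hypothesis on $\{r_{s0}\}$ or an explicit ordering argument; the ``constant-cost guard'' alternative is too vague as stated. Your ancilla accounting ($n-2$ for the multi-$X$ ladder plus one flag, reused by the adder) is correct.
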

    \begin{proof}
        The explicit circuit design is shown and explained in the Appendix \ref{Appendix Banded-sparse-access}. It is important to note that when exponentiating the matrix $\hat{p}$ to the power of $m$, denoted $\hat{p}^m$, we obtain a matrix that also qualifies as a sparse banded matrix, with $m+1$ non-zero elements in each row.
    \end{proof}
    
    \section{Efficient block-encoding using Amplitude Hamiltonian query access}\label{main_section}
    
In this section, we present efficient block-encoding techniques utilizing Amplitude Hamiltonian query access. This advanced approach enables us to harness the full potential of quantum computing for simulating complex Hamiltonian dynamics with high precision. Through a detailed exploration of Hamiltonian superposition, we establish an auxiliary unitary framework that significantly enhances our ability to analyze and interact with Hamiltonian operators in a quantum computational setting. This is achieved by synthesizing Oracles to specific operators and employing a combination of sophisticated techniques, including Banded-sparse-access and Coordinate-polynomial-oracle, all supplemented by straightforward unitary operations. This section culminates in the development of a block-encoding for the Hamiltonian (\ref{eq:main_Hamiltonian}), providing us with a new method to access Hamiltonian information directly and efficiently.

\subsection{Coordinate-polynomial-oracle}

In this subsection, we construct an auxiliary unitary framework for the polynomial $P_{q}(\hat{x})$, denoted $\hat{O}_{P_q}$. We further combine this Coordinate-polynomial-oracle with similar structure for $\hat{p}^m$ to build the oracle for $P_{q} (\hat{x})\hat{p}^m$.

\begin{definition}[Coordinate-polynomial-oracle]
\label{def. Coordinate-polynomial-oracle}
Let $P_{q}(x)$ be a degree-$q$ polynomial satisfying that 
\begin{itemize}
    \item $P_{q}(x)$ has parity-($q$ mod 2);
    \item $P_{q}(x)\in\mathbb{R}[x]$;
    \item for all $x\in[-1,1]$: $\abs{P_{q}}<1$.
\end{itemize}
And let $\hat{x}$ be $n$-qubit matrix representations of the coordinate operator, as in Eq.~\ref{eq:x_p_cyclic_matrices}. Then we say that $\hat{O}_{P_q}$ is the Coordinate-polynomial-oracle for $P_q$ if
\[\hat{O}_{P_q}\ket{0}^\lambda\ket{i}^n=P_q(\hat{x}_{ii})\ket{0}^\lambda\ket{i}^n +J^{(i)}_{P_q}\ket{\bot_0}^{\lambda+n},\]
where $\ket{\bot_0}^{\lambda+n}$ is orthogonal to $\ket{0}^\lambda$. In other words, Coordinate-polynomial-oracle is a $(1,\lambda,0)$-block-encoding of the oracle access.
\end{definition}

Next, we design an auxiliary unitary operation which encodes the diagonal of $\hat{x}^q$ operator from (\ref{eq:x_p_cyclic_matrices}) into a quantum state.
    
    \begin{definition}[Coordinate-superposition-unitary]    .
    \label{def:Coordinate-superposition-unitary}
        Let $\hat{x}$ be a $n$-qubit finite-difference representation of quantum coordinate operator as in the Eq.~\ref{eq:x_p_cyclic_matrices}, then we define a $n$-qubit unitary $U_{x^q}$ that prepares a quantum states with amplitudes encoding the diagonal of $\hat{x}^q$
        \[ U_{x^q}\ket{0}^{n}=\frac{1}{\sqrt{\sum_{\kappa=0}^{2^n-1}(a+\kappa \Delta x)^{2q}}}\sum_{i=0}^{2^n-1}(a+i \Delta x)^{q}\ket{i}^n=:\frac{1}{\sqrt{\sum_{\kappa=0}^{2^n-1}\hat{x}_{\kappa\kappa}^{2q}}}\sum_{i=0}^{2^n-1}\hat{x}_{ii}^{q}\ket{i}^n. \]
    
    \end{definition}
    
    \begin{theorem}[Discrete Hadamard-Walsh transform for a polynomial superposition (App. A in \cite{guseynov2023depth})]\label{theorem:Walsh_Hadamard_poly}
        
        Let $\ket{\psi}^n$ be a $n$-qubit quantum state which is an arbitrary superposition of $U_{x^q}$ (see Definition~\ref{def:Coordinate-superposition-unitary}) acting on $\ket{0}^n$
        \[ \ket{\psi}^n=\frac{1}{\mathcal{N}_\psi}\sum^\chi_{q=0}\alpha_qU_{x^q}\ket{0}^n;\qquad \alpha_q\in\mathbb{C}\]
        where $\mathcal{N}_\psi$ is a normalization factor, $\abs{i}_b=\sum_{\kappa=0}^{n-1}i_\kappa$ be a binary norm then the Discrete Hadamard-Walsh transform \cite{walsh1923closed} of the $\ket{\psi}^n$ contains computational basis states only with $\abs{i}_b\leq\chi$
        \[ H_W^{\otimes n}\left(\frac{1}{\mathcal{N}_\psi}\sum^\chi_{q=0}\alpha_qU_{x^q}\ket{0}^n\right)=
        H_W^{\otimes n}\left(\frac{1}{\mathcal{N}_\psi}\sum\limits_{\substack{q=0,\dots,\chi\\i=0,\dots,2^n-1}}
        \alpha_q\hat{x}_{ii}^q\ket{i}^n\right)=\sum_{\abs{i}_b\leq\chi}\beta_i\ket{i}^n;
        \]
        \[\beta_i=\frac{1}{2^{n/2}\mathcal{N}_\psi}\sum\limits_{\substack{q=0,\dots,\chi\\\kappa=0,\dots,2^n-1}}
        \alpha_q\hat{x}_{\kappa\kappa}^q(-1)^{\sum_{m=0}^{n-1}\kappa_mi_m};\qquad H_W=\frac{1}{\sqrt{2}}\left
    (
    \begin{array}{cc}
    1 & 1 \\
    1 &-1 
    \end{array}
    \right),\]
    where $m$ indicates a bit string index.
        
    \end{theorem}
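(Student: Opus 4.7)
The plan is to compute the coefficients $\beta_i$ of the Hadamard--Walsh transformed state directly and show that they vanish whenever the binary Hamming weight $\abs{i}_b$ exceeds $\chi$. First I would apply $H_W^{\otimes n}$ term by term, using the standard identity $H_W^{\otimes n}\ket{\kappa}^n = 2^{-n/2}\sum_{i} (-1)^{\sum_m i_m\kappa_m}\ket{i}^n$, which immediately yields the coefficient formula for $\beta_i$ stated in the theorem. The task then reduces to proving the vanishing claim: for each fixed $q\leq\chi$, the inner sum $\sum_{\kappa=0}^{2^n-1}\hat{x}_{\kappa\kappa}^{q}(-1)^{\sum_m i_m\kappa_m}$ is zero whenever $\abs{i}_b > q$.

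The key technical step is to re-express $\hat{x}_{\kappa\kappa}^q = (a+\kappa\Delta x)^q$ as a \emph{multilinear} polynomial in the bits $\kappa_0,\dots,\kappa_{n-1}$ of degree at most $q$. This uses the binomial expansion $(a+\kappa\Delta x)^q = \sum_{j=0}^{q}\binom{q}{j}a^{q-j}(\Delta x)^j\kappa^{j}$ together with $\kappa^j = \bigl(\sum_m 2^m\kappa_m\bigr)^j$; since $\kappa_m\in\{0,1\}$ satisfies $\kappa_m^2=\kappa_m$, every monomial in the expansion can be collapsed to a squarefree product, giving a representation
\begin{equation*}
(a+\kappa\Delta x)^q \;=\; \sum_{\substack{S\subseteq\{0,\dots,n-1\}\\ \abs{S}\leq q}} c^{(q)}_S\prod_{m\in S}\kappa_m
\end{equation*}
for some coefficients $c^{(q)}_S$ depending only on $a$, $\Delta x$, and $q$.

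Once this reduction is in hand, I would apply the classical Boolean-Fourier orthogonality argument to each monomial:
\begin{equation*}
\sum_{\kappa\in\{0,1\}^n}(-1)^{\sum_m i_m\kappa_m}\prod_{m\in S}\kappa_m \;=\; \prod_{m\in S}\Bigl(\sum_{\kappa_m\in\{0,1\}}(-1)^{i_m\kappa_m}\kappa_m\Bigr)\prod_{m\notin S}\Bigl(\sum_{\kappa_m\in\{0,1\}}(-1)^{i_m\kappa_m}\Bigr).
\end{equation*}
The factor for $m\notin S$ equals $2\delta_{i_m,0}$, so the whole product vanishes unless $\mathrm{supp}(i)\subseteq S$. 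Combined with $\abs{S}\leq q$, this forces $\abs{i}_b\leq q\leq\chi$ as a necessary condition for the contribution to be nonzero. Summing over $q=0,\dots,\chi$ and weighting by $\alpha_q$ preserves this support condition, proving the claim.

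The whole argument is essentially routine Boolean-Fourier analysis; the only mildly delicate point is the multilinearization step, where one must be careful in the bookkeeping of the binomial/multinomial expansion of $\kappa^j$ to confirm that the collapse via $\kappa_m^2=\kappa_m$ does not inflate the degree beyond $q$. I expect this to be the main (albeit modest) obstacle, since it is where the degree bound $\chi$ on the support of the output is actually inherited.
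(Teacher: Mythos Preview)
Your proof is correct. The paper itself does not supply a proof of this theorem; it is stated with a citation to Appendix~A of \cite{guseynov2023depth}, so there is no in-paper argument to compare against. Your approach---multilinearizing $(a+\kappa\Delta x)^q$ in the bits $\kappa_m$ via $\kappa_m^2=\kappa_m$, then invoking the factorized Boolean-Fourier orthogonality $\sum_{\kappa_m}(-1)^{i_m\kappa_m}=2\delta_{i_m,0}$ on each coordinate outside the monomial's support---is the natural and standard one for this kind of claim, and your identification of the multilinearization step as the place where the degree bound is inherited is exactly right.
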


We notice that the equation for $\beta_i$ generally requires exponential resources of classical computer to be numerically find. However, we stress out that in this work we use Coordinate-superposition-unitary only for the first degree ($q=1$). The $\beta_i$ can be computed analytically when $q$ is a small number. The following Theorem shows an explicit quantum circuit for constructing $U_{x^q}$.

\begin{theorem}[Finite binary norm state preparation]
        Let $\ket{\psi}^n=\sum_{\abs{i}_b\leq q}\beta_i\ket{i}^n$ be an arbitrary normalized superposition of $n$-qubit computational basis states with binary norm no greater than $q\in\mathbb{N}$ with $\beta_i\in\mathbb{R}$ then the state preparation unitary 
        \[ A_\psi\ket{0}^n=\ket{\psi}^n\]
        can be implemented using: 
        \begin{enumerate}
            \item $\sum_{i=0}^{n-1}\sum_{s=i}^{\min\{q+i-1,n-1\}}\binom{i}{s} A(i)$ one-qubit operations;
            \item $\sum_{i=0}^{n-1}\sum_{s=i}^{\min\{q+i-1,n-1\}}\binom{i}{s} B(i)$ C-NOTs;
            \item $n-2$ pure ancillas;
        \end{enumerate}
        where
        \[ A(i)=\left\{
      \begin{array}{lll}
        1, & \text{if } i = 0, \\
        2, & \text{if } i=1,\\
        16i-14, & \text{if } i\geq 2
      \end{array}
    \right.\qquad  B(i)=\left\{
      \begin{array}{lll}
        0, & \text{if } i = 0, \\
        2, & \text{if } i=1,\\
        12i-10, & \text{if } i\geq 2
      \end{array}
    \right.\qquad \binom{i}{s}=\frac{s!}{i!(s-i)!}.\]
        \label{theorem Finite binary norm Ansatz}
    \end{theorem}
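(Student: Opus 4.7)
The plan is to construct $A_\psi$ by a qubit-sequential circuit that, at each level $s$, applies a bundle of multi-controlled $R_y$ gates whose controls are set on the $\ket{0}$ branch and whose angles are fixed by a restricted Möbius inversion; this structure exploits the weight constraint $|i|_b\leq q$ to keep both the number of controls per gate and the number of gates small. Processing qubits $s=0,1,\dots,n-1$ in order, the rotations on qubit $s$ must realise, conditional on every prefix $b\in\{0,1\}^s$, the correct conditional amplitude on qubit $s$. Because $\ket{\psi}^n$ is supported on $|i|_b\leq q$, the required rotation angle $\theta_b$ is arbitrary when $|b|_b\leq q-1$ and is identically zero when $|b|_b=q$ (the weight budget is then exhausted, so qubit $s$ must remain $\ket{0}$).

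At level $s$, for every subset $S\subseteq\{0,\dots,s-1\}$ with $|S|\geq s-q+1$, I would apply a $|S|$-controlled $R_y(2\alpha_S)$ on qubit $s$ with all controls in the $\ket{0}$ branch. Such a gate fires exactly on those prefixes $b$ whose $1$-positions lie in $S^c$, and since $|S^c|\leq q-1$, it touches only patterns of weight $\leq q-1$. For a weight-$q$ prefix one has $|\mathrm{zeros}(b)|=s-q<|S|$ for every $S$ in the bundle, so no gate in the bundle fires and the zero-rotation constraint is enforced automatically. The angles $\alpha_S$ are determined by the square linear system $\theta_b=\sum_{S\subseteq\mathrm{zeros}(b)}\alpha_S$, a restricted zeta transform on the Boolean sub-lattice of size-$\geq s-q+1$ subsets; this system is invertible because it is block-triangular in the subset-size grading with an identity block at minimum size, and Möbius inversion gives $\alpha_S=\sum_{T\subseteq S,\,|T|\geq s-q+1}(-1)^{|S|-|T|}\theta_{b(T)}$, where $b(T)$ is the prefix with zero-set $T$.

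For the resource count, at level $s$ there are $\binom{s}{s-j}=\binom{s}{j}$ subsets with $|S|=s-j$ for $j=0,\dots,\min(s,q-1)$; since the bundled gates share the same target qubit and are all $R_y$ rotations, they mutually commute and may be applied in any order. Each $k$-controlled $R_y$ costs $A(k)$ one-qubit operations and $B(k)$ CNOTs by the explicit construction in Fig.~\ref{fig:Multi_control_qc}(c), and the deepest gate (at level $s=n-1$ with $n-1$ controls) uses $n-2$ pure ancillas, reused across all gates and all levels. Reindexing via $i=s-j$ rewrites $\sum_{s=0}^{n-1}\sum_{j=0}^{\min(s,q-1)}\binom{s}{j}A(s-j)$ as the claimed $\sum_{i=0}^{n-1}\sum_{s=i}^{\min(q+i-1,n-1)}\binom{s}{i}A(i)$, and identically with $B$ in place of $A$.

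The hardest step will be the second: verifying the invertibility of the restricted zeta matrix so that the $\alpha_S$ are well defined for every real target profile $\{\theta_b\}$, checking that weight-$q$ prefixes receive zero net rotation automatically, and confirming the mutual commutativity of the bundled gates so that their order of application is irrelevant. With these three facts in hand, the gate-count arithmetic reduces to a combinatorial identity and the ancilla bookkeeping is inherited directly from the standard multi-controlled-gate construction.
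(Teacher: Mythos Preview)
Your proposal is correct and follows essentially the same approach as the paper: a qubit-sequential circuit of multi-controlled $R_y$ gates with $\ket{0}$-controls, whose angles are fixed by M\"obius inversion on the subset lattice, with the Hamming-weight constraint $|i|_b\le q$ bounding the control-set sizes and hence the gate count. The paper reaches the same circuit in two stages---first a fully-controlled ``$\theta$-circuit'' with explicit angle formulas derived from amplitude ratios, then a simplification to the $\ket{0}$-control-only ``$\omega$-circuit'' via the same M\"obius relation you invoke---whereas you argue the invertibility abstractly and skip the intermediate stage; both routes yield the identical resource count.
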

    \begin{proof}
        An explicit implementation of $A_\psi$ and a rule of choosing parameters of the quantum circuit are presented in the Appendix \ref{appendix Finite binary norm Ansatz}. The expressions for complexity is quite cumbersome, that's why we provide asymptotic with $n\gg q$, $q<50$, the gate complexity becomes: (i) $\frac{16n^{q+1}}{q!}$ one-qubit operations, (ii) $\frac{12n^{q+1}}{q!}$ C-NOTs.
    \end{proof}

The Theorems \ref{theorem:Walsh_Hadamard_poly},\ref{theorem Finite binary norm Ansatz} allow us to build a Coordinate-superposition-unitary $U_{x}$:
    \begin{equation}
        U_{x}\ket{0}^n=\sum_{i=0}^{2^n-1}\hat{x}_{ii}\ket{i}^n,
    \label{eq:ansatz X}
    \end{equation}
we remind that the normalization constant is assumed to be $1$ in this paper.

\subsubsection{From a superposition unitary to Amplitude-oracle}

In this subsection we exploit the technique from \cite{rattew2023non,guo2021nonlinear} to transform $U_x$ to oracle $\hat{O}_x$
\begin{eqnarray}
    \hat{O}_x\ket{0}^1\ket{i}^n=\frac{1}{\sqrt{2}}\hat{x}_{ii}\ket{0}^1\ket{i}^n+J_{x}\ket{\bot_0}^{n+1},
\end{eqnarray}
where $\ket{\bot_0}^{n+1}$ is orthogonal to $\ket{0}^1\ket{i}^n$ for any $i=0,1,2,\dots,2^n-1$.

We begin our algorithm from the discussion of the work \cite{rattew2023non}. Given a unitary $U_S$
    \begin{equation}
        U_S\ket{0}^h=\sum_{i=0}^{2^{h}-1}\psi^S_j\ket{j}=:\ket{\psi^S};\qquad \psi^S_j\in\mathbb{C};\qquad \sum_{j=0}^{2^n-1}\abs{\psi^S_j}^2=1;
    \label{eq:rebentrost unitary}
    \end{equation}
    authors suggest an algorithm to build a $(1,h+2,0)$ block-encoding \cite{dalzell2023quantum} of the $A^{S}_{(p)}:=diag(Re((-i)^p\psi^S_0), Re((-i)^p\psi^S_1),$ $ \dots, Re((-i)^p\psi^S_{2^h-1}))$, see Lemma 6 from \cite{rattew2023non}. We show the direct view of the block-encoding and all the operators that are used in Eq.~\ref{eq:rebentrost block encoding unitary}
    \begin{eqnarray}
        \begin{gathered}
        U^{(S)}_{(p)}:=(XZX\otimes I^{\otimes 2h+1})(H_W\otimes W^\dagger_p)(\ket{0}^1\bra{0}^1\otimes G_p+\ket{1}^1\bra{1}^1\otimes G_p^\dagger)(H_W\otimes W_p);\\
        \tilde{Z}:=I^{\otimes h}\otimes Z\otimes I^{\otimes h};\qquad \tilde{H}_W:=I^{\otimes h}\otimes H_W\otimes I^{\otimes h};\qquad \tilde{S}:=I^{\otimes h}\otimes S\otimes I^{\otimes h};\\
        R:=(I^{\otimes h+1}-2\ket{0}^{h+1}\bra{0}^{h+1})\otimes I^{\otimes h};\\
        U_C^S:=(U_S\otimes\ket{0}^1\bra{0}^1+I^{\otimes h}\otimes\ket{1}^1\bra{1}^1)\otimes I^{\otimes h};\\
        U_{Copy}:=I^{\otimes h}\otimes\ket{0}^1\bra{0}^1\otimes I^{\otimes h}+\sum_{k,j=0}^{2^h}\ket{j\oplus   k}^h\bra{j}^h\otimes\ket{1}^1\bra{1}^1\otimes\ket{k}^h\bra{k}^h;\\
        W_p:=\tilde{H}_W\tilde{S}^pU_{Copy}U_C^S\tilde{H}_W;\qquad G_P:=W_pRW_p^\dagger\tilde{Z},\\
         \end{gathered}
        \label{eq:rebentrost block encoding unitary}
    \end{eqnarray}
    where $S=\left
    (
    \begin{array}{cc}
    1 & 0 \\
    0 & i 
    \end{array}
    \right)$, $p=1/0$ means that $\tilde{S}$ is applied/not applied. A construction of all those operators with one and two qubit operations is well described in the article. Let us now consider the following lemma.
    
    \begin{lemma}[Eigensystem of $-\frac{1}{2}W_p^\dagger(G_p+G_p^\dagger)W_p$ (Lemma $5$ from the \cite{rattew2023non})]
    Let $p=\{0,1\}$. The operator $-\frac{1}{2} W_p^\dagger$ $ (G_p + G_p^\dagger) W_p$ has eigenvectors $\ket{0}^h\ket{0}^1\ket{k}^h$ with associated eigenvalue $Re(\psi^S_k)$ if $p=0$ and $Im(\psi^S_k)$ if $p=1$.  
    \end{lemma}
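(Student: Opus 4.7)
The plan is to set $v_k := \ket{0}^h\ket{0}^1\ket{k}^h$ and $\phi_k := W_p v_k$, and to reduce the claim to showing that $-\tfrac{1}{2}(G_p + G_p^\dagger)\phi_k = \lambda_k \phi_k$ with $\lambda_k = \mathrm{Re}((-i)^p \psi^S_k)$; conjugating by $W_p^\dagger$ then recovers the lemma. First I would compute $\phi_k$ explicitly by applying $\tilde H_W$, $U_C^S$, $U_{Copy}$, $\tilde S^p$, $\tilde H_W$ to $v_k$ in sequence. The first Hadamard puts the middle qubit into $\ket{+}$; $U_C^S$ creates two branches (middle $=\ket{0}$ runs $U_S$ on the first register, middle $=\ket{1}$ leaves it untouched); $U_{Copy}$ XORs $k$ from the third register into the first register on the middle $=\ket{1}$ branch, turning the first register into $\ket{k}^h$ there; $\tilde S^p$ installs the phase $i^p$ on that branch; and the final $\tilde H_W$ mixes the middle qubit once more. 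The net result is
\begin{equation*}
\phi_k \;=\; \tfrac{1}{2}\sum_j \psi^S_j \ket{j}^h(\ket{0}^1+\ket{1}^1)\ket{k}^h \;+\; \tfrac{i^p}{2}\,\ket{k}^h(\ket{0}^1-\ket{1}^1)\ket{k}^h.
\end{equation*}

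The decisive observation is that $R v_k = -v_k$, since the projector $\Pi_0 := \ket{0}^{h+1}\bra{0}^{h+1}\otimes I^{\otimes h}$ acts as the identity on $v_k$. Writing $R_1 := W_p R W_p^\dagger$, this yields $R_1 \phi_k = W_p R v_k = -\phi_k$. Because $G_p = R_1 \tilde Z$ and, by Hermiticity of $R$ and $\tilde Z$, $G_p^\dagger = \tilde Z R_1$, we obtain
\begin{equation*}
(G_p+G_p^\dagger)\phi_k \;=\; R_1 \tilde Z \phi_k - \tilde Z \phi_k \;=\; (R_1 - I)\tilde Z \phi_k \;=\; -2\,W_p \Pi_0 W_p^\dagger \tilde Z \phi_k,
\end{equation*}
using $R_1 - I = -2\,W_p \Pi_0 W_p^\dagger$. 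Hence the whole calculation reduces to evaluating $\Pi_0 W_p^\dagger \tilde Z \phi_k$.

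Expanding $\Pi_0$ in the basis $\{v_{k'}\}_{k'}$, this projection equals $\sum_{k'}\langle \phi_{k'} | \tilde Z | \phi_k\rangle\, v_{k'}$. Because $\tilde Z$ touches only the middle qubit and the explicit form of $\phi_k$ has its third register pinned to $\ket{k}^h$ in both summands (precisely what $U_{Copy}$ arranges), orthogonality on the third register gives $\langle\phi_{k'}|\tilde Z|\phi_k\rangle = \delta_{k'k}\langle\phi_k|\tilde Z|\phi_k\rangle$. A short calculation using the $\ket{\pm}$ structure on the middle qubit kills the two diagonal blocks of $\bra{\phi_k}\tilde Z\ket{\phi_k}$ and leaves only the two cross terms:
\begin{equation*}
\langle\phi_k|\tilde Z|\phi_k\rangle \;=\; \tfrac{1}{2}\bigl(i^p(\psi^S_k)^* + (-i)^p \psi^S_k\bigr) \;=\; \mathrm{Re}\bigl((-i)^p \psi^S_k\bigr),
\end{equation*}
which evaluates to $\mathrm{Re}(\psi^S_k)$ for $p=0$ and $\mathrm{Im}(\psi^S_k)$ for $p=1$. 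Substituting back yields $-\tfrac{1}{2}(G_p+G_p^\dagger)\phi_k = \lambda_k \phi_k$, and applying $W_p^\dagger$ finishes the argument.

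The main obstacle is simply the bookkeeping: carrying through the five-step circuit to arrive at the explicit form of $\phi_k$, and then recognizing that the role of $U_{Copy}$ is precisely to pin the third register of $\phi_k$ to $\ket{k}^h$, which is what makes $W_p^\dagger \tilde Z W_p$ block-diagonal in the last-register label $k$ and collapses the product of two reflections $G_p = R_1 \tilde Z$ into a scalar action on each $v_k$.
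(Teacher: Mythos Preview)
The paper does not actually prove this lemma; it is quoted verbatim from \cite{rattew2023non} and used as a black box, so there is no in-paper argument to compare against. Your proof is correct: the computation of $\phi_k = W_p\ket{0}^h\ket{0}^1\ket{k}^h$ is accurate, the reduction via $R_1 - I = -2W_p\Pi_0W_p^\dagger$ together with $R_1\phi_k = -\phi_k$ cleanly isolates the single overlap $\langle\phi_k|\tilde Z|\phi_k\rangle$, and the cross-term calculation yielding $\mathrm{Re}((-i)^p\psi_k^S)$ is right. The only point worth noting explicitly is that the orthogonality $\langle\phi_{k'}|\tilde Z|\phi_k\rangle = \delta_{k'k}\langle\phi_k|\tilde Z|\phi_k\rangle$ relies on the fact that $\tilde Z$ acts trivially on the third register, which you use but do not spell out; otherwise the argument is complete and self-contained.
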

    
    Since the operation $-\frac{1}{2}W_p^\dagger(G_p+G_p^\dagger)W_p$ is supposed to act on superposition of states like $\ket{0}^h\ket{0}^1\ket{k}^h$ it is clear that at the end of operation the auxiliary registers are set to zero-state. Consequently, these $h+1$ auxiliary qubits should be considered as pure ancillas because we do not have to measure those qubits to get the desired action of $A^S_{(p)}$; in opposite, we can use it further for some computations. The only qubit that is needed to be measured is that one that is used to create the linear combination of unitary $G_p+G_p^\dagger$. Thus, we state that $U^{(S)}_{(p)}$ is a (1,1,0) block-encoding of $A^S_{(p)}$ which uses $h+1$ pure ancillas.
    
    \begin{theorem}[Diagonal block-encoding of amplitudes (Modified theorem 2 from the \cite{rattew2023non}]\label{theorem:Diagonal block encoding of amplitudes}
        Given an $h$-qubit quantum state specified by a state-preparation-unitary $U_S$, such that $\ket{\psi^s}^h=U_S\ket{0}^h=\sum_{j=0}^{2^h-1}\psi_j^S\ket{j}^h$ (with $\psi_J^S\in\mathbb{C}$), we can prepare a $(2,2,0)$ - block-encoding $U^\prime_A$ of the $A^\prime=diag(\psi^S_0,\dots,\psi_{2^h-1}^S)$ using: 
        \begin{enumerate}
            \item $6$ Applications of $U^S_C$;
            \item $528h-463$ one-qubit operations;
            \item $428h-378$ C-NOTs;
            \item $2h-1$ pure ancillas.
        \end{enumerate} 
    \end{theorem}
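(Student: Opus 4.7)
The plan is to obtain the complex-valued diagonal block-encoding $U'_A$ by combining two real-valued block-encodings already produced by the preceding construction via a Linear Combination of Unitaries (LCU). The lemma quoted just above (Lemma 5 from \cite{rattew2023non}) establishes that the unitary $U^{(S)}_{(p)}$ defined in Eq.~\eqref{eq:rebentrost block encoding unitary} is a $(1,1,0)$-block-encoding of $A^S_{(p)}$; instantiating at $p=0$ gives $\mathrm{diag}(\mathrm{Re}\,\psi^S_j)$ and at $p=1$ gives $\mathrm{diag}(\mathrm{Im}\,\psi^S_j)$. Since $A' = A^S_{(0)} + i\,A^S_{(1)}$ has equal-modulus coefficients $\{1,i\}$, the natural route is the standard LCU with a single extra auxiliary qubit prepared by a Hadamard, an $S$-gate on that qubit to supply the factor of $i$, and a Hadamard to unprepare.

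I would first confirm correctness. Starting from $\ket{0}^{\mathrm{LCU}}\ket{0}^{\mathrm{BE}}\ket{\psi}$, after the prepare-$H$, the SELECT routing $\ket{0}$ through $U^{(S)}_{(0)}$ and $\ket{1}$ through a copy of $U^{(S)}_{(1)}$ premultiplied by an $S$ on the LCU qubit, and the unprepare-$H$, a direct calculation projecting both ancilla registers onto $\ket{0}$ yields $(A^S_{(0)} + i\,A^S_{(1)})\ket{\psi}/2 = A'\ket{\psi}/2$, which is precisely the defining relation of a $(2,2,0)$-block-encoding of $A'$. The ancilla count of $2$ follows because the inner block-encoding ancilla is shared by the two branches while the LCU qubit is the only additional register that must later be post-selected on $\ket{0}$; the internal pure ancillas used by $W_p$ and by the multi-controlled reflection $R$ are reused between the branches, yielding the stated $2h-1$.

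The gate count follows by summing the contributions of both invocations of $U^{(S)}_{(p)}$, noting that each invocation threads $U^S_C$ through the outer $W_p$, the outer $W_p^\dagger$, and the SELECT on $G_p/G_p^\dagger$. Aggregating across $p\in\{0,1\}$ and absorbing the controlled wrapping imposed by the LCU qubit into the existing ancilla structure produces the quoted $6$ applications of $U^S_C$. The remaining overhead --- the Hadamard layers $\tilde H_W$, the Clifford diagonals $\tilde S^p$ and $\tilde Z$, the copy circuit $U_{Copy}$, the multi-controlled reflection $R$ built from a Toffoli ladder using the construction of Fig.~\ref{fig:Multi_control_qc}(c), and the outer $XZX$ correction --- is tallied directly from the explicit circuit and multiplied by the number of invocations to reach $528h-463$ single-qubit gates and $428h-378$ C-NOTs.

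The main obstacle is exactly this bookkeeping rather than the high-level strategy. In particular, I expect careful attention to be needed for (i) deciding which $U^S_C$ calls can be merged or cancelled when the two block-encodings are jointly controlled by the LCU qubit, and (ii) correctly propagating the pure-ancilla cost of the multi-controlled Toffoli implementing $R$ through the inversion structure of $G_p = W_p R W_p^\dagger \tilde Z$ without double-counting when $G_p^\dagger$ is invoked. These counts are mechanical once the diagram is fixed but easy to miscount, so I would cross-check the stated constants at small $h$ (say $h=1,2$) by enumerating all gates in the compiled circuit and comparing with the closed-form expressions in the theorem statement.
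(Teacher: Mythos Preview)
Your proposal is correct and follows essentially the same route as the paper: combine the two real-valued block-encodings $U^{(S)}_{(0)}$ and $U^{(S)}_{(1)}$ via the linear combination of unitaries $U^{(S)}_{(0)}+iU^{(S)}_{(1)}$ (with one extra LCU qubit) to obtain the complex diagonal, and then tally gates directly from the explicit circuit in Eq.~\eqref{eq:rebentrost block encoding unitary}. The paper's proof additionally records the specific convention used in the bookkeeping, namely that each control-$U^S_C$ costs two Toffoli plus one $U^S_C$ plus one pure ancilla, and each controlled one-qubit gate costs two one-qubit gates and two C-NOTs; these are exactly the accounting rules you would need to reach the stated constants, so your plan to verify at small $h$ is well placed.
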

    \begin{proof}
        Using the linear combination of unitaries $U^{S}_{(0)}+iU^{S}_{(1)}$ combining real and imaginary parts of $\psi^S_j$ we can construct the desired block-encoding of $A^\prime$. The particular number of operations is computed from Eq.~\ref{eq:rebentrost block encoding unitary}. Thus, we count that each control-$U^S_C$ can be implemented using: (i) two Toffoli, (ii) one $U^S_C$, (iii) one pure ancilla, where we assumed that each control-one qubit operation can be implemented using: (i) $2$ one-qubit operations, (ii) $2$ C-NOTs.
    \end{proof}

\begin{corollary}[Amplitude-oracle for $\hat{x}$]
\label{Corollary: x-oracle}
Let $\hat{x}$ be a $n$-qubit matrix as in \ref{eq:x_p_cyclic_matrices} with $\sum_{\kappa=0}^{2^n-1} (a+\kappa \Delta x)^2=1$. Then we can construct $(\sqrt{2},1,0)$-block-encoding $\hat{O}_x$
\[ \hat{O}_x\ket{0}^1\ket{i}^n=\frac{1}{\sqrt{2}}\hat{x}_{ii}\ket{0}^1\ket{i}^n+J_{x}\ket{\bot_0}^{n+1},\]
with resources no greater than
\begin{enumerate}
    \item $2304n^2-1064n-109$ one-qubit operations;
    \item $1864n^2-860n-92$ C-NOTs;
    \item $2n-1$ pure ancillas.
\end{enumerate}

\end{corollary}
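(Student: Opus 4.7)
The plan is to apply the amplitude-to-diagonal-block-encoding construction of Theorem~\ref{theorem:Diagonal block encoding of amplitudes} to the Coordinate-superposition-unitary $U_x$ from Eq.~\eqref{eq:ansatz X}, but specialised to the real-part-only variant, which is what allows the block-encoding ancilla count to drop from $2$ to $1$ and the overall normalisation from $2$ to $\sqrt{2}$ compared with the generic complex case. The construction therefore has two phases: build $U_x$, then wrap it inside a suitably pruned version of $U^{(S)}_{(0)}$ from Eq.~\eqref{eq:rebentrost block encoding unitary}.

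For the first phase I would exploit the observation that $\hat{x}_{ii} = a + i\Delta x$ is an affine-linear function of the bit-index $i$, so Theorem~\ref{theorem:Walsh_Hadamard_poly} with $\chi = 1$ guarantees that $H_W^{\otimes n} U_x |0\rangle^n$ is supported only on the $n+1$ computational basis states $|i\rangle^n$ of binary norm $|i|_b \leq 1$, with amplitudes computable in closed form from $a$ and $\Delta x$. I would prepare this sparse state via Theorem~\ref{theorem Finite binary norm Ansatz} with $q=1$ and then apply $H_W^{\otimes n}$ to recover $U_x|0\rangle^n = \sum_i \hat{x}_{ii}|i\rangle^n$. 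For $q=1$ the double sum in that theorem collapses to $\sum_{i=0}^{n-1} A(i) = 8n^2 + O(n)$ one-qubit operations and $\sum_{i=0}^{n-1} B(i) = 6n^2 + O(n)$ C-NOTs, while consuming $n-2$ pure ancillas.

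For the second phase I would substitute $U_S := U_x$ into $U^{(S)}_{(0)}$ from Eq.~\eqref{eq:rebentrost block encoding unitary} with $p=0$. Because the diagonal entries $\psi^S_k = \hat{x}_{kk}$ are real, the imaginary-part branch $U^{(S)}_{(1)}$ and the LCU combining the two branches can be dropped, yielding the stated $(\sqrt{2},1,0)$-block-encoding on one block-encoding ancilla together with $n+1$ auxiliary pure ancillas supplied by the Rattew construction. Each of the constant number of controlled-$U_x$ invocations inside $U^{(S)}_{(0)}$ promotes every single-qubit gate of $U_x$ to a controlled single-qubit gate ($2$ one-qubit plus $2$ C-NOTs) and every C-NOT of $U_x$ to a Toffoli ($8$ one-qubit plus $6$ C-NOTs), multiplying the $O(n^2)$ leading order by fixed constants; combining the $n-2$ pure ancillas of $U_x$ with the $n+1$ of the outer construction gives the claimed $2n-1$.

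The hard part is not conceptual but bookkeeping: one must carefully propagate (i) the $O(n^2)$ state-preparation cost through the number of forward and inverse controlled invocations inside $U^{(S)}_{(0)}$ with the per-gate promotion overhead above, (ii) the $O(n)$ additive contribution from each $W_0$, $G_0 = W_0 R W_0^\dagger \tilde Z$, $U_{Copy}$, the Walsh-Hadamards, and the $XZX$ dressing, and (iii) the exact constants coming from Theorems~\ref{theorem Finite binary norm Ansatz} and~\ref{theorem:Diagonal block encoding of amplitudes}. Matching the advertised coefficients $2304$ and $1864$ in the quadratic terms is where the bulk of the routine calculation lives, and it is the step most prone to arithmetic slips.
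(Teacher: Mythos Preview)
Your proposal is correct and follows essentially the same approach as the paper: build $U_x$ via Theorems~\ref{theorem:Walsh_Hadamard_poly} and~\ref{theorem Finite binary norm Ansatz} with $q=1$, then feed it into the real-part-only specialisation of Theorem~\ref{theorem:Diagonal block encoding of amplitudes}, which drops the block-encoding ancilla count from $2$ to $1$ and the normalisation from $2$ to $\sqrt{2}$. The paper's own proof is terser---it simply cites Theorem~\ref{theorem:Diagonal block encoding of amplitudes} applied to $U_x$ and invokes reality of $\hat{x}$ without spelling out the gate-promotion bookkeeping you describe---so your write-up is if anything more detailed than what appears there.
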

\begin{proof}
    We note that to build the controlled version of some unitary having its decomposition to one-qubit gates and C-NOTs we turn all C-NOTs to Toffoli gates, one-qubit rotation to its controlled version (2 C-NOTs, 2 one-qubit rotations). For more details we address the book \cite{nielsen2002quantum}. Hereafter, we always use this fact.

    The Oracle $\hat{O}_x$ can be implemented using the Theorem~\ref{theorem:Diagonal block encoding of amplitudes} applied to $U_x$ from the Eq~\ref{eq:ansatz X} and taking into account that $x$ is real. That's why the number of additional qubits in the block-encoding is $1$ against $2$ in the Theorem~\ref{theorem:Diagonal block encoding of amplitudes}.

\end{proof}

\subsubsection{From $x$ to a polynomial}

In this subsection we use the trick called qubitization \cite{gilyen2019quantum} to transform $\hat{O}_x$ to $\hat{O}_{P_q}$. The main building block in this trick is so called Alternating phase modulation sequence. This unitary operation is a key instrument for building $\hat{O}_{P_q}$.

\begin{definition}[Alternating phase modulation sequence]
    \label{def:Alternating phase modulation sequence}
    Let $U$ be a $(1,m,0)$-block-encoding of hermitian matrix $A$ such that
    \[A=(\ket{0}^m\bra{0}^m\otimes I^{\otimes n})U(I^{\otimes n}\otimes\ket{0}^m\bra{0}^m);\]
    let $\Phi\in\mathbb{R}^q$, then we define the $q$-phased alternating sequence $U_\Phi$ as follows
    \[
U_{\Phi} := 
\begin{cases} 
e^{i \phi_1 (2 \Pi - I)} U \prod_{j=1}^{(q-1)/2} \left( e^{i \phi_{2j} (2 \Pi - I)} U^\dagger e^{i \phi_{2j+1} (2 \Pi - I)} U \right) & \text{if } q \text{ is odd, and} \\
\prod_{j=1}^{q/2} \left( e^{i \phi_{2j-1} (2 \Pi - I)} U^\dagger e^{i \phi_{2j} (2 \Pi - I)} U \right) & \text{if } q \text{ is even,}
\end{cases}
\]
where $2\Pi-I=(2\ket{0}^m\bra{0}^m-I^{\otimes m})\otimes I^{\otimes n}$.
\end{definition}

Now we apply the Alternating phase modulation sequence on $\hat{O}_x$. In this case $m=1$, $e^{i \phi (2 \Pi - I)}$ turns to implementing the one-qubit rotation $R_z(-2\phi)$. The quantum circuit for $U_\Phi$ for this case is depicted in Fig.~\ref{fig:U_phi}.
\begin{equation}
    R_z(\theta) = \begin{pmatrix}
    e^{-i \frac{\theta}{2}} & 0 \\
    0 & e^{i \frac{\theta}{2}}
    \end{pmatrix}.
\label{eq:Rz}
\end{equation}

\begin{figure}[h]
    \includegraphics[width=0.7\textwidth]{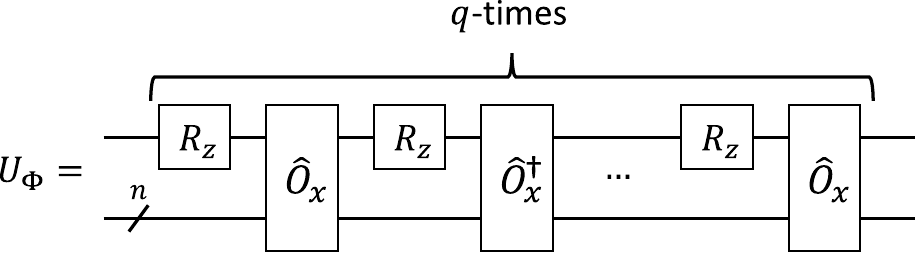}
    \caption{Alternating phase modulation sequence $U_\Phi$ for $\hat{O}_x$ with odd $q$.}
    \label{fig:U_phi}
    \end{figure}

Now we have all the ingredients to build the Coordinate-polynomial-oracle $\hat{O}_{P_q}$ using singular value transformation.

\begin{theorem}[Singular value transformation by real polynomials (Corollary 18 from \cite{gilyen2019quantum})]
\label{theorem: Singular value transformation by real polynomials}
    Let $P_{q}(x)$ be a degree-$q$ polynomial as in Definition~\ref{def. Coordinate-polynomial-oracle} and $U_\Phi$ is the Alternating phase modulation sequence as in Definition~\ref{def:Alternating phase modulation sequence} based on Amplitude-oracle for $n$-qubit matrix $\hat{x}$ as in Corollary~\ref{Corollary: x-oracle}. Then there exist $\Phi\in\mathbb{R}^n$, such that
    \[
    \hat{O}_{P_q} = \left(H_W\otimes I^{\otimes n+1} \right) \left( |0\rangle \langle 0| \otimes U_{\Phi} + |1\rangle \langle 1| \otimes U_{-\Phi} \right)\left(H_W\otimes I^{\otimes n+1} \right),
    \]
    where $\hat{O}_{P_q}$ is Coordinate-polynomial-oracle as in Definition~\ref{def. Coordinate-polynomial-oracle}
    \[\hat{O}_{P_q}\ket{0}^2\ket{i}^n=(P_q(\hat{x}))_{ii}\ket{0}^2\ket{i}^n +J^{(i)}_{P_q}\ket{\bot_0}^{n+2}.\]
    The total complexity of construction $\hat{O}_{P_q}$ does not exceed:
    \begin{enumerate}
        \item $2304qn^2-1064qn-108q$ one-qubit operations;
        \item $1864qn^2-860qn-92q$ C-NOTs;
        \item $2n-1$ pure ancillas.
    \end{enumerate}
    Moreover, given $P_q(x)$ and $\delta \geq 0$ we can find a $P^\prime_q(x)$ and a corresponding $\Phi$, such that $|P^\prime_q(x) - P_q(x)| \leq \delta$ for all $x \in [-1,1]$, using a classical computer in time $\mathcal{O}(\text{poly}(q, \log(1/\delta)))$.
\end{theorem}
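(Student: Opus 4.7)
The plan is to invoke the quantum singular value transformation (QSVT) correspondence of \cite{gilyen2019quantum} directly on the Amplitude-oracle $\hat{O}_x$ from Corollary~\ref{Corollary: x-oracle}. The key observation is that $\hat{O}_x$ is already a $(\sqrt{2},1,0)$-block-encoding of the diagonal Hermitian matrix $\hat{x}$; its top-left block is $\hat{x}/\sqrt{2}$, and because the spectrum of $\hat{x}$ sits inside $[-1,1]$, the rescaled spectrum lives in $[-1/\sqrt{2},1/\sqrt{2}]\subset[-1,1]$, so QSVT applies without any further preprocessing.

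First I would feed $U:=\hat{O}_x$ into the alternating phase modulation sequence of Definition~\ref{def:Alternating phase modulation sequence}. By the standard quantum signal processing lifting to QSVT, for any $\Phi\in\mathbb{R}^q$ the top-left block of $U_\Phi$ is a complex polynomial $\tilde P(\hat{x}/\sqrt 2)+i\tilde Q(\hat{x}/\sqrt 2)$ of degree at most $q$ and definite parity, with $\tilde P$ fully determined by $\Phi$. The Hadamard-flanked LCU written in the theorem then averages $U_\Phi$ with $U_{-\Phi}$ (which encodes the conjugate polynomial $\tilde P-i\tilde Q$), cancels the imaginary part, and yields a $(1,2,0)$-block-encoding of $\tilde P(\hat{x}/\sqrt 2)$; the extra LCU qubit is precisely what accounts for the second ancilla on top of the single qubit of $\hat{O}_x$. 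Choosing $\Phi$ so that $\tilde P(y)=P_q(\sqrt 2\,y)$ on the rescaled spectrum makes the encoded matrix exactly $P_q(\hat{x})$, which is the claimed Coordinate-polynomial-oracle. Existence of such a $\Phi$ and its classical computation in $\mathcal{O}(\mathrm{poly}(q,\log 1/\delta))$ time is the angle-finding content of \cite{gilyen2019quantum}, reducible to a Remez-type iteration on the target polynomial.

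Gate counts then follow by straightforward composition. Each $U_\Phi$ invokes $\hat{O}_x$ (or its adjoint) exactly $q$ times, interleaved with $q+1$ single-qubit $R_z$ rotations on the Amplitude-oracle ancilla, as drawn in Fig.~\ref{fig:U_phi}; the outer LCU adds one control qubit, two Hadamards, and $\mathcal{O}(q)$ controlled phase rotations whose signs depend on the LCU qubit, all subleading compared to the $q$ copies of $\hat{O}_x$. Multiplying the per-oracle resource count from Corollary~\ref{Corollary: x-oracle} by $q$ gives the announced $2304qn^2-1064qn-108q$ one-qubit operations and $1864qn^2-860qn-92q$ C-NOTs, while the pure-ancilla count remains $2n-1$ because each call of $\hat{O}_x$ restores its ancillas to $\ket{0}$. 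The most delicate step is the rescaling bookkeeping: QSVT acts on $\hat x/\sqrt 2$ rather than on $\hat x$, so the polynomial that must be injected is $\tilde P(y)=P_q(\sqrt 2\,y)$, and one must verify its admissibility against the weaker hypothesis that $|P_q|<1$ only on $[-1,1]$. This is fine because QSVT only requires the norm bound on the actual singular-value interval of the encoded matrix, here $[0,1/\sqrt 2]\subset[-1,1]$, where $|\tilde P(y)|=|P_q(\sqrt 2 y)|<1$ is automatic from the hypothesis on $P_q$.
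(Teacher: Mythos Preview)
Your proposal is correct and follows essentially the same approach as the paper: invoke the QSVT framework of \cite{gilyen2019quantum} on the $(\sqrt 2,1,0)$-block-encoding $\hat O_x$, use the Hadamard-flanked LCU of $U_\Phi$ and $U_{-\Phi}$ to extract the real part, and obtain the gate counts by counting the $q$ calls to $\hat O_x$ inside $U_\Phi$ (the paper's own proof is just a pointer to \cite{gilyen2019quantum} plus the statement that complexities were computed from Fig.~\ref{fig:U_phi}). Your treatment of the $\sqrt 2$ rescaling is in fact more careful than the paper's, which relegates it to a one-line Remark; note also that the constant $-108q$ (rather than $-109q$) in the one-qubit count arises because the $q$ interleaved $R_z$ phases contribute $+q$ on top of the $q$ copies of $\hat O_x$, so these are not merely ``subleading'' but are exactly what closes the arithmetic.
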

\begin{proof}
    The rule of choosing $\Phi$ and the general proof is given in \cite{gilyen2019quantum}. We computed complexities using Amplitude-oracle for $\hat{x}$ (Corollary~\ref{Corollary: x-oracle}) as an input for $U_\Phi$ depicted in Fig.~\ref{fig:U_phi}.
\end{proof}

\begin{remark}
    In fact, we create 
    \[\hat{O}_{P_q}\ket{0}^2\ket{i}^n=(P_q(\hat{x}/\sqrt{2}))_{ii}\ket{0}^2\ket{i}^n +J^{(i)}_{P_q}\ket{\bot_0}^{n+2};\]
    however, we always can redefine polynomials to fit the simple view (\ref{def. Coordinate-polynomial-oracle}).
\end{remark}

\subsection{Sparse-amplitude-oracle for momentum operator}

In this subsection we achieve another important milestone toward a block-encoding of (\ref{eq:main_Hamiltonian}) which is Sparse-amplitude-oracle for the operator $\hat{p}^m$
    \begin{equation}
        \hat{O}^S_{p^m}\ket{0}^1\ket{s}^l:=\frac{(\hat{p}^m)^{(s)}}{\sqrt{\mathcal{N}_{p^m}}}\ket{0}^1\ket{s}^l+\sqrt{1-\frac{\abs{(\hat{p}^m)^{(s)}}^2}{\mathcal{N}_{p^m}}}\ket{1}^1\ket{s}^l,
        \label{momentum oracle}
    \end{equation}
     where $l$ is logarithm of sparsity $l=\lceil \log_2 (m+1)\rceil$ for $\hat{p}^m$ in the form (\ref{eq:x_p_cyclic_matrices}); $\mathcal{N}_{p^m}\geq\norm{\hat{p}^m}^2_{\max}$. We underline here that $(\hat{p}^m)^{(s)}$ doesn't depend on a row index  since matrix $\hat{p}^m$ is sparse and banded.
    
    \begin{lemma}\label{lemma: Amplitude-oracle for momentum operator}
        Let $\hat{p}$ be a $2^n\times 2^n$ matrix as in the Eq. \ref{eq:x_p_cyclic_matrices} then the Sparse-oracle-access to the operator $\hat{p}^m$ where $2^l=m+1$ can be implemented (i) $2^l$ one-qubit operations, (ii) $2^l$ C-NOTs.
    \end{lemma}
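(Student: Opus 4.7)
The plan is to implement $\hat{O}^S_{p^m}$ as a single uniformly-controlled one-qubit rotation on the ancilla, exploiting the explicit binomial structure of the non-zero entries of $\hat{p}^m$. Reading $\hat{p}$ off Eq.~\eqref{eq:x_p_cyclic_matrices} as $C_1(S_+ - S_-)$, with $C_1$ a purely imaginary constant and $S_\pm\ket{j}=\ket{j\pm 1}$ the cyclic shifts, the binomial theorem and the identity $S_+S_-=I$ give
\[
\hat{p}^m = C_m\sum_{s=0}^{m}\binom{m}{s}(-1)^s\,S_+^{\,m-2s},\qquad C_m:=C_1^{m},
\]
so every row contains exactly $m+1 = 2^l$ non-zero entries $(\hat{p}^m)^{(s)} = C_m\,r_s$ with $r_s := (-1)^s\binom{m}{s}\in\mathbb{R}$. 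Crucially, all $2^l$ amplitudes share the \emph{same} complex prefactor $C_m$ and differ only in a real, signed, classically computable magnitude.

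Next, I would precompute the angles $\theta_s := 2\arccos\!\bigl(|C_m|\,r_s/\sqrt{\mathcal{N}_{p^m}}\bigr)$, taking the branch in $[0,2\pi)$ so that $\cos(\theta_s/2)$ carries the sign of $r_s$ while $\sin(\theta_s/2)\ge 0$. A direct check shows that the composition
\[
\bigl(\Phi_m\otimes I^{\otimes l}\bigr)\Bigl(\textstyle\sum_{s}R_y(\theta_s)\otimes\ket{s}^l\bra{s}^l\Bigr),
\]
where $\Phi_m := \mathrm{diag}(e^{i\arg C_m},1)$ is a single-qubit phase on the ancilla that installs the uniform phase of $C_m$ on the $\ket{0}$-branch only, maps each basis input $\ket{0}^1\ket{s}^l$ precisely to the right-hand side of Eq.~\eqref{momentum oracle}, because $e^{i\arg C_m}\cos(\theta_s/2) = C_m r_s/\sqrt{\mathcal{N}_{p^m}} = (\hat{p}^m)^{(s)}/\sqrt{\mathcal{N}_{p^m}}$ and $\sin(\theta_s/2) = \sqrt{1-|(\hat{p}^m)^{(s)}|^{2}/\mathcal{N}_{p^m}}$.

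Finally, I would invoke the standard Gray-code decomposition of a uniformly-controlled single-qubit rotation on $l$ controls, which implements the multiplexer $\sum_{s}R_y(\theta_s)\otimes\ket{s}^l\bra{s}^l$ with exactly $2^l$ C-NOTs and $2^l$ single-qubit $R_y$ gates; the extra phase gate $\Phi_m$ is absorbed into the final $R_y$ of the multiplexer, yielding the claimed complexity. The hard part is simply to notice that the complex prefactor $C_m$ is uniform across all $2^l$ entries and therefore reduces to a single global one-qubit phase on the ancilla, leaving a purely real multiplexer whose angles absorb the $(-1)^s$ signs through the branch choice of $\arccos$; without this observation one would naively need separate controlled-phase corrections for each sign, multiplying the gate count by a constant factor.
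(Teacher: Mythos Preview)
Your proposal is correct and follows essentially the same route as the paper: both reduce $\hat{O}^S_{p^m}$ to a uniformly-controlled $R_y$ multiplexer on the $l$-qubit sparse index and then invoke the Gray-code (M\"ott\"onen/FABLE) decomposition to obtain exactly $2^l$ C-NOTs and $2^l$ single-qubit rotations. The only cosmetic difference is the handling of the common phase $e^{i\arg C_m}=e^{-i\pi m/2}$: the paper drops it as a global phase and reinserts it later in the LCU coefficients $U_\alpha^H$, whereas you install it as a conditional phase $\Phi_m$ on the ancilla and merge it into the last single-qubit gate; both bookkeeping choices are valid and give the same gate count.
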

    
    \begin{proof}
        We design a quantum circuit for this operation and compute the complexity in the Appendix \ref{appendix_oracle_for_momentum}. We note that $\hat{p}^m_{ij}=(-1)^m\hat{p}^m_{ji}$ which means that the similar construction can be implement for the column sparsity.
    \end{proof}

\subsection{$1$-term block-encoding}

In this section we combine Coordinate-polynomial-oracle from Theorem~\ref{theorem: Singular value transformation by real polynomials} and Sparse-amplitude-oracle for momentum operator from Lemma~\ref{lemma: Amplitude-oracle for momentum operator} into the auxiliary operators
\begin{eqnarray}
    \begin{gathered}
    A_{P_qp^m}\ket{0}^3\ket{0}^n\ket{\psi}^n=\frac{(-1)^m}{\sqrt{2^l\mathcal{N}_{p^m}}}\sum\limits_{\substack{j=0,\dots,2^n-1\\ i\in F^{(m)}_j}}
    \underbrace{P_q(\hat{x}_{ii})\left(\hat{p}^m\right)_{ij}}_{\left(P_q(\hat{x})\hat{p}^m\right)_{ij}}\sigma_j\ket{0}^3\ket{i}^n\ket{j}^n+J_{P_qp^m}\ket{\bot_0}^{2n+3};\\
     A_{p^mP_q}\ket{0}^3\ket{0}^n\ket{\psi}^n=\frac{(-1)^m}{\sqrt{2^l\mathcal{N}_{p^m}}}\sum\limits_{\substack{j=0,\dots,2^n-1\\ i\in F^{(m)}_j}}
    \underbrace{\left(\hat{p}^m\right)_{ij}P_q(\hat{x}_{jj})}_{\left(\hat{p}^mP_q(\hat{x})\right)_{ij}}\sigma_j\ket{0}^3\ket{i}^n\ket{j}^n+J_{p^mP_q}\ket{\bot_0}^{2n+3};\\
    \ket{\psi}^n=\sum_{j=0}^{2^n-1}\sigma_j\ket{j}^n,
    \end{gathered}
    \label{eq:1term block encoding auxiliarry}
\end{eqnarray}
where we associate the index $j$ with columns and the index $i$ with rows; $\ket{\bot_0}^{2n+3}$ is orthogonal to $\ket{0}^3\ket{\kappa}^{2n}$ for any $\kappa=0,1,\dots,2^{2n}-1$; $F_j^{m}$ means set of non-zero elements in $j$-th column of the $\hat{p}^m$. The usefulness of these operators can be seen if we apply Hadamard operator $H_W^{\otimes n}$ on the last register which effectively creates block-encodings of $P_q(\hat{x})\hat{p}^m$ or $\hat{p}^mP_q(\hat{x})$.

\begin{lemma}[$1$-term block-encoding auxiliary]
\label{lemma: 1-term block-encoding auxiliary}

Let $P_q$ be a polynomial as in Definition~\ref{def. Coordinate-polynomial-oracle}, $\hat{x}$ and $\hat{p}$ be the $n$-qubit matrices as in Eq.~\ref{eq:x_p_cyclic_matrices}, and let $m+1=2^l$ then operators $A_{P_qp^m}$ or $A_{p^mP_q}$ can be created using 
\begin{enumerate}
        \item $2304qn^2-1064qn+32\times 2^ln+32n-108q-48\times 2^l+2^l-48$ one-qubit operations;
        \item $1864qn^2-860qn+25\times 2^ln+32n-92q-17\times 2^l-48$ C-NOTs;
        \item $2n-1$ pure ancillas.
    \end{enumerate}
\end{lemma}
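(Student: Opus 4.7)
The plan is to compose the three oracles developed in the previous subsections -- the Sparse-amplitude-oracle $\hat{O}^S_{p^m}$ (Lemma~\ref{lemma: Amplitude-oracle for momentum operator}), the Banded-sparse-access $\hat{O}^{BS}_{p^m}$ (Lemma~\ref{Banded-sparse-access lemma}), and the Coordinate-polynomial-oracle $\hat{O}_{P_q}$ (Theorem~\ref{theorem: Singular value transformation by real polynomials}) -- in the correct order on a sensibly partitioned register, and then tally up gate counts and ancillas.

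I label the $3+n+n$ wires of $A_{P_qp^m}\ket{0}^3\ket{0}^n\ket{\psi}^n$ as $[a_1:1]\otimes[a_2:2]\otimes[\text{row}:n]\otimes[\text{col}:n]$, where $a_1$ will be the block-encoding ancilla for $\hat{O}^S_{p^m}$ and $a_2$ will host the two block-encoding ancillas of $\hat{O}_{P_q}$. The $n$-qubit row register I further split as $[\text{zeros}:n-l]\otimes[s:l]$. Starting from $\ket{0}^{a_1}\ket{0}^{a_2}\ket{0}^{n-l}\ket{0}^{l}\sum_{j}\sigma_j\ket{j}^n$, I first apply $H_W^{\otimes l}$ on the $s$-register to make the flat superposition $\tfrac{1}{\sqrt{2^l}}\sum_s\ket{s}^l$. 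I then apply $\hat{O}^S_{p^m}$ on $[a_1]\otimes[s]$, producing amplitudes $(\hat{p}^m)^{(s)}/\sqrt{\mathcal{N}_{p^m}}$ on the $\ket{0}^{a_1}$ branch together with an orthogonal $\ket{1}^{a_1}$ component. Next I apply $\hat{O}^{BS}_{p^m}$ on $[\text{zeros}]\otimes[s]\otimes[\text{col}]$, which sends $\ket{0}^{n-l}\ket{s}^l\ket{j}^n\mapsto\ket{r_{sj}}^n\ket{j}^n$. Writing $i:=r_{sj}$, the relation $(\hat{p}^m)_{ij}=(-1)^m(\hat{p}^m)^{(s)}$ (see Lemma~\ref{lemma: Amplitude-oracle for momentum operator}) converts the amplitude $(\hat{p}^m)^{(s)}$ into $(-1)^m(\hat{p}^m)_{ij}$ while relabelling the row register. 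Finally I apply $\hat{O}_{P_q}$ on $[a_2]\otimes[\text{row}]$, which multiplies the kept branch by $P_q(\hat{x}_{ii})$ and dumps the rest into a further orthogonal component. Collecting everything on the $\ket{0}^3$ branch yields exactly the stated expression for $A_{P_qp^m}$. For $A_{p^mP_q}$, the only change is to apply $\hat{O}_{P_q}$ on $[a_2]\otimes[\text{col}]$ \emph{before} the momentum steps, so that the diagonal factor picked up is $P_q(\hat{x}_{jj})$ instead of $P_q(\hat{x}_{ii})$.

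The gate accounting is then a direct sum of the three oracle costs, plus $l$ negligible Hadamards. Adding the one-qubit counts $2304qn^2-1064qn-108q$ (Coordinate-polynomial-oracle), $(2^l+1)(32n-48)$ (Banded-sparse-access) and $2^l$ (Sparse-amplitude-oracle) gives the stated bound; the same for the C-NOT column. For pure ancillas, the Coordinate-polynomial-oracle needs $2n-1$ and the Banded-sparse-access $n-1$; since the two oracles are applied sequentially the workspace can be reused, so $2n-1$ pure ancillas suffice overall.

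The main subtlety, rather than any technical obstacle, is correctness bookkeeping: verifying that the banded-sparse oracle is indeed being used in its column-wise mode (legitimate because $\hat{p}^m$ has symmetric non-zero positions, so $\hat{O}^{BS}_{(\hat{p}^m)^T}=\hat{O}^{BS}_{\hat{p}^m}$), tracking the $(-1)^m$ sign arising from $(\hat{p}^m)_{ij}=(-1)^m(\hat{p}^m)_{ji}$, and confirming that the $\ket{1}^{a_1}$ and $\hat{O}_{P_q}$-rejected components really combine into a single state orthogonal to $\ket{0}^3\ket{\kappa}^{2n}$ (so that they can be absorbed into $J_{P_qp^m}\ket{\bot_0}^{2n+3}$). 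Once these ordering and sign issues are checked, the resource totals follow by straightforward addition.
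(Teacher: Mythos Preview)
Your proposal is correct and follows essentially the same construction as the paper: the paper's proof also points to a circuit that composes $\hat{O}_{P_q}$, the Hadamard layer, $\hat{O}^S_{p^m}$, and $\hat{O}^{BS}_{p^m}$ in precisely the orders you describe (coordinate oracle on the row register for $A_{P_qp^m}$, on the column register for $A_{p^mP_q}$), and obtains the resource counts by summing the costs from Theorem~\ref{theorem: Singular value transformation by real polynomials}, Lemma~\ref{Banded-sparse-access lemma}, and Lemma~\ref{lemma: Amplitude-oracle for momentum operator}. Your explicit bookkeeping of the $(-1)^m$ sign, the column-wise use of $\hat{O}^{BS}_{p^m}$ via $\hat{O}^{BS}_{(\hat{p}^m)^T}=\hat{O}^{BS}_{\hat{p}^m}$, and the ancilla reuse matches the paper's reasoning.
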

\begin{proof}
    We provide the Fig.~\ref{fig:one-term-auxiliary} which explicitly shows how to build the auxiliary unitaries. The number of operations is simply followed from the parts forming those auxiliary operations for the $1$-term: (i) Coordinate-polynomial-oracle $\hat{O}_{P_q} (Theorem~\ref{theorem: Singular value transformation by real polynomials}$), (ii) Sparse-amplitude-oracle for momentum operator $\hat{O}^S_{p^m}$ (Lemma~\ref{lemma: Amplitude-oracle for momentum operator}), (iii) Banded-sparse-access for momentum operator $\hat{O}^{BS}_{p^m}$ (Lemma~\ref{Banded-sparse-access lemma}). The following equation clarify the circuits
    \begin{eqnarray}
        \begin{gathered}
            \ket{\phi_1}^{2n+3}=\frac{1}{\sqrt{2^l}}\sum_{s=0}^{2^l-1}\sum_{j=0}^{2^n-1}\sigma_jP_q(
            \hat{x}_{jj})\ket{0}^1\ket{s}^l\ket{0}^{n-l}\ket{0}^2\ket{j}^n+\dots\\
            \ket{\phi_2}^{2n+3}=\frac{1}{\sqrt{2^l\mathcal{N}_{p^m}}}\sum_{s=0}^{2^l-1}\sum_{j=0}^{2^n-1}\sigma_jP_q(
            \hat{x}_{jj})(\hat{p}^m)^{(s)}\ket{0}^1\ket{s}^l\ket{0}^{n-l}\ket{0}^2\ket{j}^n+\dots\\
            \ket{\phi_3}^{2n+3}=\frac{1}{\sqrt{2^l\mathcal{N}_{p^m}}}\sum\limits_{\substack{j=0,\dots,2^n-1\\ i\in F^{(m)}_j}}\sigma_jP_q(
            \hat{x}_{jj})(\hat{p}^m)_{ji}\ket{0}^3\ket{i}^n\ket{j}^n+\dots\\
            =\frac{(-1)^m}{\sqrt{2^l\mathcal{N}_{p^m}}}\sum\limits_{\substack{j=0,\dots,2^n-1\\ i\in F^{(m)}_j}}\sigma_jP_q(
            \hat{x}_{jj})(\hat{p}^m)_{ij}\ket{0}^3\ket{i}^n\ket{j}^n+\dots.
        \end{gathered}
        \label{eq:Clarifying A_p_x}
    \end{eqnarray}

\end{proof}

\begin{figure}[h!]
    \subcaptionbox{}{\includegraphics[width=0.39\textwidth]{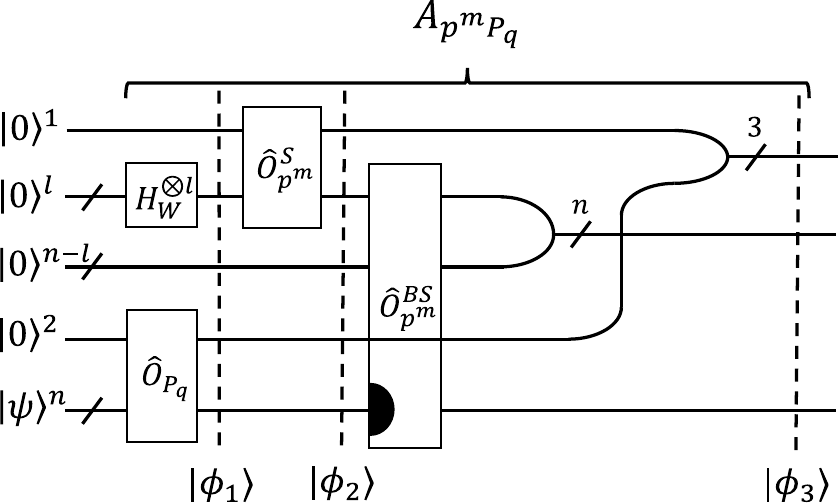}}
    \hspace{0.1\textwidth}
    \subcaptionbox{}{\includegraphics[width=0.39\textwidth]{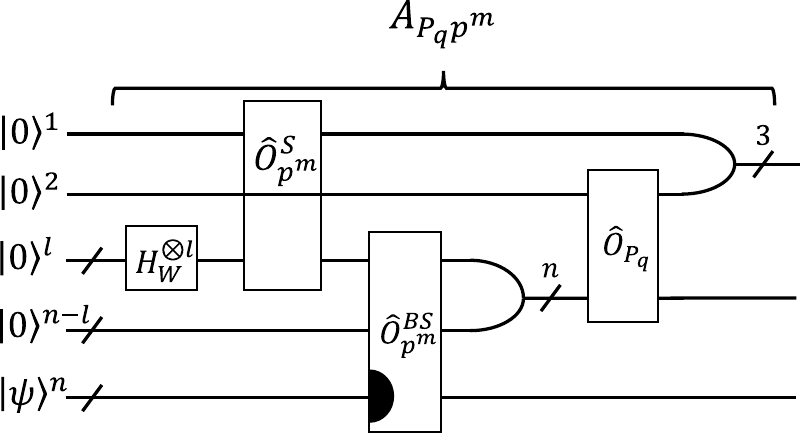}}
    \caption{Circuit design for implementing $A_{p^mP_q}$ (a) and $A_{P_qp^m}$ (b), where the lats register is associated with column number; $l$- and $n-l$-qubit registers are both correspond to the row number. The Banded-sparse-access $\hat{O}_{p^m}^{BS}$ for $p^m$ coincides to the one for $P_q(\hat{x})\hat{p}^m$ since the position of non-zero elements coincides. We underlined the register which does not change under action of $\hat{O}_{p^m}^{BS}$ by the filled semicircle. The Y-shaped frame means simple merging of registers without any operations. In the scheme (a) $\hat{O}^{BS}_{p^m}$ does not act in the fourth register, and on the scheme (b) $\hat{O}^S_{p^m} $does not act on the second one; we indicated this feature by plotting registers' lines above the boxes for operations. The fourth registers in the scheme (a) change his place and merge with the first register at the end of the scheme. We clarify the circuit (a) in Eq.~\ref{eq:Clarifying A_p_x}.}
    \label{fig:one-term-auxiliary}
\end{figure}

\subsection{Block-encoding}
In this section we combine $1$-term block-encodings auxiliaries according the view of the Hamiltonian (\ref{eq:main_Hamiltonian}) using the linear combination of unitary technique \cite{dalzell2023quantum}; then we apply Banded-sparse-access to the Hamiltonian to purify ancillas for row/column number. All those steps allow us to build an efficient block-encoding for the Hamiltonian. This process highlights the interplay between quantum algorithmic constructs and the physical representations of quantum systems, paving the way for advanced simulations and computations in quantum physics.

\begin{corollary}[Block-encoding auxiliary]\label{corollary:Hamiltonian block-encoding auxiliary}
    Let $H$ be a Hamiltonian as in the Eq.~\ref{eq:main_Hamiltonian} with $\eta=2^\gamma$, $q=\max\limits_kq_k$, $2^{l_k}=m_k$, $l_{\max}=\max\limits_kl_k$, and matrix representation as in Eq.~\ref{eq:numeric_main_Hamiltonian} then the unitary operation $A_H$
         \begin{equation*}
         \label{eq:Hamiltonian block-encoding auxiliary}\begin{gathered}
             A_H\sum_{j=0}^{2^n-1}\sigma_j\ket{0}^{\gamma+4}\ket{0}^n\ket{j}^n=\left({U^H_\alpha}^T\otimes I^{\otimes 2n+1}\right)\left(\prod_{k=0}^{\eta-1}C_{A_{p^{m_k}P_{q_k}}}^{k+\eta}\right)\left(\prod_{k=0}^{\eta-1}C_{A_{P_{q_k}p^{m_k}}}^{k}\right)\left[U^H_\alpha\ket{0}^{\gamma+1}\right] \sum_{j=0}^{2^n}\sigma_j\ket{0}^{3}\ket{0}^n\ket{j}^n\\
             =\frac{1}{\mathcal{N}_H}\sum\limits_{\substack{i=0,\dots,2^n-1\\ j\in F^H_i}}\left(\hat{H}_{ij}\sigma_j\ket{0}^{\gamma+4}\ket{i}^n\ket{j}^n+J_{\hat{H}_{ij}}\ket{\bot_{0}}^{\gamma+2n+4}\right);\qquad \sigma_j\in \mathbb{C} \quad \forall j=0,1,\dots,2^n-1 \end{gathered}\end{equation*}
         \begin{equation*}
         \label{eq:Normalization constant for $A_H$}\mathcal{N}_H=2\sum_{\kappa=0}^{2^\gamma -1}\abs{\alpha_k}\sqrt{2^{l_\kappa}\mathcal{N}_{p^{m_k}}},\end{equation*}
    can be implemented with resources no greater than:
    \begin{enumerate}
        \item $2^{\gamma+2}\left[9760n^2q+33\times2^{l_{\max}+2}n+160n+8\gamma-4504nq-476q-115\times 2^{l_{\max}}-239\right]$ one-qubit operations;
        \item $2^{\gamma+1}\left[15792n^2q+107\times 2^{l_{\max}+1}+256n+12\gamma-7288nq-768q-49\times 2^{l_{\max}+1}-382\right]$ C-NOTs;
        \item $2n+\gamma-1$ pure ancillas.
        
    \end{enumerate}
    The state $\ket{\bot_{0}}^{\gamma+2n+4}$ is orthogonal to $\ket{0}^{\gamma+4}\otimes I^{\otimes2n}$; the symbol $T$ in ${U^H_\alpha}^T$ means the matrix transposition.  The procedure exploits the linear combination theorem \cite{gilyen2019quantum,dalzell2023quantum} applied on $A_{P_qp^m}$, $A_{p^mP_q}$ from the Lemma~\ref{lemma: 1-term block-encoding auxiliary}. The quantum circuit for the $A_H$ is depicted in Fig.~\ref{fig:A_H quantum_circuit}. We suppose that a values $J_{\hat{H}_{ij}}$ and explicit view of $\ket{\bot_{0}}^{\gamma+2n+4}$ are unimportant for the reader, so we omit it; however, they can be retrieved from the Fig.~\ref{fig:A_H quantum_circuit} and explicit views of the operators.
    
    \end{corollary}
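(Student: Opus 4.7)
The plan is to carry out the standard linear-combination-of-unitaries (LCU) construction, fed by the two families of $1$-term auxiliaries $A_{P_{q_k}p^{m_k}}$ and $A_{p^{m_k}P_{q_k}}$ from Lemma~\ref{lemma: 1-term block-encoding auxiliary}, and then to sum the resource costs term by term. First I would design the state-preparation unitary $U^H_\alpha$ on a $(\gamma+1)$-qubit selector register, where $\gamma$ qubits index the $\eta=2^\gamma$ Hamiltonian terms and one extra qubit distinguishes the two orderings $P_q(\hat{x})\hat{p}^m$ versus $\hat{p}^m P_q(\hat{x})$. The magnitudes of the prepared amplitudes are chosen so that the $k$-th branch carries $\sqrt{2\,|\alpha_k|\sqrt{2^{l_k}\mathcal{N}_{p^{m_k}}}/\mathcal{N}_H}$, and the complex phases of $\alpha_k$ and $\alpha_k^*$ are loaded directly into the two ordering branches; with only $2^{\gamma+1}$ amplitudes this costs $O(2^\gamma)$ one- and two-qubit gates and is subleading.

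Next I would apply the two controlled sequences $\prod_{k=0}^{\eta-1}C^{k}_{A_{P_{q_k}p^{m_k}}}$ and $\prod_{k=0}^{\eta-1}C^{k+\eta}_{A_{p^{m_k}P_{q_k}}}$, each controlled on the $(\gamma+1)$-qubit selector in the pattern prescribed by Definition~\ref{def:multiconrol operator}. For a fixed selector branch this invokes exactly one of the $1$-term auxiliaries, whose column-wise action is written out in Eq.~\eqref{eq:Clarifying A_p_x}. Applying the inverse $(U^H_\alpha)^T$ uncomputes the selector: on the $|0\rangle^{\gamma+1}$ component the amplitudes from the two ordering branches of each $k$ add up coherently to reconstruct $\alpha_k P_{q_k}(\hat{x})\hat{p}^{m_k}+\alpha_k^*\hat{p}^{m_k}P_{q_k}(\hat{x})$ column by column, giving $\hat{H}_{ij}\sigma_j/\mathcal{N}_H$; every other selector outcome is orthogonal to $|0\rangle^{\gamma+4}$ and is absorbed into the $|\bot_0\rangle^{\gamma+2n+4}$ term. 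This reproduces the claimed action of $A_H$.

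For the gate count I would combine three contributions: (i) the doubled cost of $U^H_\alpha$, which is $O(2^\gamma)$ and absorbed into the lower-order part of the stated bound; (ii) $\eta$ copies of the $(\gamma+1)$-controlled $A_{P_{q_k}p^{m_k}}$ and $\eta$ copies of the $(\gamma+1)$-controlled $A_{p^{m_k}P_{q_k}}$. Each such $(\gamma+1)$-controlled circuit is obtained by turning every one-qubit gate of Lemma~\ref{lemma: 1-term block-encoding auxiliary} into a controlled one-qubit rotation ($2$ one-qubit gates and $2$ CNOTs) and every CNOT into a Toffoli ($8$ one-qubit gates and $6$ CNOTs), and then collapsing the $\gamma+1$ controls into a single control via a Toffoli AND-tree as in Fig.~\ref{fig:Multi_control_qc}(c). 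Plugging the per-term counts from Lemma~\ref{lemma: 1-term block-encoding auxiliary} into this inflation and multiplying by $2\eta=2^{\gamma+1}$ (two orderings times $\eta$ terms) produces the factor $2^{\gamma+2}$ in front of the one-qubit count and $2^{\gamma+1}$ in front of the CNOT count stated in the corollary. The pure-ancilla total combines the $2n-1$ ancillas consumed inside any single $A_{P_qp^m}/A_{p^mP_q}$ call (reset after each use and hence reusable) with $\gamma$ extra ancillas for the AND-tree, yielding $2n+\gamma-1$.

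The main obstacle will be the bookkeeping of the multi-control inflation while simultaneously reconciling the phase structure: namely, showing that the $(-1)^{m_k}$ signs introduced in Eq.~\eqref{eq:Clarifying A_p_x}, the use of $(U^H_\alpha)^T$ rather than $(U^H_\alpha)^\dagger$, and the complex-conjugate loading $\alpha_k\leftrightarrow\alpha_k^*$ into the two ordering branches all combine so that no stray global sign or phase remains on the $|0\rangle^{\gamma+4}$ component. Once that is verified, the rest reduces to routine arithmetic on the per-gate inflation factors to confirm the explicit constants $9760$, $15792$, $33\cdot 2^{l_{\max}+2}$, $107\cdot 2^{l_{\max}+1}$, etc., appearing in the stated bounds.
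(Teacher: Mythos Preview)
Your proposal is correct and follows essentially the same approach as the paper: both build $A_H$ via the linear-combination-of-unitaries pattern $(U^H_\alpha)^T\bigl(\prod C^{k+\eta}_{A_{p^{m_k}P_{q_k}}}\bigr)\bigl(\prod C^{k}_{A_{P_{q_k}p^{m_k}}}\bigr)U^H_\alpha$, with $U^H_\alpha$ loading the weighted amplitudes on a $(\gamma+1)$-qubit selector, and obtain the resource bounds by inflating each gate of Lemma~\ref{lemma: 1-term block-encoding auxiliary} to its controlled version plus a Toffoli AND-tree over the selector. The paper's own argument is in fact less detailed than yours---it simply points to the circuit diagram, states the form of $U^H_\alpha$ and its $O(2^\gamma)$ cost, and leaves the gate-count arithmetic implicit---so your outline covers everything and more.
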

     \textit{Clarification.} As a subroutine the implementation of $A_H$ uses $U^H_\alpha$
     \[ U_\alpha^H \ket{0}^{\gamma+1} = \frac{1}{\sqrt{\mathcal{N}_H}} \sum_{k=0}^{2^\gamma-1} \left(2^{l_k} \mathcal{N}_p^{m_k}\right)^{1/4} \left[ \sqrt{i^{m_k}\alpha_k} \ket{k}^{\gamma+1} + \sqrt{i^{m_k}\alpha_k^*} \ket{k+2^\gamma}^{\gamma+1} \right],\]    
     where $i^{m_k}$ corresponds to the product of the omitted global phase $(-i)^{m_k}$ discussed in the Appendix \ref{appendix_oracle_for_momentum} and to $(-1)^{m_k}$ from the $A_{P_qp^m},A_{p^mP_q}$; the $\sqrt{\alpha}$ is the principle square root. Since $\alpha_{k}$ can be any complex number in principle, the construction of $U_\alpha^H$ addresses the general problem of construction an arbitrary quantum state. The authors of \cite{plesch2011quantum,bergholm2005quantum} suggest an explicit implementation of $U_\alpha^H$ which consist of less than $\frac{23}{24}2^{\gamma+1}$ for even number of qubits and $2^{\gamma+1}$ for the odd number. Let's consider the number of one-qubit operations here. Naively, one can conclude that each C-NOT can be accompanied by $6$ Euler rotations; however, a rotation around $z$ axis commutes with control operation, and a rotation around $x$ one commutes with Pauli $X$ operation. Thus, the number of one-qubit Euler rotations cannot exceed the number of C-NOTs by more than 4 times \cite{shende2004minimal}. For simplicity in this paper we address a general one-qubit operation, so we merge 2 Euler rotations acting on each two qubits where the C-NOT is applied. Thus, the complexity of the $U_\alpha^H$ implementation is less than: (i) $2^{\gamma+2}$ one-qubit rotations, (ii) $2^{\gamma+1}$ C-NOTs.

  \begin{figure}[h!]
    \includegraphics[width=1.0\textwidth]{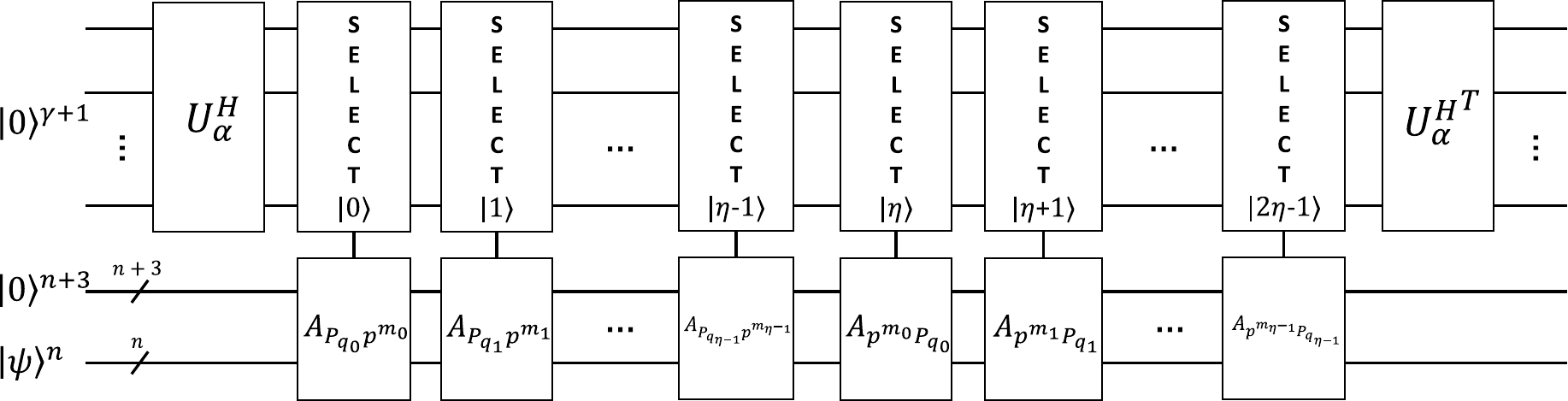}
    \caption{Quantum circuit implementing $A_H$. The overall number of qubits is $2n+\gamma+4$, where $\gamma+1$ qubits are used to create linear superposition of unitaries. The symbol $T$ in ${U^H_\alpha}^T$ means the matrix transposition }
    \label{fig:A_H quantum_circuit}
    \end{figure}

Now we estimate the sparsity of $\hat{H}$ from the formula (\ref{eq:main_Hamiltonian}). Using Binomial theorem \cite{coolidge1949story} applied to $\hat{p}^m$ it can be proved that the sparsity of the Hamiltonian matrix is no greater than $\max\limits_{k}2m_k+1$; we assume that the sparsity value is $2^l$. Thus, using the quantum circuits from the Appendix~\ref{Appendix Banded-sparse-access} we can efficiently implement the quantum circuit for the unitary Hamiltonian banded-sparse access $\hat{O}^{BS}_H$.

\begin{theorem}[Hamiltonian Block-encoding]\label{theorem:Hamiltonian Block-encoding}
    Let $H$ be a Hamiltonian as in Eq.~\ref{eq:main_Hamiltonian} with the number of terms $\eta=2^\gamma$, $2^l=\max\limits_{k}2m_k+1$, $q=\max\limits_kq_k$, and with matrix representation $\hat{H}$ as in Eq.~\ref{eq:numeric_main_Hamiltonian} with normalization constant $\mathcal{N}_H$ (see Corollary~\ref{corollary:Hamiltonian block-encoding auxiliary}) then we can implement a $(\sqrt{2^{l}}\mathcal{N}_H,l+\gamma+4,0)$-block-encoding of $n$-qubit matrix $\hat{H}$
    \[U_H\ket{0}^{l+4}\ket{\phi}^n=\frac{1}{\sqrt{2^l}\mathcal{N}_H}\sum\limits_{\substack{i=0,\dots,2^n-1\\ j\in F^H_i}}\left(\hat{H}_{ij}\sigma_j\ket{0}^{l+4}\ket{i}^n
        +J^U_{\hat{H}_{ij}}\ket{\bot_{0}}^{n+l+4}\right);\qquad \ket{\phi}^n:=\sum_{j=0}^{2^n-1}\sigma_j\ket{j}^n,
    \]
    with resources no greater than:
    \begin{enumerate}
        \item $2^{\gamma+2}\left[9760n^2q+33\times2^{l_{\max}+2}n+160n+8\gamma-4504nq-476q-115\times 2^{l_{\max}}-239\right]+2^{l+5}n-3\times 2^{l+4}+32n+l-48$ one-qubit operations;
        \item $2^{\gamma+1}\left[15792n^2q+107\times 2^{l_{\max}+1}+256n+12\gamma-7288nq-768q-49\times 2^{l_{\max}+1}-383\right]+25\times 2^ln-36\times 2^l+32n-48$ C-NOTs;
        \item $3n+\gamma-l-1$ pure ancillas.
    \end{enumerate}
    \end{theorem}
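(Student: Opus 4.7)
The plan is to construct $U_H$ by post-processing the auxiliary operator $A_H$ of Corollary~\ref{corollary:Hamiltonian block-encoding auxiliary} with the Hamiltonian banded-sparse-access $\hat{O}^{BS}_H$ of Lemma~\ref{Banded-sparse-access lemma} and a small Hadamard layer. The idea is that $A_H$ already produces amplitudes $\hat{H}_{ij}$ entangled with both the row register $\ket{i}^n$ and the column register $\ket{j}^n$; using $\hat{O}^{BS}_H$ one can uncompute the redundant $n$-qubit column register into $l$ qubits (a sparse index) plus $n-l$ pure-ancilla zeros, and a final $H_W^{\otimes l}$ picks up the advertised $1/\sqrt{2^l}$ factor.

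Before writing the circuit I would verify the sparsity bound $2^l\geq\max_k(2m_k+1)$: each $\hat{p}^{m_k}$ obtained from the binomial expansion of the tridiagonal cyclic form in Eq.~\ref{eq:x_p_cyclic_matrices} is banded of width $2m_k+1$, and $P_{q_k}(\hat{x})$ is diagonal and bandwidth-preserving, so that $\hat{H}$ inherits the bound. Hermiticity of $\hat{H}$ makes the nonzero positions of $\hat{H}$ and $\hat{H}^T$ coincide, so the note in Definition~\ref{Banded-sparse-access} gives $\hat{O}^{BS}_H=\hat{O}^{BS}_{H^T}$, and Lemma~\ref{Banded-sparse-access lemma} supplies $\hat{O}^{BS}_H$ with the stated resource counts.

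The circuit I would write starts from $\ket{0}^{\gamma+4}\ket{0}^l\ket{0}^{n-l}\ket{\phi}^n$. First, apply $A_H$, treating $\ket{0}^{n-l}\ket{0}^l$ as the $n$-qubit row slot and $\ket{\phi}^n$ as the column slot, producing $\mathcal{N}_H^{-1}\sum_{i,\,j\in F^H_i}\hat{H}_{ij}\sigma_j\ket{0}^{\gamma+4}\ket{i}^n\ket{j}^n$ plus orthogonal junk. Then, after a register permutation that places the row index $\ket{i}^n$ into the main register, apply $(\hat{O}^{BS}_H)^\dagger$: by the banded-Hermitian symmetry it acts on the support as $\ket{j}^n\ket{i}^n\mapsto\ket{0}^{n-l}\ket{s_{ij}}^l\ket{i}^n$, where $s_{ij}\in\{0,\dots,2^l-1\}$ is the unique sparse index of $j$ in row $i$. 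Finally, apply $H_W^{\otimes l}$ to the sparse-index register; every $\ket{s_{ij}}^l$ contributes amplitude $1/\sqrt{2^l}$ to the $\ket{0}^l$ component, yielding
\begin{equation*}
\frac{1}{\sqrt{2^l}\mathcal{N}_H}\sum_{i,\,j\in F^H_i}\hat{H}_{ij}\sigma_j\ket{0}^{l+\gamma+4}\ket{i}^n+\ket{\bot_0}^{n+l+\gamma+4},
\end{equation*}
which is the required $(\sqrt{2^l}\mathcal{N}_H,l+\gamma+4,0)$-block-encoding. The one-qubit and C-NOT counts are additive across $A_H$ and $\hat{O}^{BS}_H$, with $l$ Hadamards added to the one-qubit tally only; the pure-ancilla count is the $2n+\gamma-1$ from $A_H$ (which can be reused to supply $\hat{O}^{BS}_H$'s $n-1$ internal ancillas) plus the $n-l$ uncomputed zeros, totalling $3n+\gamma-l-1$.

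The main obstacle I anticipate is justifying that the uncomputation step is exact, i.e.\ that the map $\ket{j}^n\ket{i}^n\mapsto\ket{0}^{n-l}\ket{s_{ij}}^l\ket{i}^n$ extends to a unitary whose action on the orthogonal complement keeps junk in the $\ket{\bot_0}$ branch. This rests on the sparse-index convention used for $\hat{p}^{m_k}$ inside $A_H$ (Lemma~\ref{lemma: Amplitude-oracle for momentum operator}) being compatible with the convention for the full banded $\hat{H}$ invoked here, and on $\hat{O}^{BS}_H=\hat{O}^{BS}_{H^T}$ genuinely allowing $(\hat{O}^{BS}_H)^\dagger$ to serve as a row-parametrised column uncompute. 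Once this consistency is pinned down, the encoding error is zero and the remaining verification is routine counting.
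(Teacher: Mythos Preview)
Your proposal is correct and matches the paper's proof: the circuit in Fig.~\ref{fig:Block-encoding_qc} is precisely $A_H$ followed by $\hat{O}^{BS\dagger}_H$ (with the register reshuffling you describe) and then $l$ Hadamards, the latter being visible in the extra ``$+l$'' in the one-qubit count. Your residual worry about sparse-index conventions is a non-issue, since the output of $A_H$ already carries genuine row/column indices $\ket{i}^n\ket{j}^n$ rather than the per-term sparse indices, so $(\hat{O}^{BS}_H)^\dagger$ only has to invert the single global convention for $\hat{H}$.
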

    
    \begin{figure}[h!]
    \includegraphics[width=0.5\textwidth]{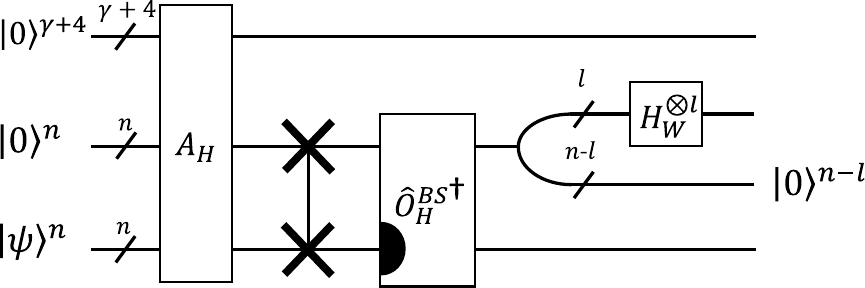}
    \caption{The explicit design of quantum circuit for block-encoding of the Hamiltonian (\ref{eq:numeric_main_Hamiltonian}). The unitary $\hat{O}^{BS\dagger}_H$ is a Hermit conjugated $\hat{O}^{BS}_H$, see Appendix \ref{Appendix Banded-sparse-access}. Again the filled semicircle determines which register doesn't change under $\hat{O}^{BS}_H$ and $\hat{O}^{BS\dagger}_H$. Note that according to the Definition~\ref{Banded-sparse-access} $\hat{O}^{BS}_H=\hat{O}^{BS}_{H^T}$, that's why we omitted the transpose notation for $\hat{O}^{BS\dagger}_H$. We indicated pure ancillas by showing that after the action the register's state becomes $\ket{0}^{n-l}$.}
    \label{fig:Block-encoding_qc}
    \end{figure}
    
    \begin{proof} As a proof we simply present the Fig.~\ref{fig:Block-encoding_qc} which explicity shows how to construct the desired block-encoding using the auxiliary construction from the Corollary~\ref{corollary:Hamiltonian block-encoding auxiliary}. We exploit Banded-sparse-access to the Hamiltonian in Fig.\ref{fig:Block-encoding_qc} which we built using the technique described in the Appendix \ref{Appendix Banded-sparse-access}. As an input for the algorithm we need set of non-zero indexes in the first row of the Hamiltonian. This knowledge is given by degrees of the momentum operator contributing (\ref{eq:main_Hamiltonian}). Using the periodic spatial boundaries (\ref{eq:x_p_cyclic_matrices}) the sparsity of the Hamiltonian is no greater than $\max\limits_{k}2m_k+1$.
    \end{proof}

    \subsection{Evolution operator creation}
    In this subsection we present one of the promising ways how to exploit the efficient block-encoding declared in Theorem~\ref{theorem:Hamiltonian Block-encoding}. Now we apply Theorem~\ref{theorem: optimal block-hamiltonian simultaions} to the designed block-encoding.
    \begin{theorem}[Nearly-optimal Hamiltonian simulation]\label{theorem:Nearly-optimal Hamiltonian simulation}
    Let $t\in\mathbb{R}\slash \{0\}$, $\epsilon\in (0,1)$ and let $U_H$ be an $(\sqrt{2^{l}}\mathcal{N}_H,l+\gamma+4,0)$-block-encoding of the $2^n\times2^n$ Hamiltonian H as in the Theorem~\ref{theorem:Hamiltonian Block-encoding}. Then we can implement an $\epsilon$-precise Hamiltonian simulation unitary $V$ which is an $(2,l+\gamma+6,\epsilon)$-block-encoding of $e^{itH}$, with resources no greater than: 
    \begin{enumerate}
        \item $
305 \times 2^{7 + \gamma} n^2 q \Omega(\alpha t, \epsilon) + 33 \times 2^{4 + l + \gamma} n \Omega(\alpha t, \epsilon) + 5 \times 2^{7 + \gamma} n \Omega(\alpha t, \epsilon) + 2^{5 + l} n \Omega(\alpha t, \epsilon) + 32 n \Omega(\alpha t, \epsilon) + 17 l \Omega(\alpha t, \epsilon) + 66 \Omega(\alpha t, \epsilon) + 2^{5 + \gamma} \gamma \Omega(\alpha t, \epsilon) + 35 \times 2^{7 + l + \gamma} n + 21 \times 2^{8 + \gamma} n + 33 \times 2^{3 + l} n + 320 n + 323 \times 2^{10 + \gamma} n^2 q + 2 l - 563 \times 2^{5 + \gamma} n q \Omega(\alpha t, \epsilon) - 119 \times 2^{4 + \gamma} q \Omega(\alpha t, \epsilon) - 115 \times 2^{2 + l + \gamma} \Omega(\alpha t, \epsilon) - 239 \times 2^{2 + \gamma} \Omega(\alpha t, \epsilon) - 3 \times 2^{4 + l} \Omega(\alpha t, \epsilon) - 2385 \times 2^{6 + \gamma} n q - 503 \times 2^{5 + \gamma} q - 507 \times 2^{3 + l + \gamma} - 1005 \times 2^{3 + \gamma} - 3 \times 2^{5 + l} - 476
$
        \item $
987 \times 2^{5 + \gamma} n^2 q \Omega(\alpha t, \epsilon) + 107 \times 2^{2 + l + \gamma} n \Omega(\alpha t, \epsilon) + 2^{9 + \gamma} n \Omega(\alpha t, \epsilon) + 25 \times 2^l n \Omega(\alpha t, \epsilon) + 32 n \Omega(\alpha t, \epsilon) + 12 l \Omega(\alpha t, \epsilon) + 38 \Omega(\alpha t, \epsilon) + 3 \times 2^{3 + \gamma} \gamma \Omega(\alpha t, \epsilon) + 453 \times 2^{3 + l + \gamma} n + 17 \times 2^{8 + \gamma} n + 107 \times 2^{1 + l} n + 256 n + 4181 \times 2^{6 + \gamma} n^2 q + 2 l - 911 \times 2^{4 + \gamma} n q \Omega(\alpha t, \epsilon) - 3 \times 2^{9 + \gamma} q \Omega(\alpha t, \epsilon) - 49 \times 2^{3 + l + \gamma} \Omega(\alpha t, \epsilon) - 383 \times 2^{1 + \gamma} \Omega(\alpha t, \epsilon) - 3859 \times 2^{5 + \gamma} n q - 407 \times 2^{5 + \gamma} q - 409 \times 2^{3 + l + \gamma} - 1627 \times 2^{2 + \gamma} - 3 \times 2^{5 + l} - 384
        $ C-NOTs;
        \item $3n+\gamma-l-1$ pure ancillas;
    \end{enumerate}
    where we introduced $\alpha=2^{l+1}\mathcal{N}_H$ for the convenience. The coefficient $\Omega(\alpha t,\epsilon)$ is implicitly defined by the condition $g\geq \Omega(\alpha t,\epsilon)$ and can be retrieved from
        \[ \frac{1.07}{\sqrt{g}}\left(\frac{\alpha et}{2g}\right)^g\leq\epsilon. \]
    The parameter has scaling
    \[ \Omega(\alpha t,\epsilon)=\mathcal{O}\left(\alpha t+\frac{\ln(1/\epsilon)}{\ln(e+\frac{\ln(1/\epsilon)}{\alpha t})}\right). \]
    \end{theorem}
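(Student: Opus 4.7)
The plan is a direct invocation of Theorem~\ref{theorem: optimal block-hamiltonian simultaions} with the block-encoding $U_H$ constructed in Theorem~\ref{theorem:Hamiltonian Block-encoding} used as input. Theorem~\ref{theorem:Hamiltonian Block-encoding} provides a $(\sqrt{2^l}\mathcal{N}_H, l+\gamma+4, 0)$-block-encoding of $\hat{H}$, so feeding this into Theorem~\ref{theorem: optimal block-hamiltonian simultaions} with $a = l+\gamma+4$ immediately yields a $(2,\, l+\gamma+6,\, \epsilon)$-block-encoding of $e^{itH}$, matching the parameters claimed; the implicit definition and asymptotic scaling of $\Omega(\alpha t,\epsilon)$ are inherited verbatim from Theorem~\ref{theorem: optimal block-hamiltonian simultaions} with $\alpha$ redefined for convenience as in the statement. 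All of the nontrivial content has therefore already been proved elsewhere; the remaining work is to convert the abstract gate count of Theorem~\ref{theorem: optimal block-hamiltonian simultaions} into the explicit closed-form expressions written in the statement.

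For this bookkeeping, I would proceed in four steps. First, I would read off the per-call one-qubit and CNOT counts of $U_H$ from Theorem~\ref{theorem:Hamiltonian Block-encoding}, call them $G^{(1)}_{U_H}$ and $G^{(2)}_{U_H}$, and multiply each by $\Omega(\alpha t,\epsilon)$ to account for the $\Omega(\alpha t,\epsilon)$ uses of $U_H$ or its inverse inside the quantum signal processing sequence. Second, I would handle the single controlled-$U_H$ application using the standard conversion cited in Corollary~\ref{Corollary: x-oracle}, namely promoting each CNOT inside $U_H$ to a Toffoli and each single-qubit gate to its controlled version; this contributes one further copy of the $U_H$ costs (with the appropriate constant blow-up) to the totals. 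Third, I would substitute $a = l+\gamma+4$ into the explicit QSP overhead from Theorem~\ref{theorem: optimal block-hamiltonian simultaions}, contributing $\Omega(\alpha t,\epsilon)\bigl(16(l+\gamma+4)+50\bigr)+4$ one-qubit gates and $\Omega(\alpha t,\epsilon)\bigl(12(l+\gamma+4)+38\bigr)$ CNOTs. Fourth, I would verify the ancilla budget: the $\mathcal{O}(n)$ ancillas demanded by Theorem~\ref{theorem: optimal block-hamiltonian simultaions} are absorbed by the $3n+\gamma-l-1$ pure ancillas already furnished by $U_H$, which are returned to $\ket{0}$ after each block-encoding call and can therefore be reused across QSP segments, so the pure-ancilla count of the composite circuit equals that of $U_H$ alone.

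Summing the three gate contributions and collecting like powers of $n$, $q$, $2^l$, $2^\gamma$, and $\Omega(\alpha t,\epsilon)$ then produces the long displayed expressions. The only real obstacle is the sheer volume of the arithmetic: the per-call expression of Theorem~\ref{theorem:Hamiltonian Block-encoding} already carries roughly half a dozen classes of terms, each of which is multiplied by $\Omega(\alpha t,\epsilon)+1$ and then added to a linear-in-$(l+\gamma)$ overhead scaled by $\Omega(\alpha t,\epsilon)$. I expect this to be the main source of error-prone effort, but not of conceptual difficulty; a final asymptotic sanity check that the leading cost is $\mathcal{O}\!\bigl(\Omega(\alpha t,\epsilon)\cdot 2^\gamma q n^2\bigr) = \tilde{\mathcal{O}}(\alpha t \cdot \eta q n^2)$ confirms agreement with the roadmap and with the intuition that each Hamiltonian-simulation segment pays a factor of $\alpha t$ over a single block-encoding call.
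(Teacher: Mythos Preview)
Your proposal is correct and matches the paper's own proof: both invoke Theorem~\ref{theorem: optimal block-hamiltonian simultaions} directly on the block-encoding $U_H$ of Theorem~\ref{theorem:Hamiltonian Block-encoding} with $a=l+\gamma+4$, and then perform the gate-count bookkeeping. The paper's proof is in fact terser than yours---it simply cites \cite{gilyen2019quantum} for the QSP decomposition into multi-controlled $R_z$ rotations and block-encoding calls, and says the explicit counts follow by substituting the costs of Theorem~\ref{theorem:Hamiltonian Block-encoding}---so your four-step breakdown of the arithmetic (uses of $U_H$, the single controlled-$U_H$, the QSP overhead, and ancilla reuse) is a more detailed rendering of the same argument.
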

    \begin{proof}
        The proof is well described in the \cite{gilyen2019quantum}, where authors provide a decomposition of the desired evolution operator to multi-controlled Pauli rotations $R_z$ and applications of the block-encodingm which is very similar to one described in Theorem~\ref{theorem: Singular value transformation by real polynomials}. We provided the certain number of gates for our case based on the quantum schemes from the paper (Theorem~\ref{theorem:Hamiltonian Block-encoding}).
    \end{proof}
    
    Thus, we prove that the nearly-optimal Hamiltonian simulation is achievable by providing the step-by-step guide how to construct an efficient block-encoding and the evolution operator for the Hamiltonian with form (\ref{eq:numeric_main_Hamiltonian}). Moreover, we claim that the considered case is a basement for more general algorithm. That is why in next Section we generalize our approach to different boundaries and finite-difference schemes. In addition, we show how our approach can be straightforwardly extended to a multi-dimensional case (several qumodes). These techniques all together allow us to build a general toolbox for constructing block-encodings.

 \section{Method generalisation}\label{Sec:Generalisation}
    
    The robustness of a proposed computational method is often gauged by its flexibility and the breadth of its applicability. In this section, we explore how the algorithm described in the Section \ref{main_section} can be straightforwardly extended to the multi-dimensional case. In this subsection we clarify an implementation of each step from the road-map (see Fig.~\ref{fig:road_map}) for the multi-dimensional case.
    
    Suppose we have $d$-qumodes model
     \begin{eqnarray}
    \begin{gathered}
        H^{(d)}=\sum_{k=0}^{\eta_d-1}\alpha_{k}\prod_{y=0}^{d-1}\otimes R(L_{ky},P_{q^{(y)}_k}(x_y),p_y^{m^{(y)}_k});\\
        R(0,P_{q^{(y)}_k}(x_y),p_y^{m^{(y)}_k})=P_{q^{(y)}_k}(x_y)p_y^{m^{(y)}_k};\quad R(1,P_{q^{(y)}_k}(x_y),p_y^{m^{(y)}_k})=p_y^{m^{(y)}_k}P_{q^{(y)}_k}(x_y),
        \label{eq:multi-dimensional_Hamiltonian}
        \end{gathered}
    \end{eqnarray}
    where $x_y$/$p_y$ is the coordinate/momentum operator acting on $y$-qumode; we introduce set $L$ and function $R$ to go through possible permutations of $P_q$ and $p^m$ for each dimension. We assume that the coefficients $\alpha$ and $L$ are such that $H^{(d)}$ is Hermitian. For each spatial dimension we introduce two $n_y$-qubit register encoding row and column numbers. Similarly, we introduce finite-difference representation of coordinate and momentum operators $\hat{x}$, $\hat{p}$ as in the Eq. (\ref{eq:numeric_main_Hamiltonian}),(\ref{eq:x_p_cyclic_matrices}). Moreover, we do not specify spatial boundaries conditions and finite-difference representation of operators for each qumode; we suggest that they can differ for each dimension (see subsequent subsections).
    
    Now let us consider each term of the (\ref{eq:multi-dimensional_Hamiltonian}) independently as in the Section \ref{main_section}. For an individual term we can build a block-encoding auxiliary from the Lemma~\ref{lemma: 1-term block-encoding auxiliary} independently acting on each spatial dimension register
    \begin{eqnarray}
    \begin{gathered}
        A^{(L_k)}_{q_km_k}\sum_{j_0,\dots,j_{n_{d-1}}}\sigma_{j_0,\dots,j_{n_{d-1}}}\left(\ket{0}^3\ket{0}^{n_0}\ket{j_0}^{n_0}\dots\ket{0}^3\ket{0}^{n_{d-1}}\ket{j_{d-1}}^{n_{d-1}}\right)\\=\sum_{j_0,\dots,j_{n_{d-1}}}\sigma_{j_0,\dots,j_{n_{d-1}}}\prod_{y=0}^{d-1}\otimes \left[A_{R(L_{ky},P_{q^{(y)}_k}(x_y),p_y^{m^{(y)}_k})}\ket{0}^3\ket{0}^{n_y}\ket{j_y}^{n_y}\right]\\
        =\frac{1}{\prod_{\sigma=0}^{d-1}\mathcal{N}_{m_k(\sigma)}}\sum\limits_{\substack{i_0=0,\dots,2^{n_0}-1\\ j_0\in F^{0}_{i_0}}}\sum\limits_{\substack{i_1=0,\dots,2^{n_1}-1\\ j_1\in F^{1}_{i_1}}}\dots\sum\limits_{\substack{i_{d-1}=0,\dots,2^{n_{d-1}}-1\\ j_{d-1}\in F^{d-1}_{i_{d-1}}}}
        \left[\prod_{y=0}^{d-1}R(L_{ky},P_{q^{(y)}_k}(\hat{x}_y),\hat{p}_y^{m^{(y)}_k})_{i_yj_y}\right]\\
        \times\sigma_{j_0,\dots,j_{n_{d-1}}}\prod_{y=0}^{d-1}\otimes\ket{0}^3\ket{i_y}^{n_y}\ket{j_y}^{n_y}+\dots,
    \end{gathered}\label{eq:A_xp_multi_dimensional}
    \end{eqnarray}
    where $F^y_i$ indicates a set of non-zero elements in the $i$-th row of $\hat{p}^{m_k^{(y)}}$; the normalization constant $\mathcal{N}_{m_k(\sigma)}=\sqrt{2^{l_{k(\sigma)}}\mathcal{N}_{p^{m_k(\sigma)}}}$ for certain and $m_k(\sigma)$ coincides to one in Eq.~\ref{eq:1term block encoding auxiliarry}.
    In Eq.~\ref{eq:A_xp_multi_dimensional} we omitted the orthogonal vectors for simplicity, so the symbol $"\dots"$ conceals some orthogonal states. Eq.~\ref{eq:A_xp_multi_dimensional} shows that the terminal wave function is factorized. Similarly to one dimensional case, combining the Superposition unitaries (\ref{eq:A_xp_multi_dimensional}) we suggest a way to implement the multi-dimensional Hamiltonian block-encoding auxiliary (similar to Corollary~\ref{corollary:Hamiltonian block-encoding auxiliary}).
    
    \begin{corollary}[Multi-dimensional Hamiltonian block-encoding auxiliary]\label{corollary:multi-dimensional Hamiltonian superposition}
    Let $H^{(d)}$ be a Hamiltonian as in the Eq.~\ref{eq:multi-dimensional_Hamiltonian} with the number of terms $\eta_d=2^\gamma$, matrix representation $\hat{H}^{(d)}$, and the set $L$ then the unitary operation $A^{(d)}_H$
         \[ \begin{gathered}
         A^{(d)}_H\ket{0}^{\gamma}\left(\sum_{j_0,\dots,j_{d-1}}\phi_{j_0,\dots,j_1}\prod_{y=0}^{d-1}\otimes\ket{0}^3\ket{0}^{n_y}\ket{j_y}^{n_y}\right)=\left({U^{H^{(d)}}_\alpha}^T\otimes I^{\otimes \sum_{y=0}^{d-1}(2n_y+1)}\right)\left(\prod_{k=0}^{\eta_d-1}C_{A^{(L_k)}_{q_km_k}}^{k}\right)\left[U^{H^{(d)}}_\alpha\ket{0}^{\gamma}\right]\\\times 
         \left(\sum_{j_0,\dots,j_{d-1}}\phi_{j_0,\dots,j_1}\prod_{y=0}^{d-1}\otimes\ket{0}^3\ket{0}^{n_y}\ket{j_y}^{n_y}\right)\\=
         \frac{1}{\mathcal{N}^{(d)}_H}\sum_{i_0=0}^{2^{n_0}-1}\sum_{i_1=0}^{2^{n_1}-1}\dots\sum_{i_{d-1}=0}^{2^{n_{d-1}}-1}\sum_{j_0,j_1,\dots,j_{d-1}\in F_{i_0,i_1,\dots,i_{d-1}}^H}\hat{H}^{(d)}_{i_0i_1\dots i_{d-1}j_0j_1\dots j_{d-1}}\ket{0}^{\gamma}\phi_{j_0,\dots,j_1}\prod_{y=0}^{d-1}\otimes\ket{0}^3\ket{i_y}^{n_y}\ket{j_y}^{n_y}+\dots\\
         \rightarrow \frac{1}{\mathcal{N}^{(d)}_H}\sum_{i_0=0}^{2^{n_0}-1}\sum_{i_1=0}^{2^{n_1}-1}\dots\sum_{i_{d-1}=0}^{2^{n_{d-1}}-1}\sum_{j_0,j_1,\dots,j_{d-1}\in F_{i_0,i_1,\dots,i_{d-1}}^H}\hat{H}^{(d)}_{i_0i_1\dots i_{d-1}j_0j_1\dots j_{d-1}}\ket{0}^{\gamma+3d}\phi_{j_0,\dots,j_1}\prod_{y=0}^{d-1}\otimes\ket{i_y}^{n_y}\ket{j_y}^{n_y}+\dots\\=
         \frac{1}{\mathcal{N}^{(d)}_H}\sum_{i_0,j_0=0}^{2^{n_0}-1}\sum_{i_1,j_1=0}^{2^{n_1}-1}\dots\sum_{i_{d-1},j_{d-1}=0}^{2^{n_{d-1}}-1}\hat{H}^{(d)}_{i_0i_1\dots i_{d-1}j_0j_1\dots j_{d-1}}\ket{0}^{\gamma+3d}\phi_{j_0,\dots,j_1}\prod_{y=0}^{d-1}\otimes\ket{i_y}^{n_y}\ket{j_y}^{n_y}+\dots;
         \end{gathered}\]
         \[\mathcal{N}^{(d)}_H=\sum_{k=0}^{\eta_d-1}\abs{\alpha_{k}}\prod_{\sigma=0}^{d-1}\mathcal{N}_{m_k(\sigma)}\]
    with $\mathcal{O}\left(\sum_{k=0}^{\eta_d-1}\sum_{y=0}^{d-1}(q_{k}^{(y)}n_y^2+m^{(y)}_kn_y)\right)$ gates and $\mathcal{O}\left(\sum_{y=0}^dn_y\right)$ pure ancillas. We use the notation $F_{i_0,i_1,\dots,i_{d-1}}^H$ for a set of non-zero indexes of non-zero elements in the row with index $i_0,i_1,\dots,i_{d-1}$ of the Hamiltonian $H^{(d)}$. This underlines the entanglement between spatial dimensions registers which arises here. However, we stress out that the other elements of the Hamiltonian are still zeros.
    
    The orthogonal state is omitted for simplicity; however, it can be accurately retrieved if it is needed. Similarly, the procedure exploits the linear combination theorem applied on $A^{(L_k)}_{q_km_k}$. The quantum circuit for the $A^{(d)}_H$ is very similar to one depicted in Fig.~\ref{fig:A_H quantum_circuit}. The auxiliary unitary $U^{H^{(d)}}_\alpha$ is given by
    
         \[ U^{H^{(d)}}_\alpha\ket{0}^{\gamma}=\frac{1}{\sqrt{\mathcal{N}^{(d)}_H}}\sum^{2^{\gamma}-1}_{k=0}\sqrt{i^{\sum_\sigma m_k(\sigma)}\alpha_{k}\left(\prod_{\sigma=0}^{d-1}\mathcal{N}_{m_k(\sigma)}\right)}\ket{k}^{\gamma}.\]  
    
    \end{corollary}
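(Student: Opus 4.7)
The plan is to build $A^{(d)}_H$ by combining, via Linear Combination of Unitaries (LCU), the $d$-fold tensor-product single-term auxiliaries $A^{(L_k)}_{q_k m_k}$ already written in Eq.~\eqref{eq:A_xp_multi_dimensional}. First I would observe that because the factors in each term of $H^{(d)}$ act on disjoint pairs of $n_y$-qubit row/column registers, the single-term operator $A^{(L_k)}_{q_k m_k}$ factorizes as a tensor product of one-mode auxiliaries $A_{R(L_{ky},P_{q_k^{(y)}}(\hat{x}_y),\hat{p}_y^{m_k^{(y)}})}$ taken from Lemma~\ref{lemma: 1-term block-encoding auxiliary}. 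Consequently, for any input of the product form the resulting amplitude is exactly the product of the matrix elements of each factor, which is the matrix element of the full tensor-product term of $\hat{H}^{(d)}$. No entanglement between mode registers is introduced within a single term; the ``mixing'' across modes is created only by the LCU step.

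Second, I would set up the LCU in direct analogy with Corollary~\ref{corollary:Hamiltonian block-encoding auxiliary}. Using the prep-state $U^{H^{(d)}}_\alpha|0\rangle^\gamma$ that encodes the amplitudes $\sqrt{i^{\sum_\sigma m_k(\sigma)}\alpha_k \prod_\sigma \mathcal{N}_{m_k(\sigma)}}$, one applies a single \textsc{select} operator $\prod_{k} C^{k}_{A^{(L_k)}_{q_k m_k}}$ on the $\gamma$-qubit coefficient register, then uncomputes with ${U^{H^{(d)}}_\alpha}^T$. Writing out the amplitude on the $|0\rangle^\gamma$-branch of the coefficient register gives $\frac{1}{\mathcal{N}^{(d)}_H}\sum_k \alpha_k \prod_y [\cdots]_{i_y j_y}$, which is exactly the matrix element $\hat{H}^{(d)}_{i_0\dots i_{d-1} j_0\dots j_{d-1}}/\mathcal{N}^{(d)}_H$. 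The phase convention $i^{\sum_\sigma m_k(\sigma)}$ absorbs the $(-1)^{m_k}$ sign from each $A_{P_q p^m}$ factor and the omitted $(-i)^{m_k}$ global phase of the momentum sparse-amplitude oracle (Appendix~\ref{appendix_oracle_for_momentum}), exactly as in the one-mode construction. The three-ancilla registers of each one-mode auxiliary are then reorganized (without cost) so that all block-encoding ancillas of all modes, together with the $\gamma$ coefficient ancillas, form the single $(3d+\gamma)$-qubit ancilla register on which we project.

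Third, I would account for resources. Each tensor-product block $A^{(L_k)}_{q_k m_k}$ costs, by Lemma~\ref{lemma: 1-term block-encoding auxiliary} applied in each mode, $\sum_{y=0}^{d-1}\mathcal{O}(q_k^{(y)} n_y^2 + m_k^{(y)} n_y)$ gates; its controlled version is only a constant factor worse. Summing over the $\eta_d = 2^\gamma$ terms yields the claimed gate count $\mathcal{O}\bigl(\sum_{k}\sum_{y}(q_k^{(y)} n_y^2 + m_k^{(y)} n_y)\bigr)$, with the LCU prep unitary $U^{H^{(d)}}_\alpha$ contributing only $\mathcal{O}(2^\gamma)$ gates, a lower-order term once it is absorbed into the outer sum. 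The pure-ancilla count is the $2n_y-1$ ancillas internally used inside each one-mode $A_{P_{q^{(y)}_k} p^{m_k^{(y)}}}$, reused across terms, plus at most $\mathcal{O}(\gamma)$ for the multi-controlled \textsc{select}, giving $\mathcal{O}(\sum_y n_y + \log\eta_d)$ pure ancillas.

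The main obstacle I expect is bookkeeping the orthogonal components: when the controlled $A^{(L_k)}_{q_km_k}$ fires for a particular $k$ on the coefficient branch, its ``garbage'' subspace $|\bot_0\rangle$ must be orthogonal to the projector $|0\rangle^{3d}\langle 0|^{3d}$ used to select the good block, uniformly in $k$ and in all row/column indices. This follows because in each mode the one-mode auxiliary of Lemma~\ref{lemma: 1-term block-encoding auxiliary} places its garbage in the subspace orthogonal to $|0\rangle^3$ on the three ancillas dedicated to that mode, and tensor products preserve this orthogonality. Once this is made explicit, the projection $(\langle 0|^{\gamma+3d}\otimes I)\,A^{(d)}_H\,(|0\rangle^{\gamma+3d}\otimes I)$ computes $\hat{H}^{(d)}/\mathcal{N}^{(d)}_H$, which is the desired block-encoding auxiliary and finishes the proof.
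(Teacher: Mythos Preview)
Your proposal is correct and follows essentially the same route as the paper: build each $A^{(L_k)}_{q_km_k}$ as a tensor product of the one-mode auxiliaries of Lemma~\ref{lemma: 1-term block-encoding auxiliary} (this is Eq.~\eqref{eq:A_xp_multi_dimensional}), then run the LCU construction of Corollary~\ref{corollary:Hamiltonian block-encoding auxiliary} with the modified prep state $U^{H^{(d)}}_\alpha$ and \textsc{select} $\prod_k C^{k}_{A^{(L_k)}_{q_km_k}}$, exactly as in Fig.~\ref{fig:A_H quantum_circuit}. Your explicit treatment of the orthogonality of the garbage components across modes and of the phase bookkeeping $i^{\sum_\sigma m_k(\sigma)}$ goes a bit beyond what the paper spells out, but the underlying construction and resource accounting are the same.
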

    
    The next step from our road-map (see Fig.~\ref{fig:road_map}) is to create the Block-encoding for the Hamiltonian (\ref{eq:multi-dimensional_Hamiltonian}) which can be fulfilled in analogous way to the Theorem \ref{theorem:Hamiltonian Block-encoding}.
    
    The final step is to build a block-encoding for the multidimensional Hamiltonian $H^{(d)}$ which is a non-trivial step in this generalisation to a multi-dimensional case since we need to consider the multidimensional sparsity encoded by $F^H_{i_0,i_1,\dots,i_{d-1}}$. In particular, to realise the scheme as in the Fig.~\ref{fig:Block-encoding_qc} we should build a multidimensional Banded-sparse-access. The view of (\ref{eq:multi-dimensional_Hamiltonian}) and matrix representation of momentum operator $\hat{p}_y$ allow us to accurately retrieve the set $F^H_{i_0,i_1,\dots,i_{d-1}}$. Thus, all the ingredients for the multidimensional block-encoding are prepared.

    \begin{theorem}[Multi-dimensional block-encoding]\label{theorem:Multi-dimensional block encoding}
        Let $H^{(d)}$ be a Hamiltonian as in Eq.~\ref{eq:multi-dimensional_Hamiltonian_first} with the number of terms $\eta_d=2^\gamma$, sparsity $2^l$, $n=\sum_{y}n_y$-qubit matrix representation $\hat{H}^{(d)}$, and normalization constant $\mathcal{N}^{(d)}_H$ (see Corollary~\ref{corollary:multi-dimensional Hamiltonian superposition}) then we can implement a $(\sqrt{2^{l}}\mathcal{N}^{(d)}_H,l+\gamma+3d,0)$-block-encoding of $\hat{H}$ with $\mathcal{O}\left(\sum_{k=0}^{\eta_d-1}\sum_{y=0}^{d-1}(q_k^{(y)}n_y^2+m^{(y)}_kn_y)\right)$ gates and $\mathcal{O}\left(\sum_{y=0}^dn_y+\log(\eta)\right)$ pure ancillas.
    \end{theorem}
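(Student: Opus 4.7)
The plan is to mimic the single-mode strategy of Theorem~\ref{theorem:Hamiltonian Block-encoding} (Fig.~\ref{fig:Block-encoding_qc}) but applied tensor-wise across the $d$ modes. The starting point is Corollary~\ref{corollary:multi-dimensional Hamiltonian superposition}, which already delivers an auxiliary unitary $A_H^{(d)}$ mapping $\ket{0}^{\gamma}\otimes\bigotimes_{y=0}^{d-1}\ket{0}^3\ket{0}^{n_y}\ket{\phi_y}$ into $\tfrac{1}{\mathcal{N}_H^{(d)}}\sum_{\mathbf{i},\mathbf{j}} \hat{H}^{(d)}_{\mathbf{i}\mathbf{j}}\,\phi_{\mathbf{j}}\,\ket{0}^{\gamma+3d}\bigotimes_y\ket{i_y}^{n_y}\ket{j_y}^{n_y} + \ket{\bot_0}$, where the column registers carry the input $\ket{\phi}$ and the row registers $\ket{i_y}^{n_y}$ have been produced by the per-mode $A^{(L_k)}_{q_k m_k}$ routines. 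In particular the output has the correct $\hat{H}^{(d)}$ amplitudes on each $\ket{i_y}\ket{j_y}$ pair, so the only remaining obstacle to calling this a block-encoding is that the row registers live on an $n$-qubit workspace rather than on the ancilla flag $\ket{0}^{\ast}$ that the block-encoding definition requires.

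To discharge this obstacle I would construct a multidimensional Banded-sparse-access $\hat{O}^{BS}_{H^{(d)}}$ and apply its adjoint after $A^{(d)}_H$, exactly as in Fig.~\ref{fig:Block-encoding_qc}. Since $\hat{H}^{(d)}$ is a sum of tensor products, for each term $k$ the set of non-zero column indices given a row splits as a product over modes of the per-mode sparse supports of $\hat{p}_y^{m_k^{(y)}}$; equivalently, the matrix is banded in every mode independently. I would therefore implement $\hat{O}^{BS}_{H^{(d)}}$ as a block-diagonal (controlled on the $\gamma$-qubit LCU register holding $k$) of the tensor products $\bigotimes_{y}\hat{O}^{BS}_{p_y^{m_k^{(y)}}}$ from Lemma~\ref{Banded-sparse-access lemma}. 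The sparsity index splits into $d$ sub-indices of widths $\lceil \log_2(2m_k^{(y)}+1)\rceil$, so $l=\sum_y l_k^{(y)}$ suffices after taking the maximum over $k$ and padding. Applying $\hat{O}^{BS\dagger}_{H^{(d)}}$ conditioned on $k$ then uncomputes the row ancillas into pure $\ket{0}^{n_y-l_y}$ plus the sparsity index $\ket{s_y}^{l_y}$, precisely as in the 1D case.

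With this sandwich in hand the resulting $U_H^{(d)}$ is a $(\sqrt{2^{l}}\mathcal{N}^{(d)}_H,\,l+\gamma+3d,\,0)$-block-encoding by the same projector calculation as in Theorem~\ref{theorem:Hamiltonian Block-encoding}: the three ancilla flags per mode arise from the three blocks inside each $A_{P_qp^m}$, and the $l+\gamma$ extra qubits come from the global sparsity/LCU registers. Gate counting is additive across modes and terms: the $k$th term contributes $\sum_y \mathcal{O}(q_k^{(y)} n_y^2 + m_k^{(y)} n_y)$ from the stacked $\hat{O}_{P_{q^{(y)}_k}}$ and $\hat{O}^{S}_{p_y^{m^{(y)}_k}}$ of Lemma~\ref{lemma: 1-term block-encoding auxiliary}, the LCU prep $U^{H^{(d)}}_\alpha$ costs $\mathcal{O}(\eta_d)=\mathcal{O}(2^\gamma)$, and the multidimensional Banded-sparse-access costs $\sum_y\mathcal{O}(2^{l_y} n_y)$, all subsumed by the stated bound. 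Ancillas decompose into the $\mathcal{O}(n_y)$ pure ancillas from each per-mode Banded-sparse-access plus the $\gamma=\log\eta$ LCU qubits, giving the claimed $\mathcal{O}\!\left(\sum_y n_y+\log\eta\right)$.

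The hard part will be verifying that controlling the tensor-product $\bigotimes_y\hat{O}^{BS}_{p_y^{m_k^{(y)}}}$ on the term index $k$ does not inflate the gate count beyond the stated $\mathcal{O}(\eta_d)$ prefactor, and that the per-mode sparsity supports interlock cleanly to give a single global sparsity register of size $l$ rather than $d$ separate ones. I would handle both by reusing the same amplitude-sparse construction on each mode independently (so the control on $k$ affects only the $m_k^{(y)}$ choice inside a single mode), reordering qubits so that the $s_y$ slots concatenate into one $l$-qubit register, and checking that the orthogonal-junk states produced by the per-mode routines remain orthogonal to $\ket{0}^{l+\gamma+3d}$ under tensor product — a routine but notationally heavy verification that mirrors Eq.~\eqref{eq:Clarifying A_p_x}.
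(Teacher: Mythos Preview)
Your high-level route matches the paper exactly: invoke Corollary~\ref{corollary:multi-dimensional Hamiltonian superposition} to get $A^{(d)}_H$, then sandwich with a multidimensional Banded-sparse-access $\hat{O}^{BS\dagger}_{H^{(d)}}$ precisely as in Fig.~\ref{fig:Block-encoding_qc}, and read off the gate and ancilla counts additively over modes and terms. The paper's own proof says nothing more than this.

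There is, however, a genuine gap in your proposed implementation of $\hat{O}^{BS}_{H^{(d)}}$. You want to build it ``block-diagonal (controlled on the $\gamma$-qubit LCU register holding $k$)'' as $\bigotimes_y\hat{O}^{BS}_{p_y^{m_k^{(y)}}}$. But at the point in the circuit where $\hat{O}^{BS\dagger}_{H^{(d)}}$ acts --- namely \emph{after} $A^{(d)}_H$ --- the $\gamma$-qubit LCU register has already been uncomputed by $(U^{H^{(d)}}_\alpha)^T$ inside $A^{(d)}_H$ and sits (on the good branch) in $\ket{0}^\gamma$. There is no term index $k$ left to condition on; the superposition over terms has already been collapsed into matrix elements of the full $\hat{H}^{(d)}$. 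Conditioning the sparse-access on $k$ at this stage is therefore not available.

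The fix is the one the paper takes implicitly: $\hat{O}^{BS}_{H^{(d)}}$ must be the Banded-sparse-access for the \emph{global} sparsity pattern of $\hat{H}^{(d)}$, not for individual terms. Since the non-zero support of $\hat{H}^{(d)}$ is, in each mode $y$, the union over $k$ of the bands of $\hat{p}_y^{m_k^{(y)}}$, it is itself banded in each mode with half-bandwidth $\max_k m_k^{(y)}$. Hence $\hat{O}^{BS}_{H^{(d)}}$ can be realised as an \emph{uncontrolled} tensor product $\bigotimes_y \hat{O}^{BS}_y$, where each $\hat{O}^{BS}_y$ is the single-mode Banded-sparse-access of Lemma~\ref{Banded-sparse-access lemma} built for the worst-case per-mode sparsity. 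The per-mode sparse-index slots $\ket{s_y}^{l_y}$ then concatenate into one $l$-qubit register exactly as you describe, and your gate and ancilla estimates go through unchanged.
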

    
    \begin{proof}
        As a proof we address the Fig.~\ref{fig:Block-encoding_qc} which precisely describes the quantum circuit for the multi-dimensional case if $\hat{O}^{BS}_H$, $A_H$ are substituted by the multidimensional analogues $\hat{O}^{BS}_{H^{(d)}}$, $A^{(d)}_H$. Both of them described earlier in the Subsection.
    \end{proof}
    
    As demonstrated, The batch of techniques described in the Section~\ref{main_section} combined together allows us to extend our approach for a wide class of boundary conditions, finite-difference schemes, and the number of dimensions. Now, using the multidimensional block-encoding as as input for Theorem~\ref{theorem: optimal block-hamiltonian simultaions}, we suggest construction a straightforward quantum circuit for the evolution operator.

    \section{Conclusion}\label{sec:conclusion}
     In this paper, we have advanced the methodology for explicit gate construction for block-encoding of Hamiltonians, crucial for quantum simulations of linear partial differential equations. Leveraging the foundational concept of block-encoding, our approach refines its application to enhance both theoretical and practical aspects of quantum simulations by providing the algorithm which is supreme compare to classical state of the art analogues for solving PDEs.

    The core of our contribution is the development of an efficient gate construction strategy for block-encoding of
    \begin{eqnarray}
    \begin{gathered}
        H^{(d)}=\sum_{k=0}^{\eta_d-1}\alpha_{k}\prod_{y=0}^{d-1}\otimes R(L_{ky},P_{q^{(y)}_k}(x_y),p_y^{m^{(y)}_k});\\
        R(0,P_{q^{(y)}_k}(x_y),p_y^{m^{(y)}_k})=P_{q^{(y)}_k}(x_y)p_y^{m^{(y)}_k};\quad R(1,P_{q^{(y)}_k}(x_y),p_y^{m^{(y)}_k})=p_y^{m^{(y)}_k}P_{q^{(y)}_k}(x_y),
        \end{gathered}
    \end{eqnarray}
    which demonstrates the potential for significant computational speedups over traditional classical finite-difference methods in terms of solving a wide class of linear partial differential equation with coefficients which are polynomial functions. By detailing the construction of these encodings with efficient one- and two-qubit operations, we provide a practical blueprint for implementing quantum simulations on currently emerging digital quantum hardware.
    
    This work not only fills an important gap by providing efficient way for for practical implementations but also sets a foundation for further research into optimizing these methods for a wider range of Hamiltonians. As future work we aim to extend our approach to a wider class of numerical schemes and extend the applicability to wider classes of boundary conditions. We will also examine important partial differential equations individually.

\section{Tables}\label{sec:tables}
Here we present our table of notations for both the one-dimensional and multidimensional case. For convenience of the reader, we also present a table summarising the exact one and two-qubit count for each part involved in our protocol.

\begin{table}
\caption{Table of notations.}\label{table_of_notations}
\makegapedcells
\begin{tabular}{
  |p{\dimexpr.095\linewidth-2\tabcolsep-1.3333\arrayrulewidth}
  |p{\dimexpr.4\linewidth-2\tabcolsep-1.3333\arrayrulewidth}
  |p{\dimexpr.4\linewidth-2\tabcolsep-1.3333\arrayrulewidth}|%
  |p{\dimexpr.095\linewidth-2\tabcolsep-1.3333\arrayrulewidth}|
  }
  \hline
  \centering \textbf{Notation}     & \centering \textbf{Name} & \centering \textbf{Formula}     & \centering \arraybackslash 
 \textbf{First app.}    \\ \hline

  \centering $\hat{H}$ & \centering Finite-difference representation of $H$& \centering $\hat{H}=\sum_{k=0}^{\eta-1}\left(\alpha_{k}P_{q_k}(\hat{x})\hat{p}^{m_k}+\alpha^*_{k}\hat{p}^{m_k}P_{q_k}(\hat{x})\right)$ &  \centering\arraybackslash Eq.~\ref{eq:numeric_main_Hamiltonian}\\ \hline
  
  \centering  $\hat{x} $& \centering  Finite-difference representation of coordinate operator $x$ over the interval $(a,b)$ &\centering   $\hat{x}=\left(\begin{array}{ccccc}
    a & 0  & \dotsm & 0 & 0 \\
    0 & a+\Delta x  &\dotsm& 0 & 0\\
    \rotatebox[origin=c]{270}{\dots}&&\rotatebox[origin=c]{-45}{\dots}&&\rotatebox[origin=c]{270}{\dots}\\ 
    0 & 0 &\dotsm & b-\Delta x &0\\
    0 & 0 &\dotsm & 0 &b\\
    \end{array}
    \right)$ &\centering\arraybackslash Eq.~\ref{eq:x_p_cyclic_matrices}\\ \hline
  
  \centering $\hat{p}$ &\centering Finite-difference representation of momentum operator $p$ for periodic spatial boundary conditions and the central symmetrical finite-difference scheme &\centering   $\hat{p}=-\frac{i}{2\Delta x}\left(\begin{array}{ccccccc}
    0 & 1 & 0& \dotsm &0& 0 & -1 \\
    -1 & 0 & 1&\dotsm& 0&0 & 0\\
    \rotatebox[origin=c]{270}{\dots}&&&\rotatebox[origin=c]{-45}{\dots}&&&\rotatebox[origin=c]{270}{\dots}\\ 
    0 & 0&0 &\dotsm &-1& 0 &1\\
    1 & 0&0 &\dotsm &0& -1 &0\\
    \end{array}
    \right)$ &\centering\arraybackslash Eq.~\ref{eq:x_p_cyclic_matrices}\\ \hline

  \centering $C^b_U$ & Multi-control operator that apply $U$ on the target if the control is in $\ket{b}$ &\centering $C^b_U=\ket{b}^n\bra{b}^n\otimes U+\sum\limits_{\substack{i=0,\dots,2^n-1\\ i\neq b}}\ket{i}^n\bra{i}^n\otimes I^{\otimes m}$ &\centering\arraybackslash Def.~\ref{def:multiconrol operator}\\ \hline

  \centering $\hat{O}_A^{BS}$& \centering Banded-sparse-access to matrix $A$ &\centering $\hat{O}^{BS}_A\ket{0}^{n-l}\ket{s}^l\ket{i}^n:=\ket{r_{si}}^n\ket{i}^n$ &\centering\arraybackslash Def.~\ref{Banded-sparse-access}\\ \hline

  \centering $\hat{O}^S_{p^m}$& \centering Amplitude-unitary-oracle-access to the operator $\hat{p}^m$ &\centering $\hat{O}^S_{p^m}\ket{0}^1\ket{s}^l:=\frac{(\hat{p}^m)^{(s)}}{\sqrt{\mathcal{N}_{p^m}}}\ket{0}^1\ket{s}^l+\sqrt{1-\frac{\abs{(\hat{p}^m)^{(s)}}^2}{\mathcal{N}_{p^m}}}\ket{1}^1\ket{s}^l$ &\centering\arraybackslash Eq.~\ref{momentum oracle}\\ \hline

  \centering $\mathcal{N}_{p^m}$& \centering Normalization constant for momentum operator &\centering $\mathcal{N}_{p^m}\geq\norm{\hat{p}^m}^2_{\max}$ &\centering\arraybackslash Eq.~\ref{momentum oracle}\\ \hline
  
  \centering $U_{x^q}$& \centering Coordinate-superposition-unitary &\centering $U_{x^q}\ket{0}^n=\frac{1}{\sqrt{\sum_{\kappa=0}^{2^n-1}\hat{x}_{\kappa\kappa}^{2q}}}\sum_{i=0}^{2^n-1}\hat{x}_{ii}^{q}\ket{i}^n$ &\centering\arraybackslash Def.~\ref{def:Coordinate-superposition-unitary}\\ \hline
  
  \centering $A_{P_qp^m}$& \centering $1$-term block-encoding auxiliary $P_q(\hat{x})\hat{p}^m$&\centering $A_{P_qp^m}\sum_{j=0}^{2^n-1}\sigma_j\ket{0}^3\ket{0}^n\ket{j}^n=\frac{(-1)^m}{\sqrt{2^l\mathcal{N}_{p^m}}} 
   $ $\sum\limits_{\substack{i=0,\dots,2^n-1\\ j\in F^{(m)}_i}}\underbrace{P_q(\hat{x}_{ii})(\hat{p}^m)_{ij}}_{(P_q(\hat{x})\hat{p}^m)_{ij}}\sigma_j\ket{0}^1\ket{i}^n\ket{j}^n +\dots$ &\centering\arraybackslash Eq.~\ref{eq:1term block encoding auxiliarry}\\ \hline

   \centering $A_{p^mP_q}$& \centering $1$-term block-encoding auxiliary $\hat{p}^mP_q(\hat{x})$&\centering $A_{p^mP_q}\sum_{j=0}^{2^n-1}\sigma_j\ket{0}^3\ket{0}^n\ket{j}^n=\frac{(-1)^m}{\sqrt{2^l\mathcal{N}_{p^m}}} 
   $ $\sum\limits_{\substack{i=0,\dots,2^n-1\\ j\in F^{(m)}_i}}\underbrace{(\hat{p}^m)_{ij}P_q(\hat{x}_{jj})}_{(\hat{p}^mP_q(\hat{x}))_{ij}}\sigma_j\ket{0}^1\ket{i}^n\ket{j}^n +\dots$ &\centering\arraybackslash Eq.~\ref{eq:1term block encoding auxiliarry}\\ \hline
  
  
  \centering $F_i^{(m)}$ & \centering Set of indexes of non-zero elements in the $i$-th row of $\hat{p}^m$ &\centering $None$ &\centering\arraybackslash Eq.~\ref{eq:1term block encoding auxiliarry}\\ \hline
  
  \centering $A_H$& \centering Block-encoding auxiliary &\centering $(\bra{0}^{\gamma+4}\otimes I^{\otimes 2n})A_H\sum_{j=0}^{2^n-1}\sigma_j\ket{0}^{\gamma+4}\ket{0}^n\ket{j}^n\sim \sum\limits_{\substack{i=0,\dots,2^n-1\\ j\in F^H_i}}\hat{H}_{ij}\ket{i}^n\ket{j}^n$ &\centering\arraybackslash Cor.~\ref{corollary:Hamiltonian block-encoding auxiliary}\\ \hline
  
  \centering $F_i^H$& \centering Set of indexes of non-zero elements in the $i$-th row of matrix $\hat{H}$&\centering $None$ &\centering\arraybackslash Cor.~\ref{corollary:Hamiltonian block-encoding auxiliary}\\ \hline
  
  \centering $\mathcal{N}_H$& \centering Normalization constant for $A_H$ &\centering $2\sum_{\kappa=0}^{2^\gamma -1}\abs{\alpha_k}\sqrt{2^{l_\kappa}\mathcal{N}_{p^{m_k}}}$ &\centering\arraybackslash Cor.~\ref{corollary:Hamiltonian block-encoding auxiliary}\\ \hline
  
  
  \centering $U_H$& \centering Block-encoding of the Hamiltonian $H$&\centering $U_H\ket{0}^\lambda\ket{\phi}^n=\Phi_H\sum\limits_{\substack{i=0,\dots,2^n-1\\ j\in F^H_i}}\left(\hat{H}_{ij}\sigma_j\ket{0}^\lambda
        +J^U_{\hat{H}_{ij}}\ket{\bot_{0}}^\lambda\right)\ket{i}^n$ &\centering\arraybackslash Thm.~\ref{theorem:Hamiltonian Block-encoding}\\ \hline
\end{tabular}

\end{table}
\begin{sidewaystable}
\centering
\caption{Table of circuit's complexity.(Here, $n$ denotes the number of qubits such that $2^n$ corresponds to the size of the Hamiltonian $\hat{H}$ in Eq.(\ref{eq:numeric_main_Hamiltonian}) and $\gamma$ denotes the number of qubits such that $\eta=2^\gamma$ is the number of terms in Eq.(\ref{eq:numeric_main_Hamiltonian}). And $l$ is also the number of qubits such that $2^l$ indicates the sparsity of the corresponding matrix. Such as, for $\hat{O}^{BS}_A$, the Banded-sparse-access to matrix to matrix $A$, $2^l$ is the sparsity of matrix $A$. Furthermore, we define $q_{\text{max}} = \max_k q_k$, $2^{l_{\text{max}}} = \max_k m_k = m_{\text{max}}$, $2^{l'{\text{max}}} = \max_k (2m_k+1)$, where $q_k$ and $m_k$ are the exponent of finite difference matrices for the coordinate $\hat{x}$ and momentum operator $\hat{p}$ of Hamiltonian $\hat{H}$ in Eq.(\ref{eq:numeric_main_Hamiltonian}) respectively. )}\label{table_scaling}
\makegapedcells
\begin{tabular}{
   |p{\dimexpr.233\linewidth-2\tabcolsep-1.3333\arrayrulewidth}
   |p{\dimexpr.075\linewidth-2\tabcolsep-1.3333\arrayrulewidth}
   |p{\dimexpr.11\linewidth-2\tabcolsep-1.3333\arrayrulewidth}
  |p{\dimexpr.26\linewidth-2\tabcolsep-1.3333\arrayrulewidth}
  |p{\dimexpr.22\linewidth-2\tabcolsep-1.3333\arrayrulewidth}%
  |p{\dimexpr.07\linewidth-2\tabcolsep-1.3333\arrayrulewidth}
  |p{\dimexpr.065\linewidth-2\tabcolsep-1.3333\arrayrulewidth}|
  }
     \hline
     \centering\textbf{Name}&\centering\textbf{Notation}&\centering\textbf{Scaling}& \centering\textbf{C-NOTS}&\centering\textbf{One-qubits operations} & \centering\textbf{Ancillas}& \textbf{Main Thm.} \\
     \hline
     
     Banded-sparse-access to matrix $A$&$\hat{O}_A^{BS}$& $\mathcal{O}(mn)$&$(2^l+1)(32n-48)$  &$25 \times 2^l n - 36 \times 2^l + 32 n - 48$&$n-1$ & Lem. \ref{Banded-sparse-access lemma}\\
     \hline
     
     Coordinate-superposition-unitary&$U_{x^q}$&$\mathcal{O}\left(\frac{n^{q+1}}{q!}\right)$&$\frac{12n^{q+1}}{q!}$&$\frac{16n^{q+1}}{q!}$&$n-1$& Thm. \ref{theorem Finite binary norm Ansatz} \\
     \hline 
     
     Amplitude-oracle for $\hat{x}$&$\hat{O}_x$&$\mathcal{O}\left(n^2\right)$&$1864n^2-860n-92$&$2304n^2-1064n-109$&$2n-1$& Cor. \ref{Corollary: x-oracle}
     \\
     \hline
     
     Coordinate-polynomial-oracle &$\hat{O}_{P_q}$&$\mathcal{O}\left(qn^2\right)$&$1864qn^2-860qn-92q$&$2304qn^2-1064qn-108q$&$2n-1$& Thm. \ref{theorem: Singular value transformation by real polynomials}
     \\
     \hline

     Sparse-amplitude-oracle for the operator $\hat{p}^m$&$\hat{O}_{p^m}^S$&$\mathcal{O}(m)$&$2^l$&$2^l$&$0$& Lem. \ref{eq:x_p_cyclic_matrices} \\\hline

     $1$-term block-encoding auxiliary & $A_{P_qp^m}$ $(A_{p^mP_q})$ & $\mathcal{O}\left(qn^2+mn\right)$ & $1864qn^2-860qn+25\times 2^ln+32n-92q-17\times 2^l-48$ & $2304qn^2-1064qn+32\times 2^ln+32n-108q-48\times 2^l+2^l-48$ & $2n-1$ & Lem.~\ref{lemma: 1-term block-encoding auxiliary} \\\hline

     Block-encoding auxiliary & $A_H$ & $\mathcal{O}\bigl(\eta q_{\max} n^2 + \eta m_{\max} n\bigr)$ &  $2^{\gamma+1}\bigl[15792n^2q+107\times 2^{l_{\max}+1}+256n+12\gamma-7288nq-768q-49\times 2^{l_{\max}+1}-383\bigr]$ & $2^{\gamma+2}\bigl[9760n^2q+33\times2^{l_{\max}+2}n$ $+160n+8\gamma-4504nq-476q-115\times 2^{l_{\max}}-239\bigr]$ & $2n+\gamma-1$ & Cor.~\ref{corollary:Hamiltonian block-encoding auxiliary} \\\hline
     
     Block-encoding of Hamiltonian $H$&$U_H$&$\mathcal{O}\bigl(\eta q_{\max} n^2 + \eta m_{\max} n\bigr)$&$2^{\gamma+1}\bigl[15792n^2q+107\times 2^{l_{\max}+1}+256n+12\gamma-7288nq-768q-49\times 2^{l_{\max}+1}-383\bigr]+25\times 2^ln-36\times 2^l+32n-48$&$2^{\gamma+2}\bigl[9760n^2q+33\times2^{l_{\max}+2}n+160n+8\gamma-4504nq-476q-115\times 2^{l_{\max}}-239\bigr]+2^{l+5}n-3\times 2^{l+4}+32n+l-48$&$3n+\gamma+l-1$& Thm. \ref{theorem:Hamiltonian Block-encoding} \\
     \hline
\end{tabular}
\end{sidewaystable}

\begin{table}[h!]
\caption{Table of notations for a multidimensional case.}\label{table_of_notations_generalization_section}
\makegapedcells
\begin{tabular}{
  |p{\dimexpr.095\linewidth-2\tabcolsep-1.3333\arrayrulewidth}
  |p{\dimexpr.2\linewidth-2\tabcolsep-1.3333\arrayrulewidth}
  |p{\dimexpr.6\linewidth-2\tabcolsep-1.3333\arrayrulewidth}|%
  |p{\dimexpr.095\linewidth-2\tabcolsep-1.3333\arrayrulewidth}|
  }
  \hline
  \centering \textbf{Notation}     & \centering \textbf{Name} & \centering \textbf{Formula}     & \centering \arraybackslash 
 \textbf{First app.}    \\ \hline

  \centering 
  
  \centering $H^{(d)}$ & \centering Multi-dimensional Hamiltonian &\centering $\sum_{k=0}^{\eta_d-1}\alpha_{k}\prod_{y=0}^{d-1}\otimes R(L_{ky},P_{q^{(y)}_k}(x_y),p_y^{m^{(y)}_k})$ &  \centering\arraybackslash Eq.~\ref{eq:multi-dimensional_Hamiltonian} \\ \hline

  \centering $R$ & \centering Auxiliary functions for permutations &\centering $R(0,P_{q^{(y)}_k}(x_y),p_y^{m^{(y)}_k})=P_{q^{(y)}_k}(x_y)p_y^{m^{(y)}_k}$; $R(1,P_{q^{(y)}_k}(x_y),p_y^{m^{(y)}_k})=p_y^{m^{(y)}_k}P_{q^{(y)}_k}(x_y)$ &  \centering\arraybackslash Eq.~\ref{eq:multi-dimensional_Hamiltonian} \\ \hline

    \centering $A^{(L_k)}_{q_km_k}$ & \centering Multi-dimensional $1$-term block-encoding auxiliary &\centering $A^{(L_k)}_{q_km_k}\sum_{j_0,\dots,j_{n_{d-1}}}\alpha_{j_0,\dots,j_{n_{d-1}}}\left(\prod_{y=0}^{d-1}\otimes\ket{0}^3\ket{0}^{n_0}\ket{j_0}^{n_0}\right)=$ $\frac{1}{\prod_{\sigma=0}^{d-1}\mathcal{N}_{m_k(\sigma)}}\sum\limits_{\substack{i_0=0,\dots,2^{n_0}-1\\ j_0\in F^{0}_{i_0}}}\sum\limits_{\substack{i_1=0,\dots,2^{n_1}-1\\ j_1\in F^{1}_{i_1}}}\dots\sum\limits_{\substack{i_{d-1}=0,\dots,2^{n_{d-1}}-1\\ j_{d-1}\in F^{d-1}_{i_{d-1}}}} \alpha_{j_0,\dots,j_{n_{d-1}}}
    $ $ \bigl[\prod_{y=0}^{d-1}R(L_{ky},P_{q^{(y)}_k}(\hat{x}_y),\hat{p}^{m_k^{(y)}})_{i_yj_y}\bigr]\prod_{y=0}^{d-1}\otimes\ket{0}^3\ket{i_y}^{n_y}\ket{j_y}^{n_y}+\dots$ &  \centering\arraybackslash Eq.~\ref{eq:A_xp_multi_dimensional} \\ \hline

  \centering $\mathcal{N}_{m_k(\sigma)}$ & \centering Normalization constant for $A_{P_{q_k^{(\sigma)}}p_\sigma^{m_k^{(\sigma)}}}$ &\centering $\sqrt{2^{l_{k(\sigma)}}\mathcal{N}_{p^{m_k(\sigma)}}}$ &\centering\arraybackslash Eq.~\ref{eq:A_xp_multi_dimensional}\\ \hline

  \centering $F^y_i$ & \centering Set of non-zero elements in the $i$-th row of $\hat{p}^{m_k^{(y)}}$ &\centering $None$ &  \centering\arraybackslash Eq.~\ref{eq:A_xp_multi_dimensional} \\ \hline
  \centering $A^{(d)}_H$ & \centering Multi-dimensional Hamiltonian block-encoding auxiliary &\centering $A^{(d)}_H\ket{0}^{\gamma}\left(\sum_{j_0,\dots,j_{d-1}}\phi_{j_0,\dots,j_1}\prod_{y=0}^{d-1}\otimes\ket{0}^3\ket{0}^{n_y}\ket{j_y}^{n_y}\right)=$ $\frac{1}{\mathcal{N}^{(d)}_H}\sum_{i_0,j_0=0}^{2^{n_0}-1}\sum_{i_1,j_1=0}^{2^{n_1}-1}\dots\sum_{i_{d-1},j_{d-1}=0}^{2^{n_{d-1}}-1}\hat{H}^{(d)}_{i_0i_1\dots i_{d-1}j_0j_1\dots j_{d-1}}\ket{0}^{\gamma+3d}$ $\hat{H}^{(d)}_{i_0i_1\dots i_{d-1}j_0j_1\dots j_{d-1}}\prod_{y=0}^{d-1}\otimes\ket{i_y}^{n_y}\ket{j_y}^{n_y}+\dots$ &  \centering\arraybackslash Cor.~\ref{corollary:multi-dimensional Hamiltonian superposition} \\ \hline
  \centering $\mathcal{N}^{(d)}_H$ & \centering Normalization constant for $A^{(d)}_H$ &\centering $\sum_{k=0}^{\eta_d-1}\abs{\alpha_{k}}\prod_{\sigma=0}^{d-1}\mathcal{N}_{m_k(\sigma)}$ &  \centering\arraybackslash Cor.~\ref{corollary:multi-dimensional Hamiltonian superposition} \\ \hline

\end{tabular}

\end{table}

\begin{table}[h!]
\caption{Table of circuit's complexity in multi-dimension.(Here, $n_y$ denotes the number of qubits register encoding row and column numbers in $y$-th dimension in Eq. (\ref{eq:multi-dimensional_Hamiltonian}). $q_k^{(y)}$ and $m_k^{(y)}$ are the exponent of coordinate operator and momentum operator in $y$-th dimesion $x_y$ and $p_y$ in Eq. \ref{eq:multi-dimensional_Hamiltonian}). )}\label{table_scaling_multi-dimensional case}
\makegapedcells
\begin{tabular}{
   |p{\dimexpr.3\linewidth-2\tabcolsep-1.3333\arrayrulewidth}
   |p{\dimexpr.1\linewidth-2\tabcolsep-1.3333\arrayrulewidth}
   |p{\dimexpr.34\linewidth-2\tabcolsep-1.3333\arrayrulewidth}
  |p{\dimexpr.15\linewidth-2\tabcolsep-1.3333\arrayrulewidth}
  |p{\dimexpr.09\linewidth-2\tabcolsep-1.3333\arrayrulewidth}|
  }
     \hline
     \centering\textbf{Name}&\centering\textbf{Notation}&\centering\textbf{Scaling} & \centering\textbf{Ancillas}& \textbf{Main Thm.} \\
     \hline
     Multi-dimensional Hamiltonian block-encoding auxiliary &$A_H^{(d)}$&$\mathcal{O}\left(\sum_{k=0}^{\eta_d-1}\sum_{y=0}^{d-1}(q_{k}^{(y)}n_y^2+m^{(y)}_kn_y)\right)$&$\mathcal{O}\left(\sum_{y=0}^dn_y\right)$&Cor. \ref{corollary:multi-dimensional Hamiltonian superposition}\\\hline
     
     
     Multi-dimensional block-encoding of Hamiltonian & $H^{(d)}$&$\mathcal{O}\left(\sum_{k=0}^{\eta_d-1}\sum_{y=0}^{d-1}(q_k^{(y)}n_y^2+m^{(y)}_kn_y)\right)$ & $\mathcal{O}\bigl(\sum_{y=0}^dn_y+\log(\eta)\bigr)$ & Thm. \ref{theorem:Multi-dimensional block encoding} \\
     \hline
\end{tabular}
\end{table}

\section*{Acknowledgments}

Nana Liu acknowledges funding from the Science and Technology Program of  
Shanghai, China (21JC1402900). Nana Liu is also supported by NSFC grants  
No. 12471411 and No. 12341104, the Shanghai Jiao Tong University  
2030 Initiative, and the Fundamental Research Funds for the  
Central Universities.

\bibliography{references}

\begin{thebibliography}{46}%
\makeatletter
\providecommand \@ifxundefined [1]{%
 \@ifx{#1\undefined}
}%
\providecommand \@ifnum [1]{%
 \ifnum #1\expandafter \@firstoftwo
 \else \expandafter \@secondoftwo
 \fi
}%
\providecommand \@ifx [1]{%
 \ifx #1\expandafter \@firstoftwo
 \else \expandafter \@secondoftwo
 \fi
}%
\providecommand \natexlab [1]{#1}%
\providecommand \enquote  [1]{``#1''}%
\providecommand \bibnamefont  [1]{#1}%
\providecommand \bibfnamefont [1]{#1}%
\providecommand \citenamefont [1]{#1}%
\providecommand \href@noop [0]{\@secondoftwo}%
\providecommand \href [0]{\begingroup \@sanitize@url \@href}%
\providecommand \@href[1]{\@@startlink{#1}\@@href}%
\providecommand \@@href[1]{\endgroup#1\@@endlink}%
\providecommand \@sanitize@url [0]{\catcode `\\12\catcode `\$12\catcode `\&12\catcode `\#12\catcode `\^12\catcode `\_12\catcode `\%12\relax}%
\providecommand \@@startlink[1]{}%
\providecommand \@@endlink[0]{}%
\providecommand \url  [0]{\begingroup\@sanitize@url \@url }%
\providecommand \@url [1]{\endgroup\@href {#1}{\urlprefix }}%
\providecommand \urlprefix  [0]{URL }%
\providecommand \Eprint [0]{\href }%
\providecommand \doibase [0]{https://doi.org/}%
\providecommand \selectlanguage [0]{\@gobble}%
\providecommand \bibinfo  [0]{\@secondoftwo}%
\providecommand \bibfield  [0]{\@secondoftwo}%
\providecommand \translation [1]{[#1]}%
\providecommand \BibitemOpen [0]{}%
\providecommand \bibitemStop [0]{}%
\providecommand \bibitemNoStop [0]{.\EOS\space}%
\providecommand \EOS [0]{\spacefactor3000\relax}%
\providecommand \BibitemShut  [1]{\csname bibitem#1\endcsname}%
\let\auto@bib@innerbib\@empty
\bibitem [{\citenamefont {Pour-El}\ and\ \citenamefont {Richards}(1982)}]{FEYNMAN}%
  \BibitemOpen
  \bibfield  {author} {\bibinfo {author} {\bibfnamefont {M.}~\bibnamefont {Pour-El}}\ and\ \bibinfo {author} {\bibfnamefont {I.}~\bibnamefont {Richards}},\ }\bibfield  {title} {\bibinfo {title} {Simulating physics with computers},\ }\href@noop {} {\bibfield  {journal} {\bibinfo  {journal} {International Journal of Theoretical Physics}\ }\textbf {\bibinfo {volume} {21}},\ \bibinfo {pages} {553} (\bibinfo {year} {1982})}\BibitemShut {NoStop}%
\bibitem [{\citenamefont {Shor}(1994)}]{math_apply}%
  \BibitemOpen
  \bibfield  {author} {\bibinfo {author} {\bibfnamefont {P.~W.}\ \bibnamefont {Shor}},\ }\bibfield  {title} {\bibinfo {title} {Algorithms for quantum computation: discrete logarithms and factoring},\ }in\ \href@noop {} {\emph {\bibinfo {booktitle} {Proceedings 35th annual symposium on foundations of computer science}}}\ (\bibinfo {organization} {Ieee},\ \bibinfo {year} {1994})\ pp.\ \bibinfo {pages} {124--134}\BibitemShut {NoStop}%
\bibitem [{\citenamefont {Shor}(1999)}]{Shor}%
  \BibitemOpen
  \bibfield  {author} {\bibinfo {author} {\bibfnamefont {P.~W.}\ \bibnamefont {Shor}},\ }\bibfield  {title} {\bibinfo {title} {Polynomial-time algorithms for prime factorization and discrete logarithms on a quantum computer},\ }\href@noop {} {\bibfield  {journal} {\bibinfo  {journal} {SIAM Review}\ }\textbf {\bibinfo {volume} {41}},\ \bibinfo {pages} {303} (\bibinfo {year} {1999})}\BibitemShut {NoStop}%
\bibitem [{\citenamefont {Grover}(1996)}]{grover}%
  \BibitemOpen
  \bibfield  {author} {\bibinfo {author} {\bibfnamefont {L.~K.}\ \bibnamefont {Grover}},\ }\bibfield  {title} {\bibinfo {title} {A fast quantum mechanical algorithm for database search},\ }in\ \href@noop {} {\emph {\bibinfo {booktitle} {Proceedings of the twenty-eighth annual ACM symposium on Theory of computing}}}\ (\bibinfo {year} {1996})\ pp.\ \bibinfo {pages} {212--219}\BibitemShut {NoStop}%
\bibitem [{\citenamefont {Ambainis}(2004)}]{q_search}%
  \BibitemOpen
  \bibfield  {author} {\bibinfo {author} {\bibfnamefont {A.}~\bibnamefont {Ambainis}},\ }\bibfield  {title} {\bibinfo {title} {Quantum search algorithms},\ }\href@noop {} {\bibfield  {journal} {\bibinfo  {journal} {ACM SIGACT News}\ }\textbf {\bibinfo {volume} {35}},\ \bibinfo {pages} {22} (\bibinfo {year} {2004})}\BibitemShut {NoStop}%
\bibitem [{\citenamefont {Aspuru-Guzik}\ \emph {et~al.}(2005)\citenamefont {Aspuru-Guzik}, \citenamefont {Dutoi}, \citenamefont {Love},\ and\ \citenamefont {Head-Gordon}}]{aspuru2005simulated}%
  \BibitemOpen
  \bibfield  {author} {\bibinfo {author} {\bibfnamefont {A.}~\bibnamefont {Aspuru-Guzik}}, \bibinfo {author} {\bibfnamefont {A.~D.}\ \bibnamefont {Dutoi}}, \bibinfo {author} {\bibfnamefont {P.~J.}\ \bibnamefont {Love}},\ and\ \bibinfo {author} {\bibfnamefont {M.}~\bibnamefont {Head-Gordon}},\ }\bibfield  {title} {\bibinfo {title} {Simulated quantum computation of molecular energies},\ }\href@noop {} {\bibfield  {journal} {\bibinfo  {journal} {Science}\ }\textbf {\bibinfo {volume} {309}},\ \bibinfo {pages} {1704} (\bibinfo {year} {2005})}\BibitemShut {NoStop}%
\bibitem [{\citenamefont {Szegedy}(2004)}]{szegedy2004quantum}%
  \BibitemOpen
  \bibfield  {author} {\bibinfo {author} {\bibfnamefont {M.}~\bibnamefont {Szegedy}},\ }\bibfield  {title} {\bibinfo {title} {Quantum speed-up of markov chain based algorithms},\ }in\ \href@noop {} {\emph {\bibinfo {booktitle} {45th Annual IEEE symposium on foundations of computer science}}}\ (\bibinfo {organization} {IEEE},\ \bibinfo {year} {2004})\ pp.\ \bibinfo {pages} {32--41}\BibitemShut {NoStop}%
\bibitem [{\citenamefont {Harrow}\ \emph {et~al.}(2009)\citenamefont {Harrow}, \citenamefont {Hassidim},\ and\ \citenamefont {Lloyd}}]{harrow2009quantum}%
  \BibitemOpen
  \bibfield  {author} {\bibinfo {author} {\bibfnamefont {A.~W.}\ \bibnamefont {Harrow}}, \bibinfo {author} {\bibfnamefont {A.}~\bibnamefont {Hassidim}},\ and\ \bibinfo {author} {\bibfnamefont {S.}~\bibnamefont {Lloyd}},\ }\bibfield  {title} {\bibinfo {title} {Quantum algorithm for linear systems of equations},\ }\href@noop {} {\bibfield  {journal} {\bibinfo  {journal} {Physical review letters}\ }\textbf {\bibinfo {volume} {103}},\ \bibinfo {pages} {150502} (\bibinfo {year} {2009})}\BibitemShut {NoStop}%
\bibitem [{\citenamefont {Laughlin}\ and\ \citenamefont {Pines}(2000)}]{laughlin2000theory}%
  \BibitemOpen
  \bibfield  {author} {\bibinfo {author} {\bibfnamefont {R.~B.}\ \bibnamefont {Laughlin}}\ and\ \bibinfo {author} {\bibfnamefont {D.}~\bibnamefont {Pines}},\ }\bibfield  {title} {\bibinfo {title} {The theory of everything},\ }\href@noop {} {\bibfield  {journal} {\bibinfo  {journal} {Proceedings of the national academy of sciences}\ }\textbf {\bibinfo {volume} {97}},\ \bibinfo {pages} {28} (\bibinfo {year} {2000})}\BibitemShut {NoStop}%
\bibitem [{\citenamefont {Costa}\ \emph {et~al.}(2019)\citenamefont {Costa}, \citenamefont {Jordan},\ and\ \citenamefont {Ostrander}}]{costa2019quantum}%
  \BibitemOpen
  \bibfield  {author} {\bibinfo {author} {\bibfnamefont {P.~C.}\ \bibnamefont {Costa}}, \bibinfo {author} {\bibfnamefont {S.}~\bibnamefont {Jordan}},\ and\ \bibinfo {author} {\bibfnamefont {A.}~\bibnamefont {Ostrander}},\ }\bibfield  {title} {\bibinfo {title} {Quantum algorithm for simulating the wave equation},\ }\href@noop {} {\bibfield  {journal} {\bibinfo  {journal} {Physical Review A}\ }\textbf {\bibinfo {volume} {99}},\ \bibinfo {pages} {012323} (\bibinfo {year} {2019})}\BibitemShut {NoStop}%
\bibitem [{\citenamefont {Gaitan}(2020)}]{gaitan2020finding}%
  \BibitemOpen
  \bibfield  {author} {\bibinfo {author} {\bibfnamefont {F.}~\bibnamefont {Gaitan}},\ }\bibfield  {title} {\bibinfo {title} {Finding flows of a navier--stokes fluid through quantum computing},\ }\href@noop {} {\bibfield  {journal} {\bibinfo  {journal} {npj Quantum Information}\ }\textbf {\bibinfo {volume} {6}},\ \bibinfo {pages} {61} (\bibinfo {year} {2020})}\BibitemShut {NoStop}%
\bibitem [{\citenamefont {Jin}\ \emph {et~al.}(2023{\natexlab{a}})\citenamefont {Jin}, \citenamefont {Liu},\ and\ \citenamefont {Yu}}]{jin2023time}%
  \BibitemOpen
  \bibfield  {author} {\bibinfo {author} {\bibfnamefont {S.}~\bibnamefont {Jin}}, \bibinfo {author} {\bibfnamefont {N.}~\bibnamefont {Liu}},\ and\ \bibinfo {author} {\bibfnamefont {Y.}~\bibnamefont {Yu}},\ }\bibfield  {title} {\bibinfo {title} {Time complexity analysis of quantum algorithms via linear representations for nonlinear ordinary and partial differential equations},\ }\href@noop {} {\bibfield  {journal} {\bibinfo  {journal} {Journal of Computational Physics}\ }\textbf {\bibinfo {volume} {487}},\ \bibinfo {pages} {112149} (\bibinfo {year} {2023}{\natexlab{a}})}\BibitemShut {NoStop}%
\bibitem [{\citenamefont {Linden}\ \emph {et~al.}(2022)\citenamefont {Linden}, \citenamefont {Montanaro},\ and\ \citenamefont {Shao}}]{linden2022quantum}%
  \BibitemOpen
  \bibfield  {author} {\bibinfo {author} {\bibfnamefont {N.}~\bibnamefont {Linden}}, \bibinfo {author} {\bibfnamefont {A.}~\bibnamefont {Montanaro}},\ and\ \bibinfo {author} {\bibfnamefont {C.}~\bibnamefont {Shao}},\ }\bibfield  {title} {\bibinfo {title} {Quantum vs. classical algorithms for solving the heat equation},\ }\href@noop {} {\bibfield  {journal} {\bibinfo  {journal} {Communications in Mathematical Physics}\ }\textbf {\bibinfo {volume} {395}},\ \bibinfo {pages} {601} (\bibinfo {year} {2022})}\BibitemShut {NoStop}%
\bibitem [{\citenamefont {Jin}\ \emph {et~al.}(2022)\citenamefont {Jin}, \citenamefont {Liu},\ and\ \citenamefont {Yu}}]{jin2022quantum}%
  \BibitemOpen
  \bibfield  {author} {\bibinfo {author} {\bibfnamefont {S.}~\bibnamefont {Jin}}, \bibinfo {author} {\bibfnamefont {N.}~\bibnamefont {Liu}},\ and\ \bibinfo {author} {\bibfnamefont {Y.}~\bibnamefont {Yu}},\ }\bibfield  {title} {\bibinfo {title} {Quantum simulation of partial differential equations via schrodingerisation: technical details},\ }\href@noop {} {\bibfield  {journal} {\bibinfo  {journal} {arXiv preprint arXiv:2212.14703}\ } (\bibinfo {year} {2022})}\BibitemShut {NoStop}%
\bibitem [{\citenamefont {Sato}\ \emph {et~al.}(2024)\citenamefont {Sato}, \citenamefont {Kondo}, \citenamefont {Hamamura}, \citenamefont {Onodera},\ and\ \citenamefont {Yamamoto}}]{sato2024hamiltonian}%
  \BibitemOpen
  \bibfield  {author} {\bibinfo {author} {\bibfnamefont {Y.}~\bibnamefont {Sato}}, \bibinfo {author} {\bibfnamefont {R.}~\bibnamefont {Kondo}}, \bibinfo {author} {\bibfnamefont {I.}~\bibnamefont {Hamamura}}, \bibinfo {author} {\bibfnamefont {T.}~\bibnamefont {Onodera}},\ and\ \bibinfo {author} {\bibfnamefont {N.}~\bibnamefont {Yamamoto}},\ }\bibfield  {title} {\bibinfo {title} {Hamiltonian simulation for time-evolving partial differential equation by scalable quantum circuits},\ }\href@noop {} {\bibfield  {journal} {\bibinfo  {journal} {arXiv preprint arXiv:2402.18398}\ } (\bibinfo {year} {2024})}\BibitemShut {NoStop}%
\bibitem [{\citenamefont {Stamatopoulos}\ \emph {et~al.}(2020)\citenamefont {Stamatopoulos}, \citenamefont {Egger}, \citenamefont {Sun}, \citenamefont {Zoufal}, \citenamefont {Iten}, \citenamefont {Shen},\ and\ \citenamefont {Woerner}}]{stamatopoulos2020option}%
  \BibitemOpen
  \bibfield  {author} {\bibinfo {author} {\bibfnamefont {N.}~\bibnamefont {Stamatopoulos}}, \bibinfo {author} {\bibfnamefont {D.~J.}\ \bibnamefont {Egger}}, \bibinfo {author} {\bibfnamefont {Y.}~\bibnamefont {Sun}}, \bibinfo {author} {\bibfnamefont {C.}~\bibnamefont {Zoufal}}, \bibinfo {author} {\bibfnamefont {R.}~\bibnamefont {Iten}}, \bibinfo {author} {\bibfnamefont {N.}~\bibnamefont {Shen}},\ and\ \bibinfo {author} {\bibfnamefont {S.}~\bibnamefont {Woerner}},\ }\bibfield  {title} {\bibinfo {title} {Option pricing using quantum computers},\ }\href@noop {} {\bibfield  {journal} {\bibinfo  {journal} {Quantum}\ }\textbf {\bibinfo {volume} {4}},\ \bibinfo {pages} {291} (\bibinfo {year} {2020})}\BibitemShut {NoStop}%
\bibitem [{\citenamefont {Gonzalez-Conde}\ \emph {et~al.}(2023)\citenamefont {Gonzalez-Conde}, \citenamefont {Rodr{\'\i}guez-Rozas}, \citenamefont {Solano},\ and\ \citenamefont {Sanz}}]{gonzalez2023efficient}%
  \BibitemOpen
  \bibfield  {author} {\bibinfo {author} {\bibfnamefont {J.}~\bibnamefont {Gonzalez-Conde}}, \bibinfo {author} {\bibfnamefont {{\'A}.}~\bibnamefont {Rodr{\'\i}guez-Rozas}}, \bibinfo {author} {\bibfnamefont {E.}~\bibnamefont {Solano}},\ and\ \bibinfo {author} {\bibfnamefont {M.}~\bibnamefont {Sanz}},\ }\bibfield  {title} {\bibinfo {title} {Efficient hamiltonian simulation for solving option price dynamics},\ }\href@noop {} {\bibfield  {journal} {\bibinfo  {journal} {Physical Review Research}\ }\textbf {\bibinfo {volume} {5}},\ \bibinfo {pages} {043220} (\bibinfo {year} {2023})}\BibitemShut {NoStop}%
\bibitem [{\citenamefont {Gily{\'e}n}\ \emph {et~al.}(2019)\citenamefont {Gily{\'e}n}, \citenamefont {Su}, \citenamefont {Low},\ and\ \citenamefont {Wiebe}}]{gilyen2019quantum}%
  \BibitemOpen
  \bibfield  {author} {\bibinfo {author} {\bibfnamefont {A.}~\bibnamefont {Gily{\'e}n}}, \bibinfo {author} {\bibfnamefont {Y.}~\bibnamefont {Su}}, \bibinfo {author} {\bibfnamefont {G.~H.}\ \bibnamefont {Low}},\ and\ \bibinfo {author} {\bibfnamefont {N.}~\bibnamefont {Wiebe}},\ }\bibfield  {title} {\bibinfo {title} {Quantum singular value transformation and beyond: exponential improvements for quantum matrix arithmetics},\ }in\ \href@noop {} {\emph {\bibinfo {booktitle} {Proceedings of the 51st Annual ACM SIGACT Symposium on Theory of Computing}}}\ (\bibinfo {year} {2019})\ pp.\ \bibinfo {pages} {193--204}\BibitemShut {NoStop}%
\bibitem [{\citenamefont {Jin}\ and\ \citenamefont {Liu}(2023{\natexlab{a}})}]{jin2023quantum}%
  \BibitemOpen
  \bibfield  {author} {\bibinfo {author} {\bibfnamefont {S.}~\bibnamefont {Jin}}\ and\ \bibinfo {author} {\bibfnamefont {N.}~\bibnamefont {Liu}},\ }\bibfield  {title} {\bibinfo {title} {Quantum simulation of discrete linear dynamical systems and simple iterative methods in linear algebra via schrodingerisation},\ }\href@noop {} {\bibfield  {journal} {\bibinfo  {journal} {arXiv preprint arXiv:2304.02865}\ } (\bibinfo {year} {2023}{\natexlab{a}})}\BibitemShut {NoStop}%
\bibitem [{\citenamefont {Jin}\ and\ \citenamefont {Liu}(2023{\natexlab{b}})}]{jin2023analog}%
  \BibitemOpen
  \bibfield  {author} {\bibinfo {author} {\bibfnamefont {S.}~\bibnamefont {Jin}}\ and\ \bibinfo {author} {\bibfnamefont {N.}~\bibnamefont {Liu}},\ }\bibfield  {title} {\bibinfo {title} {Analog quantum simulation of partial differential equations},\ }\href@noop {} {\bibfield  {journal} {\bibinfo  {journal} {arXiv preprint arXiv:2308.00646}\ } (\bibinfo {year} {2023}{\natexlab{b}})}\BibitemShut {NoStop}%
\bibitem [{\citenamefont {Cao}\ \emph {et~al.}(2023)\citenamefont {Cao}, \citenamefont {Jin},\ and\ \citenamefont {Liu}}]{cao2023quantum}%
  \BibitemOpen
  \bibfield  {author} {\bibinfo {author} {\bibfnamefont {Y.}~\bibnamefont {Cao}}, \bibinfo {author} {\bibfnamefont {S.}~\bibnamefont {Jin}},\ and\ \bibinfo {author} {\bibfnamefont {N.}~\bibnamefont {Liu}},\ }\bibfield  {title} {\bibinfo {title} {Quantum simulation for time-dependent hamiltonians--with applications to non-autonomous ordinary and partial differential equations},\ }\href@noop {} {\bibfield  {journal} {\bibinfo  {journal} {arXiv preprint arXiv:2312.02817}\ } (\bibinfo {year} {2023})}\BibitemShut {NoStop}%
\bibitem [{\citenamefont {Low}\ and\ \citenamefont {Chuang}(2017)}]{low2017optimal}%
  \BibitemOpen
  \bibfield  {author} {\bibinfo {author} {\bibfnamefont {G.~H.}\ \bibnamefont {Low}}\ and\ \bibinfo {author} {\bibfnamefont {I.~L.}\ \bibnamefont {Chuang}},\ }\bibfield  {title} {\bibinfo {title} {Optimal hamiltonian simulation by quantum signal processing},\ }\href@noop {} {\bibfield  {journal} {\bibinfo  {journal} {Physical review letters}\ }\textbf {\bibinfo {volume} {118}},\ \bibinfo {pages} {010501} (\bibinfo {year} {2017})}\BibitemShut {NoStop}%
\bibitem [{\citenamefont {Berry}\ \emph {et~al.}(2014)\citenamefont {Berry}, \citenamefont {Childs}, \citenamefont {Cleve}, \citenamefont {Kothari},\ and\ \citenamefont {Somma}}]{berry2014exponential}%
  \BibitemOpen
  \bibfield  {author} {\bibinfo {author} {\bibfnamefont {D.~W.}\ \bibnamefont {Berry}}, \bibinfo {author} {\bibfnamefont {A.~M.}\ \bibnamefont {Childs}}, \bibinfo {author} {\bibfnamefont {R.}~\bibnamefont {Cleve}}, \bibinfo {author} {\bibfnamefont {R.}~\bibnamefont {Kothari}},\ and\ \bibinfo {author} {\bibfnamefont {R.~D.}\ \bibnamefont {Somma}},\ }\bibfield  {title} {\bibinfo {title} {Exponential improvement in precision for simulating sparse hamiltonians},\ }in\ \href@noop {} {\emph {\bibinfo {booktitle} {Proceedings of the forty-sixth annual ACM symposium on Theory of computing}}}\ (\bibinfo {year} {2014})\ pp.\ \bibinfo {pages} {283--292}\BibitemShut {NoStop}%
\bibitem [{\citenamefont {Hu}\ \emph {et~al.}(2024)\citenamefont {Hu}, \citenamefont {Jin}, \citenamefont {Liu},\ and\ \citenamefont {Zhang}}]{hu2024quantum}%
  \BibitemOpen
  \bibfield  {author} {\bibinfo {author} {\bibfnamefont {J.}~\bibnamefont {Hu}}, \bibinfo {author} {\bibfnamefont {S.}~\bibnamefont {Jin}}, \bibinfo {author} {\bibfnamefont {N.}~\bibnamefont {Liu}},\ and\ \bibinfo {author} {\bibfnamefont {L.}~\bibnamefont {Zhang}},\ }\bibfield  {title} {\bibinfo {title} {Quantum circuits for partial differential equations via schr$\backslash$" odingerisation},\ }\href@noop {} {\bibfield  {journal} {\bibinfo  {journal} {arXiv preprint arXiv:2403.10032}\ } (\bibinfo {year} {2024})}\BibitemShut {NoStop}%
\bibitem [{\citenamefont {Nielsen}\ and\ \citenamefont {Chuang}(2011)}]{nielsen2002quantum}%
  \BibitemOpen
  \bibfield  {author} {\bibinfo {author} {\bibfnamefont {M.~A.}\ \bibnamefont {Nielsen}}\ and\ \bibinfo {author} {\bibfnamefont {I.~L.}\ \bibnamefont {Chuang}},\ }\href@noop {} {\emph {\bibinfo {title} {Quantum Computation and Quantum Information: 10th Anniversary Edition}}}\ (\bibinfo  {publisher} {Cambridge University Press},\ \bibinfo {year} {2011})\BibitemShut {NoStop}%
\bibitem [{\citenamefont {Suzuki}(1993)}]{Suzuki}%
  \BibitemOpen
  \bibfield  {author} {\bibinfo {author} {\bibfnamefont {M.}~\bibnamefont {Suzuki}},\ }\bibfield  {title} {\bibinfo {title} {Improved trotter-like formula},\ }\href@noop {} {\bibfield  {journal} {\bibinfo  {journal} {Phys. Lett. A}\ }\textbf {\bibinfo {volume} {180}},\ \bibinfo {pages} {232} (\bibinfo {year} {1993})}\BibitemShut {NoStop}%
\bibitem [{\citenamefont {Trotter}(1959)}]{Trotter}%
  \BibitemOpen
  \bibfield  {author} {\bibinfo {author} {\bibfnamefont {H.~F.}\ \bibnamefont {Trotter}},\ }\bibfield  {title} {\bibinfo {title} {On the product of semi-groups of operators},\ }\href@noop {} {\bibfield  {journal} {\bibinfo  {journal} {Proceedings of the American Mathematical Society}\ }\textbf {\bibinfo {volume} {10}},\ \bibinfo {pages} {545} (\bibinfo {year} {1959})}\BibitemShut {NoStop}%
\bibitem [{\citenamefont {Dawson}\ and\ \citenamefont {Nielsen}(2005)}]{dawson2005solovay}%
  \BibitemOpen
  \bibfield  {author} {\bibinfo {author} {\bibfnamefont {C.~M.}\ \bibnamefont {Dawson}}\ and\ \bibinfo {author} {\bibfnamefont {M.~A.}\ \bibnamefont {Nielsen}},\ }\bibfield  {title} {\bibinfo {title} {The solovay-kitaev algorithm},\ }\href@noop {} {\bibfield  {journal} {\bibinfo  {journal} {arXiv preprint quant-ph/0505030}\ } (\bibinfo {year} {2005})}\BibitemShut {NoStop}%
\bibitem [{\citenamefont {Zienkiewicz}\ \emph {et~al.}(2005)\citenamefont {Zienkiewicz}, \citenamefont {Taylor},\ and\ \citenamefont {Zhu}}]{zienkiewicz2005finite}%
  \BibitemOpen
  \bibfield  {author} {\bibinfo {author} {\bibfnamefont {O.~C.}\ \bibnamefont {Zienkiewicz}}, \bibinfo {author} {\bibfnamefont {R.~L.}\ \bibnamefont {Taylor}},\ and\ \bibinfo {author} {\bibfnamefont {J.~Z.}\ \bibnamefont {Zhu}},\ }\href@noop {} {\emph {\bibinfo {title} {The finite element method: its basis and fundamentals}}}\ (\bibinfo  {publisher} {Elsevier},\ \bibinfo {year} {2005})\BibitemShut {NoStop}%
\bibitem [{\citenamefont {{\"O}zi{\c{s}}ik}\ \emph {et~al.}(2017)\citenamefont {{\"O}zi{\c{s}}ik}, \citenamefont {Orlande}, \citenamefont {Cola{\c{c}}o},\ and\ \citenamefont {Cotta}}]{ozicsik2017finite}%
  \BibitemOpen
  \bibfield  {author} {\bibinfo {author} {\bibfnamefont {M.~N.}\ \bibnamefont {{\"O}zi{\c{s}}ik}}, \bibinfo {author} {\bibfnamefont {H.~R.}\ \bibnamefont {Orlande}}, \bibinfo {author} {\bibfnamefont {M.~J.}\ \bibnamefont {Cola{\c{c}}o}},\ and\ \bibinfo {author} {\bibfnamefont {R.~M.}\ \bibnamefont {Cotta}},\ }\href@noop {} {\emph {\bibinfo {title} {Finite difference methods in heat transfer}}}\ (\bibinfo  {publisher} {CRC press},\ \bibinfo {year} {2017})\BibitemShut {NoStop}%
\bibitem [{\citenamefont {Jin}\ \emph {et~al.}()\citenamefont {Jin}, \citenamefont {Liu},\ and\ \citenamefont {Yu}}]{schr1}%
  \BibitemOpen
  \bibfield  {author} {\bibinfo {author} {\bibfnamefont {S.}~\bibnamefont {Jin}}, \bibinfo {author} {\bibfnamefont {N.}~\bibnamefont {Liu}},\ and\ \bibinfo {author} {\bibfnamefont {Y.}~\bibnamefont {Yu}},\ }\bibfield  {title} {\bibinfo {title} {Quantum simulation of partial differential equations via schrodingerisation (2022)},\ }\href@noop {} {\bibinfo  {journal} {arXiv preprint arXiv:2212.13969}\ }\BibitemShut {NoStop}%
\bibitem [{\citenamefont {Jin}\ \emph {et~al.}(2023{\natexlab{b}})\citenamefont {Jin}, \citenamefont {Liu},\ and\ \citenamefont {Yu}}]{schr2}%
  \BibitemOpen
\bibfield  {journal} {  }\bibfield  {author} {\bibinfo {author} {\bibfnamefont {S.}~\bibnamefont {Jin}}, \bibinfo {author} {\bibfnamefont {N.}~\bibnamefont {Liu}},\ and\ \bibinfo {author} {\bibfnamefont {Y.}~\bibnamefont {Yu}},\ }\bibfield  {title} {\bibinfo {title} {Quantum simulation of partial differential equations: Applications and detailed analysis},\ }\href@noop {} {\bibfield  {journal} {\bibinfo  {journal} {Physical Review A}\ }\textbf {\bibinfo {volume} {108}},\ \bibinfo {pages} {032603} (\bibinfo {year} {2023}{\natexlab{b}})}\BibitemShut {NoStop}%
\bibitem [{\citenamefont {Jin}\ and\ \citenamefont {Liu}(2022)}]{jin2022quantumnonlinear}%
  \BibitemOpen
  \bibfield  {author} {\bibinfo {author} {\bibfnamefont {S.}~\bibnamefont {Jin}}\ and\ \bibinfo {author} {\bibfnamefont {N.}~\bibnamefont {Liu}},\ }\bibfield  {title} {\bibinfo {title} {Quantum algorithms for computing observables of nonlinear partial differential equations},\ }\href@noop {} {\bibfield  {journal} {\bibinfo  {journal} {arXiv preprint arXiv:2202.07834}\ } (\bibinfo {year} {2022})}\BibitemShut {NoStop}%
\bibitem [{\citenamefont {Guseynov}\ \emph {et~al.}(2023)\citenamefont {Guseynov}, \citenamefont {Zhukov}, \citenamefont {Pogosov},\ and\ \citenamefont {Lebedev}}]{guseynov2023depth}%
  \BibitemOpen
  \bibfield  {author} {\bibinfo {author} {\bibfnamefont {N.}~\bibnamefont {Guseynov}}, \bibinfo {author} {\bibfnamefont {A.}~\bibnamefont {Zhukov}}, \bibinfo {author} {\bibfnamefont {W.}~\bibnamefont {Pogosov}},\ and\ \bibinfo {author} {\bibfnamefont {A.}~\bibnamefont {Lebedev}},\ }\bibfield  {title} {\bibinfo {title} {Depth analysis of variational quantum algorithms for the heat equation},\ }\href@noop {} {\bibfield  {journal} {\bibinfo  {journal} {Phys. Rev. A}\ }\textbf {\bibinfo {volume} {107}},\ \bibinfo {pages} {052422} (\bibinfo {year} {2023})}\BibitemShut {NoStop}%
\bibitem [{\citenamefont {Walsh}(1923)}]{walsh1923closed}%
  \BibitemOpen
  \bibfield  {author} {\bibinfo {author} {\bibfnamefont {J.~L.}\ \bibnamefont {Walsh}},\ }\bibfield  {title} {\bibinfo {title} {A closed set of normal orthogonal functions},\ }\href@noop {} {\bibfield  {journal} {\bibinfo  {journal} {American Journal of Mathematics}\ }\textbf {\bibinfo {volume} {45}},\ \bibinfo {pages} {5} (\bibinfo {year} {1923})}\BibitemShut {NoStop}%
\bibitem [{\citenamefont {Rattew}\ and\ \citenamefont {Rebentrost}(2023)}]{rattew2023non}%
  \BibitemOpen
  \bibfield  {author} {\bibinfo {author} {\bibfnamefont {A.~G.}\ \bibnamefont {Rattew}}\ and\ \bibinfo {author} {\bibfnamefont {P.}~\bibnamefont {Rebentrost}},\ }\bibfield  {title} {\bibinfo {title} {Non-linear transformations of quantum amplitudes: Exponential improvement, generalization, and applications},\ }\href@noop {} {\bibfield  {journal} {\bibinfo  {journal} {arXiv preprint arXiv:2309.09839}\ } (\bibinfo {year} {2023})}\BibitemShut {NoStop}%
\bibitem [{\citenamefont {Guo}\ \emph {et~al.}(2021)\citenamefont {Guo}, \citenamefont {Mitarai},\ and\ \citenamefont {Fujii}}]{guo2021nonlinear}%
  \BibitemOpen
  \bibfield  {author} {\bibinfo {author} {\bibfnamefont {N.}~\bibnamefont {Guo}}, \bibinfo {author} {\bibfnamefont {K.}~\bibnamefont {Mitarai}},\ and\ \bibinfo {author} {\bibfnamefont {K.}~\bibnamefont {Fujii}},\ }\bibfield  {title} {\bibinfo {title} {Nonlinear transformation of complex amplitudes via quantum singular value transformation},\ }\href@noop {} {\bibfield  {journal} {\bibinfo  {journal} {arXiv preprint arXiv:2107.10764}\ } (\bibinfo {year} {2021})}\BibitemShut {NoStop}%
\bibitem [{\citenamefont {Dalzell}\ \emph {et~al.}(2023)\citenamefont {Dalzell}, \citenamefont {McArdle}, \citenamefont {Berta}, \citenamefont {Bienias}, \citenamefont {Chen}, \citenamefont {Gily{\'e}n}, \citenamefont {Hann}, \citenamefont {Kastoryano}, \citenamefont {Khabiboulline}, \citenamefont {Kubica} \emph {et~al.}}]{dalzell2023quantum}%
  \BibitemOpen
  \bibfield  {author} {\bibinfo {author} {\bibfnamefont {A.~M.}\ \bibnamefont {Dalzell}}, \bibinfo {author} {\bibfnamefont {S.}~\bibnamefont {McArdle}}, \bibinfo {author} {\bibfnamefont {M.}~\bibnamefont {Berta}}, \bibinfo {author} {\bibfnamefont {P.}~\bibnamefont {Bienias}}, \bibinfo {author} {\bibfnamefont {C.-F.}\ \bibnamefont {Chen}}, \bibinfo {author} {\bibfnamefont {A.}~\bibnamefont {Gily{\'e}n}}, \bibinfo {author} {\bibfnamefont {C.~T.}\ \bibnamefont {Hann}}, \bibinfo {author} {\bibfnamefont {M.~J.}\ \bibnamefont {Kastoryano}}, \bibinfo {author} {\bibfnamefont {E.~T.}\ \bibnamefont {Khabiboulline}}, \bibinfo {author} {\bibfnamefont {A.}~\bibnamefont {Kubica}}, \emph {et~al.},\ }\bibfield  {title} {\bibinfo {title} {Quantum algorithms: A survey of applications and end-to-end complexities},\ }\href@noop {} {\bibfield  {journal} {\bibinfo  {journal} {arXiv preprint arXiv:2310.03011}\ } (\bibinfo {year} {2023})}\BibitemShut {NoStop}%
\bibitem [{\citenamefont {Plesch}\ and\ \citenamefont {Brukner}(2011)}]{plesch2011quantum}%
  \BibitemOpen
  \bibfield  {author} {\bibinfo {author} {\bibfnamefont {M.}~\bibnamefont {Plesch}}\ and\ \bibinfo {author} {\bibfnamefont {{\v{C}}.}~\bibnamefont {Brukner}},\ }\bibfield  {title} {\bibinfo {title} {Quantum-state preparation with universal gate decompositions},\ }\href@noop {} {\bibfield  {journal} {\bibinfo  {journal} {Physical Review A}\ }\textbf {\bibinfo {volume} {83}},\ \bibinfo {pages} {032302} (\bibinfo {year} {2011})}\BibitemShut {NoStop}%
\bibitem [{\citenamefont {Bergholm}\ \emph {et~al.}(2005)\citenamefont {Bergholm}, \citenamefont {Vartiainen}, \citenamefont {M{\"o}tt{\"o}nen},\ and\ \citenamefont {Salomaa}}]{bergholm2005quantum}%
  \BibitemOpen
  \bibfield  {author} {\bibinfo {author} {\bibfnamefont {V.}~\bibnamefont {Bergholm}}, \bibinfo {author} {\bibfnamefont {J.~J.}\ \bibnamefont {Vartiainen}}, \bibinfo {author} {\bibfnamefont {M.}~\bibnamefont {M{\"o}tt{\"o}nen}},\ and\ \bibinfo {author} {\bibfnamefont {M.~M.}\ \bibnamefont {Salomaa}},\ }\bibfield  {title} {\bibinfo {title} {Quantum circuits with uniformly controlled one-qubit gates},\ }\href@noop {} {\bibfield  {journal} {\bibinfo  {journal} {Physical Review A}\ }\textbf {\bibinfo {volume} {71}},\ \bibinfo {pages} {052330} (\bibinfo {year} {2005})}\BibitemShut {NoStop}%
\bibitem [{\citenamefont {Shende}\ \emph {et~al.}(2004)\citenamefont {Shende}, \citenamefont {Markov},\ and\ \citenamefont {Bullock}}]{shende2004minimal}%
  \BibitemOpen
  \bibfield  {author} {\bibinfo {author} {\bibfnamefont {V.~V.}\ \bibnamefont {Shende}}, \bibinfo {author} {\bibfnamefont {I.~L.}\ \bibnamefont {Markov}},\ and\ \bibinfo {author} {\bibfnamefont {S.~S.}\ \bibnamefont {Bullock}},\ }\bibfield  {title} {\bibinfo {title} {Minimal universal two-qubit controlled-not-based circuits},\ }\href@noop {} {\bibfield  {journal} {\bibinfo  {journal} {Physical Review A}\ }\textbf {\bibinfo {volume} {69}},\ \bibinfo {pages} {062321} (\bibinfo {year} {2004})}\BibitemShut {NoStop}%
\bibitem [{\citenamefont {Coolidge}(1949)}]{coolidge1949story}%
  \BibitemOpen
  \bibfield  {author} {\bibinfo {author} {\bibfnamefont {J.~L.}\ \bibnamefont {Coolidge}},\ }\bibfield  {title} {\bibinfo {title} {The story of the binomial theorem},\ }\href@noop {} {\bibfield  {journal} {\bibinfo  {journal} {The American Mathematical Monthly}\ }\textbf {\bibinfo {volume} {56}},\ \bibinfo {pages} {147} (\bibinfo {year} {1949})}\BibitemShut {NoStop}%
\bibitem [{\citenamefont {Draper}(2000)}]{draper2000addition}%
  \BibitemOpen
  \bibfield  {author} {\bibinfo {author} {\bibfnamefont {T.~G.}\ \bibnamefont {Draper}},\ }\bibfield  {title} {\bibinfo {title} {Addition on a quantum computer},\ }\href@noop {} {\bibfield  {journal} {\bibinfo  {journal} {arXiv preprint quant-ph/0008033}\ } (\bibinfo {year} {2000})}\BibitemShut {NoStop}%
\bibitem [{\citenamefont {Draper}\ \emph {et~al.}(2004)\citenamefont {Draper}, \citenamefont {Kutin}, \citenamefont {Rains},\ and\ \citenamefont {Svore}}]{draper2004logarithmic}%
  \BibitemOpen
  \bibfield  {author} {\bibinfo {author} {\bibfnamefont {T.~G.}\ \bibnamefont {Draper}}, \bibinfo {author} {\bibfnamefont {S.~A.}\ \bibnamefont {Kutin}}, \bibinfo {author} {\bibfnamefont {E.~M.}\ \bibnamefont {Rains}},\ and\ \bibinfo {author} {\bibfnamefont {K.~M.}\ \bibnamefont {Svore}},\ }\bibfield  {title} {\bibinfo {title} {A logarithmic-depth quantum carry-lookahead adder},\ }\href@noop {} {\bibfield  {journal} {\bibinfo  {journal} {arXiv preprint quant-ph/0406142}\ } (\bibinfo {year} {2004})}\BibitemShut {NoStop}%
\bibitem [{\citenamefont {Camps}\ and\ \citenamefont {Van~Beeumen}(2022)}]{camps2022fable}%
  \BibitemOpen
  \bibfield  {author} {\bibinfo {author} {\bibfnamefont {D.}~\bibnamefont {Camps}}\ and\ \bibinfo {author} {\bibfnamefont {R.}~\bibnamefont {Van~Beeumen}},\ }\bibfield  {title} {\bibinfo {title} {Fable: Fast approximate quantum circuits for block-encodings},\ }in\ \href@noop {} {\emph {\bibinfo {booktitle} {2022 IEEE International Conference on Quantum Computing and Engineering (QCE)}}}\ (\bibinfo {organization} {IEEE},\ \bibinfo {year} {2022})\ pp.\ \bibinfo {pages} {104--113}\BibitemShut {NoStop}%
\bibitem [{\citenamefont {M{\"o}tt{\"o}nen}\ \emph {et~al.}(2004)\citenamefont {M{\"o}tt{\"o}nen}, \citenamefont {Vartiainen}, \citenamefont {Bergholm},\ and\ \citenamefont {Salomaa}}]{mottonen2004quantum}%
  \BibitemOpen
  \bibfield  {author} {\bibinfo {author} {\bibfnamefont {M.}~\bibnamefont {M{\"o}tt{\"o}nen}}, \bibinfo {author} {\bibfnamefont {J.~J.}\ \bibnamefont {Vartiainen}}, \bibinfo {author} {\bibfnamefont {V.}~\bibnamefont {Bergholm}},\ and\ \bibinfo {author} {\bibfnamefont {M.~M.}\ \bibnamefont {Salomaa}},\ }\bibfield  {title} {\bibinfo {title} {Quantum circuits for general multiqubit gates},\ }\href@noop {} {\bibfield  {journal} {\bibinfo  {journal} {Physical review letters}\ }\textbf {\bibinfo {volume} {93}},\ \bibinfo {pages} {130502} (\bibinfo {year} {2004})}\BibitemShut {NoStop}%
\end{thebibliography}%

\appendix
    
    \section{Explicit quantum circuit construction for $\hat{O}^{BS}_A$}\label{Appendix Banded-sparse-access}
    
    Let a $2^l$ sparse matrix $A$ be as in the definition \ref{Banded-sparse-access}; so, the non-zero elements follows
    
    \begin{equation}
        \hat{O}^{BS}_A\ket{0}^{n-l}\ket{s}^l\ket{i}^n=\ket{r_{s0} + i}^n\ket{i}^n,
    \end{equation}
    where sum $r_{s0} + i$ means an addition modulo $2^n$. So we decompose $\hat{O}^{BS}_A$ into a product of two unitary operators $\hat{O}^{BS}_A=U^{SUM}(U_A^{(l)}\otimes \hat{I}^n)$, where $U^{SUM}$ performs addition modulo $2^n$
    \begin{equation}
        U^{SUM}\ket{i}^n\ket{j}^n=\ket{i+j\mod 2^n}^n\ket{j}^n,
    \end{equation}
    and operator $U^{(l)}_A$ transforming a banded sparse index into a column index
    \begin{equation}
       U^{(l)}_A\ket{0}^{n-l}\ket{s}^l=\ket{r_{s0} }^n.
    \end{equation}
    We underline that the information about matrix $A$ contained only in the $U_A^{(l)}$; so we omitted the corresponding index for the $U^{SUM}$. Thus, we combine those two operators to build the Banded-sparse-access
    \begin{equation}
        \hat{O}^{BS}_A\ket{0}^{n-l}\ket{s}^l\ket{i}^n=U^{SUM}(U^{(l)}_A\ket{0}^{n-l}\ket{s}^l \otimes \hat{I}^n\ket{i}^n)=U^{SUM}\ket{r_{s0}}^n\ket{i}^n=\ket{r_{s0} + i}^n\ket{i}^n.
    \end{equation}
    
    Now we demonstrate how $U_A^{(l)}$ can be constructed using one and two qubit basic quantum computer operations. First, we further decompose the operator $U_A^{(l)}$ into a product
    \begin{equation}
        \label{eq:creation first row oracle}
        U_A^{(l)}=\prod_{s=0}^{2^l-1}U_s^{r_{s0}};\qquad U^{r_{s0}}_s:=\ket{r_{s0}}^n(\bra{0}^{n-l}\otimes\bra{s}^l)+\sum_{j\neq s} \ket{j}^n\bra{j}^n.
    \end{equation}
    The quantum circuit depicted in the Fig.~\ref{fig:Sparse_first_row} shows the explicit construction of $U^{r_{s0}}_s$ where each of the multi-controlled-$X$ we implement as in Fig.~\ref{fig:Multi_control_qc}. Thus, the overall resources for $U_A^{(l)}$ implementation no greater than: (i) $n-1$ pure ancillas, (ii) $2^l(32n-48)$ one-qubit operations, (iii) $2^l(25n-36)$ C-NOTs.
    
    \begin{figure}[h]
    \includegraphics[width=0.5\textwidth]{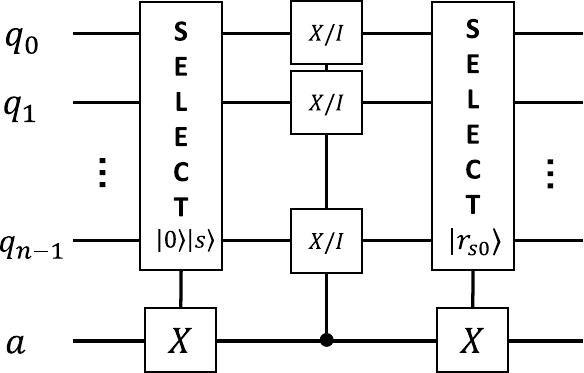}
    \caption{Quantum circuit implementing $U^{r_{s0}}_s$. The first layer set the initialized with zero-state ancilla to $\ket{1}$ for state $\ket{0}^{n-l}\otimes\ket{s}^l$ in the frist register. The second layer operation make state $\ket{r_{s0}}^n$ from $\ket{0}^{n-l}\otimes\ket{s}^l$ by choosing Pauli $X$ or $I$ according to bit-difference between $s$ and $r_{s0}$. The third layer unentangle ancilla making it pure.}
    \label{fig:Sparse_first_row}
    \end{figure}
    
    Now, let us discuss how to implement $U^{SUM}$ which adds the row number $i$ to the $r_{s0}$. In Fig.~\ref{fig:addition_circuit} we explicitly demonstrate how such a circuit can be built. We slightly modify the adder from \cite{draper2000addition} to meet our needs. From the quantum scheme it can be concluded that resources for the addition is no more than: (i) $n-1$ pure ancillas, (ii) $32n-48$ one-qubit operations, (iii) $26n-37$ C-NOTs. Although the final circuit depth is linear, which isn't optimal, there exists a logarithmic-depth $(\mathcal{O}(\log n))$algorithm according to \cite{draper2004logarithmic}, albeit requiring more gates (up to $\approx124n$ C-NOTs). The choice of algorithm depends on hardware specifics. However, in this paper, we opt for the circuit in Fig.\ref{fig:addition_circuit} due to its minimal gate and ancilla requirements.
    
    \begin{figure}[h]
    \includegraphics[width=0.8\textwidth]{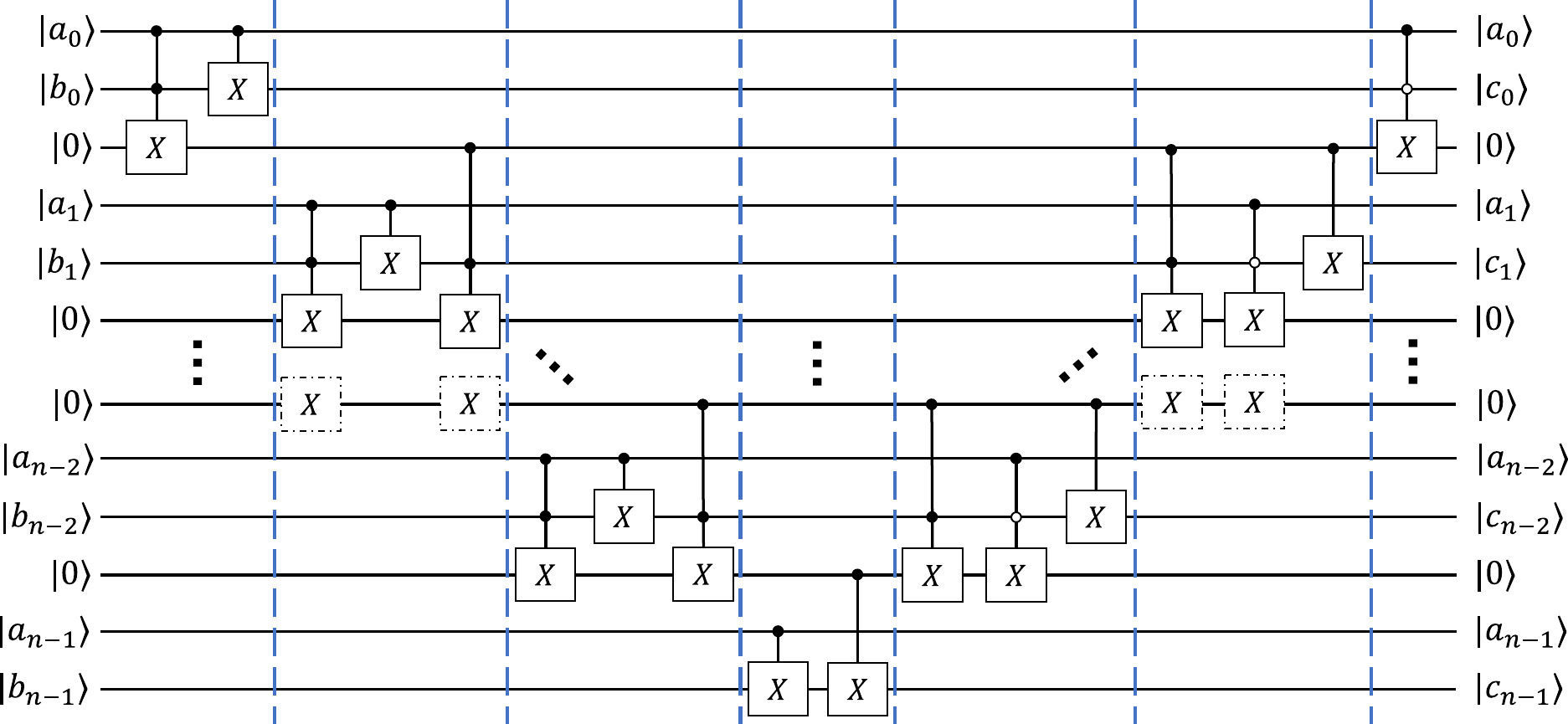}
    \caption{Quantum circuit implementing $U^{SUM}$ which adds two binary strings: $a+b=c\mod 2^n$. }
    \label{fig:addition_circuit}
    \end{figure}
    
    \section{Quantum circuit design for $\hat{O}_{p^m}^S$}\label{appendix_oracle_for_momentum}
    
    A first naive quantum circuit can be implemented in a straightforward way, see Fig.~\ref{fig:naive_momentum_oracle}
    \begin{equation}
        \hat{O}_{p^m}^S=e^{-i\pi m/2}\prod_{s=0}^{2^l-1}C_{Ry(\theta_s)}^s;\qquad \theta_s=2\arccos(e^{i\pi m/2}\frac{(\hat{p}^m)^{(s)}}{\sqrt{\mathcal{N}_{p^m}}}),
    \end{equation}
    where we hereafter omit the global phase $e^{-i\pi m/2}$; however, we take it into account in cases when it does matter.
    \begin{figure}[h]
    \includegraphics[width=0.5\textwidth]{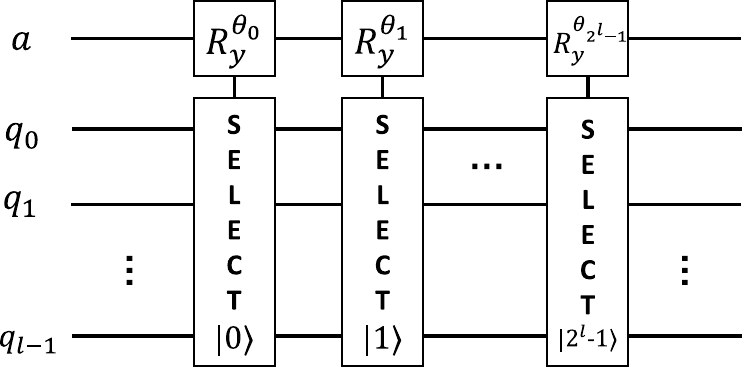}
    \caption{A naive design of the quantum circuit that implements $\hat{O}_{p^m}^S$. Each layer corresponds to the different non-zero elements of the first row of the matrix $\hat{p}^m$. The qubit $a$ correspond to the first register in the definition of $\hat{O}_{p^m}^S$.}
    \label{fig:naive_momentum_oracle}
    \end{figure}
    A more efficient approach can be constructed using Fable implementation of block-encoding \cite{camps2022fable,mottonen2004quantum}. A $2$-qubit toy example of the method which explains the general concept is depicted in Fig.~\ref{fig:toy_example_O_p_s}. Basically, the authors use Gray code ordering to reduce complexity of the $\hat{O}_{p^m}^S$ implementation. The algorithm is fully described in \cite{camps2022fable}; here we just present the complexity: (i) $2^l$ one-qubit operations, (ii) $2^l$ C-NOTs. 
    
    \begin{figure}[h]
    \includegraphics[width=0.5\textwidth]{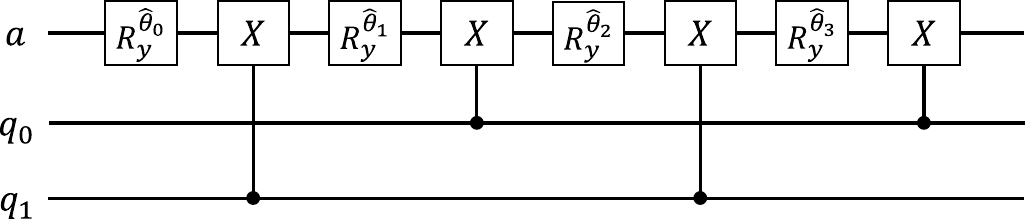}
    \caption{Fable implementation of the query oracle $\hat{O}_{p}^S$. The particular one-qubit rotations angles are according to the \cite{camps2022fable}.}
    \label{fig:toy_example_O_p_s}
    \end{figure}

    \section{Explicit state preparation circuit design for finite binary norm quantum state}\label{appendix Finite binary norm Ansatz}
    The circuit performs the following task. Assuming that the initial state of the circuit's $n$ qubits is $\ket{0}^n$, we aim to transform it into
    \begin{equation*}
    A_\psi\ket{0}^n=\ket{\psi}^n=\sum_{|i|_b\leq q} \beta_i \ket{i}^n
    \end{equation*}
    where the coefficients $\beta_i\neq 0(|i|_b\leq q)$ are all known. Let $m=\min \{q,n\}$,  
     We first design a circuit, as shown in \ref{fig:theta_circuit_app}. Now we will derive the rotation angles for each $R_y$ gate in the circuit.
    \begin{figure}[h!]
    \includegraphics[width=0.8\textwidth]{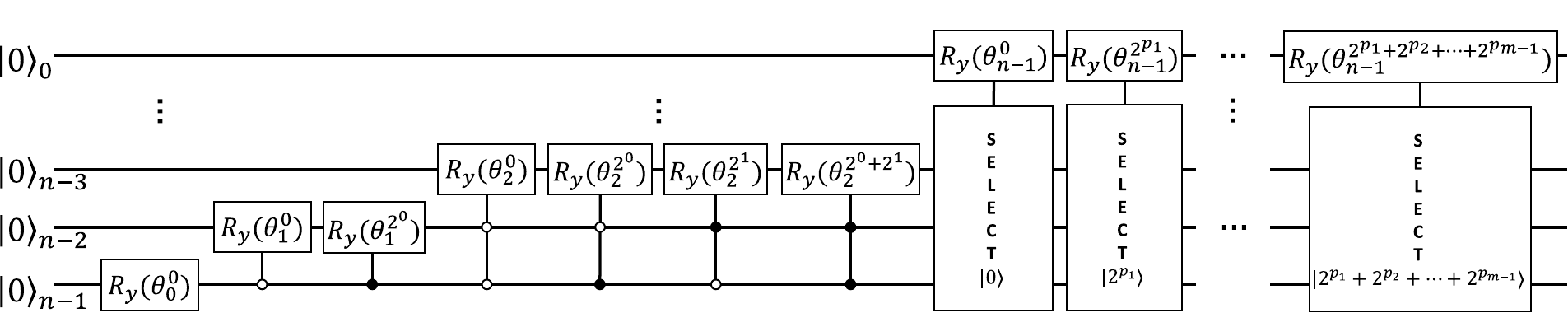}
    \caption{The initial design of the quantum circuit that implements $A_\psi$. Angles of $R_y$ gates are given by Eq.(\ref{eq:theta(n-1)(0)})-(\ref{thetapm}).}
    \label{fig:theta_circuit_app}
    \end{figure}

    
     By derivation, we know that the coefficients of the quantum state $\ket{\psi}^n$ satisfy the following relationship with the angles of the $R_y$ gates in the circuit.

    \begin{equation}
        \begin{aligned}
        \beta_0=&\prod_{j_0=0}^{n-1}\text{cos}\frac{\theta_{j_0}^0}{2},\\
       \beta_{2^{p_1}}=&\left(\prod\limits_{j_1=p_1+1}^{n-1}\text{cos}\frac{\theta_{j_1}^{2^{p_1}}}{2} \right)\text{sin}\frac{\theta_{p_1}^0}{2}\left(\prod_{j_0=0}^{p_1-1}\text{cos}\frac{\theta_{j_0}^0}{2}\right) \qquad (0\leq p_1 < n-1),\\
       \beta_{2^{n-1}}=&\text{sin}\frac{\theta_{n-1}^0}{2} \left( \prod_{j_0=0}^{n-2}\text{cos}\frac{\theta_{j_0}^0}{2}\right),
       \\
        \beta_{2^{p_1}+2^{p_2}}=&\left(\prod_{j_2=p_2+1}^{n-1}\text{cos}\frac{\theta_{j_2}^{2^{p_1}+2^{p_2}}}{2} \right)\text{sin}\frac{\theta_{p_2}^{2^{p_1}}}{2}\left(\prod_{j_1=p_1+1}^{p_2-1}\text{cos}\frac{\theta_{j_1}^{2^{p_1}}}{2} \right) \text{sin}\frac{\theta_{p_1}^{0}}{2}\left(\prod_{j_0=0}^{p_1-1}\text{cos}\frac{\theta_{j_0}^{0}}{2}\right)
        \\
        &(0\leq p_1 < p_2 < n-1),\\
        \beta_{2^{p_1}+2^{n-1}}=&\text{sin}\frac{\theta_{n-1}^{2^{p_1}}}{2}\left(\prod_{j_1=p_1+1}^{n-2}\text{cos}\frac{\theta_{j_1}^{2^{p_1}}}{2} \right) \text{sin}\frac{\theta_{p_1}^{0}}{2}\left(\prod_{j_0=0}^{p_1-1}\text{cos}\frac{\theta_{j_0}^{0}}{2}\right)\qquad(0\leq p_1< n-1),\\
        \vdots \\
        \beta_{2^{p_1}+\dots+2^{p_{m-2}}+2^{p_{m-1}}} =&\left(\prod_{j_{m-1}=p_{m-1}+1}^{n-1}\text{cos}\frac{\theta_{j_{m-1}}^{2^{p_1}+\dots+2^{p_{m-1}}}}{2} \right)\text{sin}\frac{\theta_{p_{m-1}}^{{2^{p_1}+\dots+2^{p_{m-2}}}}}{2}\left(\prod_{j_{m-2}={p_{m-2}+1}}^{p_{m-1}-1}\text{cos}\frac{\theta_{j_{m-2}}^{{2^{p_1}+\dots+2^{p_{m-2}}}}}{2} \right)\\
        & \text{sin}\frac{\theta_{p_{m-2}}^{{2^{p_1}+\dots+2^{p_{m-3}}}}}{2}\left(\prod_{j_{m-3}=p_{m-3}+1}^{p_{m-2}-1}\text{cos}\frac{\theta_{j_{n-3}}^{{2^{p_1}+\dots+2^{p_{m-3}}}}}{2} \right) \dots \left(\prod_{j_1=p_1+1}^{p_2-1}\text{cos}\frac{\theta_{j_1}^{2^{p_1}}}{2}\right) \\&\text{sin}\frac{\theta_{p_1}^{0}}{2}\left(\prod_{j_0=0}^{p_1-1}\text{cos}\frac{\theta_{j_0}^{0}}{2}\right)\qquad (0\leq p_1 < p_2<\dots <p_{m-1} < n-1),\\
        \beta_{2^{p_1}+\dots+2^{p_{m-2}}+2^{n-1}} =&\text{sin}\frac{\theta_{p_{n-1}}^{{2^{p_1}+\dots+2^{p_{m-2}}}}}{2}\left(\prod_{j_{m-2}={p_{m-2}+1}}^{n-2}\text{cos}\frac{\theta_{j_{m-2}}^{{2^{p_1}+\dots+2^{p_{m-2}}}}}{2} \right)\\
        & \text{sin}\frac{\theta_{p_{m-2}}^{{2^{p_1}+\dots+2^{p_{m-3}}}}}{2}\left(\prod_{j_{m-3}=p_{m-3}+1}^{p_{m-2}-1}\text{cos}\frac{\theta_{j_{n-3}}^{{2^{p_1}+\dots+2^{p_{m-3}}}}}{2} \right) \dots \left(\prod_{j_1=p_1+1}^{p_2-1}\text{cos}\frac{\theta_{j_1}^{2^{p_1}}}{2}\right) \\&\text{sin}\frac{\theta_{p_1}^{0}}{2}\left(\prod_{j_0=0}^{p_1-1}\text{cos}\frac{\theta_{j_0}^{0}}{2}\right)\qquad
        (0\leq p_1 < p_2<\dots <p_{m-2} < n-1),\\
       \beta_{2^{p_1}+\dots+2^{p_{m-1}}+2^{p_{m}}} =&\text{sin}\frac{\theta_{p_m}^{2^{p_1}+\dots+2^{p_{m-1}}}}{2} \left(\prod_{j_{m-1}=p_{m-1}+1}^{p_m-1}\text{cos}\frac{\theta_{j_{m-1}}^{2^{p_1}+\dots+2^{p_{m-1}}}}{2} \right)\text{sin}\frac{\theta_{p_{m-1}}^{{2^{p_1}+\dots+2^{p_{m-2}}}}}{2}\\&\left(\prod_{j_{m-2}={p_{m-2}+1}}^{p_{m-1}-1}\text{cos}\frac{\theta_{j_{m-2}}^{{2^{p_1}+\dots+2^{p_{m-2}}}}}{2} \right)
        \text{sin}\frac{\theta_{p_{m-2}}^{{2^{p_1}+\dots+2^{p_{m-3}}}}}{2}\left(\prod_{j_{m-3}=p_{m-3}+1}^{p_{m-2}-1}\text{cos}\frac{\theta_{j_{n-3}}^{{2^{p_1}+\dots+2^{p_{m-3}}}}}{2} \right) \\&\dots \left(\prod_{j_1=p_1+1}^{p_2-1}\text{cos}\frac{\theta_{j_1}^{2^{p_1}}}{2}\right) \text{sin}\frac{\theta_{p_1}^{0}}{2}\left(\prod_{j_0=0}^{p_1-1}\text{cos}\frac{\theta_{j_0}^{0}}{2}\right)\\&(0\leq p_1 < p_2<\dots <p_{m-1}<p_m\leq n-1).\\
    \end{aligned}\label{eq:nalltheta}
    \end{equation}

    Subsequently, we will deduce the rotation angles for all $R_y$ gates in the circuit from Equation (\ref{eq:nalltheta}). The process begins with determining $\theta_{n-1}^0$ using the ratio
    \begin{equation*}
        \frac{\beta_{2^{n-1}}}{\beta_{0}}=\text{tan}\frac{\theta_{n-1}^0}{2},
    \end{equation*}
    yielding
    \begin{equation}
        \theta_{n-1}^0=2\text{arctan}\left(\frac{\beta_{2^{n-1}}}{\beta_{0}}\right).
        \label{eq:theta(n-1)(0)}
    \end{equation}
    Further, we find angles whose lower index are all $n-1$
    \begin{equation}\label{eq:theta(n-1)}
        \theta_{n-1}^{2^{p_1}+\dots+2^{p_r}}=2\text{arctan}\left(\frac{\beta_{2^{p_1}+\dots+2^{p_r}+2^{n-1}}}{\beta_{2^{p_1}+\dots+2^{p_r}}}\right), (0\leq p_1 < \dots <p_r < n-1)
    \end{equation}
    where $r=1,2,...,m-1$.
    
    To compute the other $\theta$ values, for $0\leq k< n-1$ we derive
    \begin{equation*}
        \frac{\beta_{2^{k}}}{\beta_{0}}=\frac{\text{cos}\frac{\theta_{n-1}^{2^{k}}}{2}\dots \text{cos}\frac{\theta_{k+1}^{2^{k}}}{2}}{\text{cos}\frac{\theta_{n-1}^{0}}{2}\dots \text{cos}\frac{\theta_{k+1}^{0}}{2}}\text{tan}\frac{\theta_{k}^0}{2},
    \end{equation*}
    from which we conclude
    \begin{equation}\label{theta(p1)(0)}
        \theta_{k}^0=2 \text{arctan}\left(\frac{\text{cos}\frac{\theta_{n-1}^{0}}{2}\dots \text{cos}\frac{\theta_{k+1}^{0}}{2}}{\text{cos}\frac{\theta_{n-1}^{2^{k}}}{2}\dots \text{cos}\frac{\theta_{k+1}^{2^{k}}}{2}} \frac{\beta_{2^{k}}}{\beta_{0}}\right).
    \end{equation}
    Assume that $1\leq k < n-1$. For $r=1,\dots,\min\{k,m-2\}$ and $0\leq p_1<\dots<p_r < k$, we find 
    \begin{equation*}
        \frac{\beta_{2^{p_1}+\dots +2^{p_{r}}+2^{{k}}}}{\beta_{2^{p_1}+\dots +2^{p_{r}}}}=\frac{\text{cos}\frac{\theta_{n-1}^{2^{p_1}+\dots+2^{p_{r}}+2^{k}}}{2}\dots\text{cos}\frac{\theta_{k+1}^{2^{p_1}+\dots+2^{p_{r}}+2^{k}}}{2} }{\text{cos}\frac{\theta_{n-1}^{2^{p_1}+\dots+2^{p_{r}}}}{2}\dots\text{cos}\frac{\theta_{k+1}^{2^{p_1}+\dots+2^{p_{r}}}}{2}}\text{tan} \frac{\theta_{k}^{2^{p_1}+\dots+2^{p_{r}}}}{2},
    \end{equation*}
    leading to
    \begin{equation}\label{thetaothers}
        \theta_{k}^{2^{p_1}+\dots+2^{p_{r-1}}}=2\text{arctan}\left( \frac{\text{cos}\frac{\theta_{n-1}^{2^{p_1}+\dots+2^{p_{r}}}}{2}\dots\text{cos}\frac{\theta_{k+1}^{2^{p_1}+\dots+2^{p_{r}}}}{2}}{\text{cos}\frac{\theta_{n-1}^{2^{p_1}+\dots+2^{p_{r}}+2^{k}}}{2}\dots\text{cos}\frac{\theta_{k+1}^{2^{p_1}+\dots+2^{p_{r}}+2^{k}}}{2} }\frac{\beta_{2^{p_1}+\dots +2^{p_{r}}+2^{k}}}{\beta_{2^{p_1}+\dots +2^{p_{r}}}} \right).
    \end{equation}
    Additionally, if $k \geq m-1$, we find 
    \begin{equation*}
        \frac{\beta_{2^{p_1}+\dots +2^{p_{m-1}}+2^{k}}}{\beta_{2^{p_1}+\dots +2^{p_{m-1}}}}= \frac{1}{\text{cos}\frac{\theta_{n-1}^{2^{p_1}+\dots+2^{p_{m-1}}}}{2}\dots\text{cos}\frac{\theta_{k+1}^{2^{p_1}+\dots+2^{p_{m-1}}}}{2}} \text{tan} \frac{\theta_{k}^{2^{p_1}+\dots+2^{p_{m-1}}}}{2},
    \end{equation*}
    where $0\leq p_1<\dots<p_{m-1} < k$, which gives 
    \begin{equation}\label{thetapm}
        \theta_{k}^{2^{p_1}+\dots+2^{p_{m-1}}}=2\text{arctan} \left( \text{cos}\frac{\theta_{n-1}^{2^{p_1}+\dots+2^{p_{m-1}}}}{2}\dots\text{cos}\frac{\theta_{k+1}^{2^{p_1}+\dots+2^{p_{m-1}}}}{2} \frac{\beta_{2^{p_1}+\dots +2^{p_{m-1}}+2^{k}}}{\beta_{2^{p_1}+\dots +2^{p_{m-1}}}} \right).
    \end{equation}
    From Eq. (\ref{eq:theta(n-1)(0)}) and (\ref{eq:theta(n-1)}) we can get the angles whose lower index are $n-1$. And then From Eq. (\ref{theta(p1)(0)}), (\ref{thetaothers}) and (\ref{thetapm}) we get all the angles we needed in the circuit.
    
    \begin{figure}[h!]
    \includegraphics[width=0.8\textwidth]{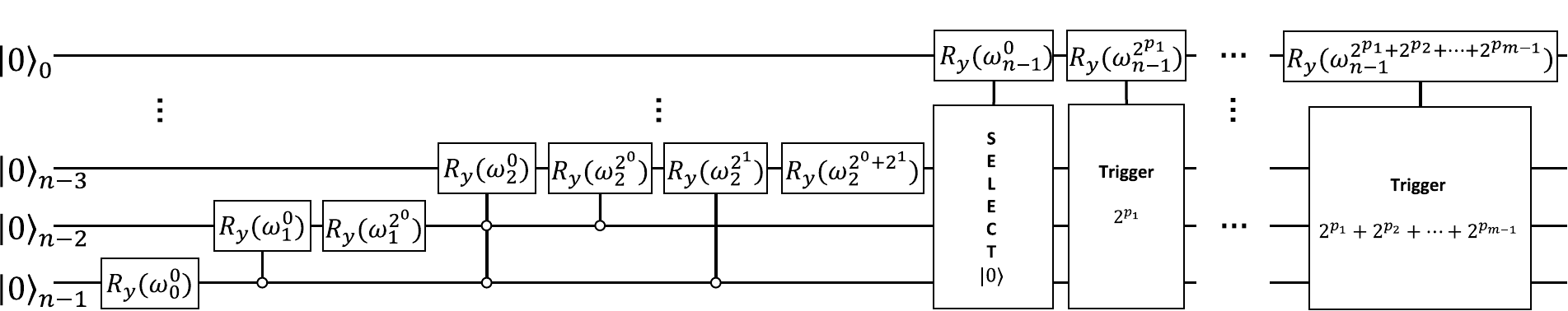}
    \caption{A more efficient design of the quantum circuit that implements $A_\psi$. Angles of $R_y$ gates are given by Eq.(\ref{eqn:omega1})-(\ref{eqn:omega4}). Here, the term "trigger $2^{p_1}+\dots+2^{p_r}$" refers to controlling only those qubits whose positions are represented by '0's in the binary representation of $2^{p_1}+\dots+2^{p_r}$}
    \label{fig:omega_circuit_app}
    \end{figure}

    Based on this circuit, we designed a more efficient circuit, see figure \ref{fig:omega_circuit_app}. It should only remove those "hollow dots" in the original circuit. Here we need to use a property of $R_y$ gate
    \begin{equation*}
    R_y(\varphi_1+\varphi_2)=R_y(\varphi_1)+R_y(\varphi_2).
    \end{equation*}
    Assuming $1\leq k < n$ and $1\leq r \leq \min\{k,m-1\}$ with $0\leq p_1<\dots<p_r<k$. When $r=\min\{k,m-1\}$ 
    \begin{equation}
    \theta_k^{2^{p_1}+\dots+2^{p_{\min\{k,m-1\}}}}=\omega_k^{2^{p_1}+\dots+2^{p_{\min\{k,m-1\}}}},
        \label{eq:ngamma1}
    \end{equation}
    and when $r< \min\{k,m-1\}$
    \begin{equation}\label{eq:ngamma2}
        \begin{aligned}
        \theta_k^{2^{p_1}+...+2^{p_r}}=\omega_k^{2^{p_1}+...+2^{p_r}}+\sum_{1\leq j \leq \min\{k,m-1\}-r}\sum\limits_{\substack{0\leq l_1<\dots<l_j < k,\\\{l_1,\dots,l_j\}\cap\{p_1,\dots,p_r\}=\emptyset}}
        \omega_k^{(2^{p_1}+...+2^{p_r})+(2^{l_1}+\dots+2^{l_j})},
        \end{aligned}
    \end{equation}
    and
    \begin{equation}\label{eq:ngamma3}
        \theta_0^0=\omega_0^0,
    \end{equation}
    \begin{equation}\label{eq:ngamma4}
        \theta_k^0=\sum\limits_{\substack{0\leq l \leq 2^k-1,\\ |l|_b\leq \min\{k,m-1\}}}
        \omega_k^l \quad (1\leq k< n).
    \end{equation}
    The equation (\ref{eq:ngamma4}) means $\theta_k^0$ is equal to the summation of all $\omega$s whose lower index are $k$. Here those $\theta$s are angles of corresponding $R_y$ gates in circuit \ref{fig:theta_circuit_app}.
    
    The $\theta$s are known from (\ref{eq:theta(n-1)(0)})-(\ref{thetapm}). The next step involves solving Equations (\ref{eq:ngamma1}) through (\ref{eq:ngamma3}) to extract $\omega$ values from the $\theta$ values. Specifically, from Equation (\ref{eq:ngamma1}), we have
    
    \begin{equation}
        \label{eqn:omega1}\omega_k^{2^{p_1}+\dots+2^{p_{\min\{k,m-1\}}}}=\theta_k^{2^{p_1}+\dots+2^{p_{\min\{k,m-1\}}}}\quad (1\leq k<n, 0\leq p_1<\dots<p_{\min\{k,m-1\}}<k).
    \end{equation} 
    Further, from Equation (\ref{eq:ngamma2}), we obtain
    \begin{equation}
       \omega_k^{2^{p_1}+...+2^{p_r}}=\theta_k^{2^{p_1}+...+2^{p_r}}-\sum_{1\leq j \leq \min\{k,m-1\}-r}\sum\limits_{\substack{0\leq l_1<\dots<l_j<k,\\\{l_1,\dots,l_j\}\cap\{p_1,\dots,p_r\}=\emptyset}}
       \omega_k^{(2^{p_1}+...+2^{p_r})+(2^{l_1}+\dots+2^{l_j})},
       \label{eqn:omega2}
    \end{equation}
    where $1\leq k < n$, $1\leq r \leq \min\{k,m-1\}$ and $0\leq p_1<\dots<p_r<k$.
    Additionally, from Equation (\ref{eq:ngamma3})
    \begin{equation}
        \omega_0^0=\theta_0^0.
        \label{eqn:omega3}
    \end{equation}
    And from Equation (\ref{eq:ngamma4}), we determine
    \begin{equation}
        \omega_k^0=\theta_k^0- \sum\limits_{\substack{1\leq l \leq 2^k-1,\\|l|_b\leq \min\{k,m-1\}}}
        \omega_k^l \quad (1\leq k< n).
        \label{eqn:omega4}
    \end{equation}
    Thus, we have successfully calculated all necessary $\omega$ values. In summary, we initially designed Circuit \ref{fig:theta_circuit_app} and subsequently simplified it to derive the more efficient Circuit \ref{fig:omega_circuit_app}. The rotation angles for the $R_y$ gates in Circuit \ref{fig:omega_circuit_app} are determined by Equations (\ref{eqn:omega1}) through (\ref{eqn:omega4}).

\end{document}